	\newcommand{\GG}[1]{}
	\newtheorem{theorem}{Theorem}
	\newtheorem{lemma}{Lemma} 
	\newtheorem{proposition}{Proposition} 
	\newtheorem{definition}{Definition}
	\newtheorem{assumption}{Assumption}
	\newtheorem{example}{Example}
	\newtheorem*{example*}{Example}
	\newcommand\independent{\protect\mathpalette{\protect\independenT}{\perp}}
    \def\independenT#1#2{\mathrel{\rlap{$#1#2$}\mkern2mu{#1#2}}}
    \DeclareMathOperator*{\argmin}{\arg\!\min}
    \DeclareMathOperator*{\argmax}{\arg\!\max}
    \DeclareMathOperator{\Wtilde}{\widetilde{W}}
    \DeclareMathOperator{\Xtilde}{\widetilde{X}}
    \DeclareMathOperator{\wtilde}{\widetilde{w}}
    \DeclareMathOperator{\xtilde}{\widetilde{x}}
    \newtheorem*{assumptions*}{\assumptionnumber}
\providecommand{\assumptionnumber}{}
\newenvironment{assumptions}[2]
 {%
  \renewcommand{\assumptionnumber}{Assumption #1#2}%
  \begin{assumptions*}%
  \protected@edef\@currentlabel{#1#2}%
 }
 {%
  \end{assumptions*}
 }
\title{Point-identification in multivariate nonseparable triangular models}
\author{Florian F Gunsilius\thanks{This is part of a revised and extended version of my third year paper at Brown University. I thank Susanne Schennach, Ken Chay, Toru Kitagawa, Arthur Lewbel, Adam McCloskey, Andriy Norets, Jesse Shapiro, Simon Freyaldenhoven, and Kevin Proulx as well as the audience at the 2018 Annual Meeting of the Royal Economic Society for very helpful comments which improved the structure of this article. All errors are mine.}}
\affil{Brown University}
\date{\today}
\begin{document}
\maketitle
\begin{abstract}
In this article we introduce a general nonparametric point-identification result for nonseparable triangular models with a multivariate first- and second stage. Based on this we prove point-identification of Hedonic models with multivariate heterogeneity and endogenous observable characteristics, extending and complementing identification results from the literature which all require exogeneity. As an additional application of our theoretical result, we show that the BLP model \citep{berry1995automobile} can also be identified without index restrictions.
\end{abstract}
\section{Introduction}
Over the last two decades several approaches towards identification of nonseparable triangular models of the form
\begin{align*}
Y&=m(X,\varepsilon)\\
X&=h(Z,U)
\end{align*}
have been developed, where $Y$ is the outcome, $X$ is an endogenous regressor, $U$ and $\varepsilon$ are latent error terms, and $m$ and $h$ are unknown production functions. Some results focus on point-identification of the second stage (\citeauthor{d2015identification} \citeyear{d2015identification}, \citeauthor{torgovitsky2015identification} \citeyear{torgovitsky2015identification}) while others identify average or marginal effects (\citeauthor{blundell2003endogeneity} \citeyear{blundell2003endogeneity}, \citeauthor{chesher2003identification} \citeyear{chesher2003identification}, \citeauthor{imbens2009identification} \citeyear{imbens2009identification}, \citeauthor{schennach2012local} \citeyear{schennach2012local}, \citeauthor{matzkin2016independence} \citeyear{matzkin2016independence}). 

All of the results aiming for point-identification of the second stage require a univariate and strictly increasing first- or second stage (in particular \citeauthor{imbens2009identification} \citeyear{imbens2009identification}, \citeauthor{d2015identification} \citeyear{d2015identification}, \citeauthor{torgovitsky2015identification} \citeyear{torgovitsky2015identification}), which limits their practical applicability in settings with general multivariate heterogeneity like the Hedonic model or the BLP model (\citeauthor{matzkin2007heterogeneous} \citeyear{matzkin2007heterogeneous} and \citeauthor{berry2014identification} \citeyear{berry2014identification}, p.~1754). A generalization of identification results to a multivariate setting without strong artificial functional form assumptions is hence important, in particular for bridging the gap between economic and econometric theory. 

We therefore provide a new framework for identification in nonseparable (triangular) models in this article, generalizing the seminal result in \citet{matzkin2003nonparametric}. This allows us to prove point-identification in nonseparable triangular models while allowing for both first- and second stage to be multivariate, generalizing the point-identification results in \citet{torgovitsky2015identification} and \citet{d2015identification}.
As the main application of our theoretical result we provide assumptions for the identification of multi-market Hedonic models with endogenous characteristics, complementing and building on the existing results in \citet{ekeland2004identification}, \citet{heckman2010nonparametric}, and \citet{chernozhukov2014single} who all consider Hedonic models with \emph{exogenous} characteristics. In particular, we provide an answer to the open question in the latter article, asking under which conditions one can nonparametrically identify Hedonic models when observable characteristics are endogenous. In a second application, we also show that the BLP model \citep{berry1995automobile} can be nonparametrically identified without the need to assume that individual heterogeneity can be captured by an index, complementing the seminal result from \citet{berry2014identification}.

The article is structured as follows. In section \ref{litreview} we give a brief overview of the current state of the literature and contrast our approach to other existing approaches. Section \ref{mainidentsection} introduces the main theoretical framework and the general identification result for nonseparable triangular models: we introduce the theoretical framework in section \ref{frameworksubsection}, the main assumptions in section \ref{assumptionssubsection}, and the main result in section \ref{mainresultsubsection}. Section \ref{applications} contains the two applications of the main result: point-identification of the BLP model without index restrictions as well as the main application concerning the point-identification of Hedonic models with endogenous characteristics and multivariate heterogeneity. Section \ref{conclusion} concludes. The appendix contains all proofs for the results in the main paper.

\section{The literature on (point-) identification in nonseparable models}\label{litreview}
In this section we link our result to the literature on nonseparable models in general and nonseparable triangular models in particular while giving an intuitive overview of the standard assumptions in the literature. 

Nonseparable triangular models are an extension of nonseparable models of the form $Y=m(X,\varepsilon)$ with exogenous $X$. The literature on identification in these models is large, with several authors seeking to point-identify the production function $m$  (\citeauthor{matzkin2003nonparametric} \citeyear{matzkin2003nonparametric}, \citeauthor{matzkin2007nonparametric} \citeyear{matzkin2007nonparametric}, \citeauthor{imbens2007nonadditive} \citeyear{imbens2007nonadditive} and references therein) while others predominantly aim for identification of average or marginal effects (\citeauthor{heckman2005structural} \citeyear{heckman2005structural}, \citeauthor{altonji2005cross} \citeyear{altonji2005cross}, \citeauthor{hoderlein2007identification} \citeyear{hoderlein2007identification}, \citeauthor{chernozhukov2007instrumental} \citeyear{chernozhukov2007instrumental}, \citeauthor{hoderlein2009identification} \citeyear{hoderlein2009identification}).
The literature on identification of such models has been growing ever since and has been fruitfully applied to and extended in different scenarios like single-market Hedonic models (\citeauthor{ekeland2004identification} \citeyear{ekeland2004identification}, \citeauthor{heckman2010nonparametric} \citeyear{heckman2010nonparametric}, \citeauthor{chernozhukov2014single} \citeyear{chernozhukov2014single}) or nonlinear Difference-in-Difference models (\citeauthor{athey2006identification} \citeyear{athey2006identification}, \citeauthor{d2013nonlinear} \citeyear{d2013nonlinear}). These results in turn also give rise to many identification results for simultaneous equation models (\citeauthor{matzkin2008identification} \citeyear{matzkin2008identification}, \citeauthor{berry2013identification} \citeyear{berry2013identification}, \citeauthor{blundell2014control} \citeyear{blundell2014control}, \citeauthor{matzkin2015estimation} \citeyear{matzkin2015estimation} and references therein).  

All of these results in one way or another require injectivity of the production function $m$; the standard approach in the majority of approaches is the assumption of strictly increasing and continuous $m$ which in turn also requires that $Y$ and $\varepsilon$ are univariate. 
In this article we argue that while monotonicity is often a reasonable assumption to make, there is a more general assumption which allows for point-identification of the production function in more general settings: measure-preservation of $m$. In fact, we show that the combination of a unique nonparametric structure in combination with measure preservation of $m$ leads to identification results which are the natural extensions of the of the identification results which rely on strictly increasing and continuous production functions. 

Allowing for endogenous $X$ is an important extension of the model, especially for practical purposes as many models of interest take this form. For instance, the BLP model \citep{berry1995automobile} and Hedonic models with multivariate heterogeneity and endogenous characteristics, which are the two applications of our main result; a multivariate point-identification result is therefore important as the point-identification results in \citet{torgovitsky2015identification} and \citet{d2015identification} cannot be applied in these settings as they require a univariate first- and second stage. Our identification result has other possible applications, for example screening models with multidimensional consumer heterogeneity as in \citet{aryal2017identifying}; the latter introduces an identification result for screening models based on \citet{d2015identification} and requires strong high level assumptions on the observable distributions as well as strong functional form assumptions on the second stage. In particular, \citet{aryal2017identifying} assumes the existence of a fixed point for identification, whereas we provide low-level and testable assumptions for a \emph{fixed set} to exist. 

Our general identification result covers four important general aspects which have not or only partially been dealt with in the literature. First, it is the only result for multivariate nonseparable triangular models, and can be applied to a wide variety of settings. Second, we can allow for the most general functional forms on the production functions $m$ and $h$, without being forced to require monotonicity as \citet{torgovitsky2015identification}, \citet{d2015identification}, or \citet{aryal2017identifying}. Third, we prove a new mathematical result which leads to easy-to-check sufficient conditions for identification, generalizing the existing sequencing arguments in \citet{torgovitsky2015identification} and \citet{d2015identification}. Fourth, in our result we can allow for a continuous or discrete (or even binary) instrument $Z$, which can be of lower dimension than $X$, analogous to existing approaches; in our setting, we can allow for lower-dimensional $Z$ even when $m$ and $h$ are truly multivariate functions, and not element-wise monotonic. 

\paragraph{Notation}
The standard measurable space is defined by $(\mathbb{R}^d,\mathscr{B}_{\mathbb{R}^d})$ where $\mathscr{B}_{\mathbb{R}^d}$ is the Borel $\sigma$-algebra. For a random variable $X:\Omega\to \mathbb{R}^d$ in some measure space $(\Omega,\mathscr{A},P)$, the measure $P_X$ is the pushforward measure of $P$ via $X$, i.e.~$P_X(E) = P(X^{-1}(E))$ for every Borel set $E\in\mathscr{B}_{\mathbb{R}^d}$. The corresponding distribution function $F_X:\mathbb{R}^d\to[0,1]$ is defined by $F_X(x) = P_X(X\leq x)$. The support of $X$ is $\mathcal{X}$. Conditional distributions are defined by $F_{X|Z=z,W=w}$ with support $\mathcal{X}_{zw}$.
The standard partial order on $\mathbb{R}^d$ induced by the positive cone for $x,x'\in\mathbb{R}^d$ is defined as $x\leq x'$ if and only if $x_i\leq x_i'$ for $i=1,\ldots,d$. Based on this the distribution function $F_X$ is strictly increasing if $F_{X|Z=z_i}(x)< F_{X|Z=z_i}(x')$ whenever $x<x'$ in the standard partial order. Throughout, whenever we require the functions $X = h(Z,U)$ and $Y=m(X,\varepsilon)$ to be invertible, we always mean invertibility with respect to the \emph{second} argument, i.e.~between $X$ and $U$ as well as $Y$ and $\varepsilon$, never with respect to $Z$.

\section{Theoretical section: point-identification of multivariate nonseparable triangular models}\label{mainidentsection}\label{torgovitskysubsection}
This is the main theoretical section where we introduce the general framework for the identification of multivariate nonseparable models as well as the main theorem, which generalizes the seminal results of \citet{torgovitsky2015identification} and \citet{d2015identification} to multivariate nonseparable triangular models. 

\subsection{The general framework for point-identification}\label{frameworksubsection}
Let us start this section with a general outline of the underlying idea for the identification result.
We work within the following model throughout:
\begin{align}\label{triangular}
\begin{split}
Y &= m(X,\varepsilon) \\
X &= h(Z,U),
\end{split}
\end{align}
where the covariate $X$ is endogenous, i.e.~depends on the unobservable error term $\varepsilon$. $U$ is the unobservable and independent error term in the first stage. $m$ and $h$ are unobservable. The variable $Z$ is an instrument in this model. Throughout, we have to assume that $\varepsilon$ is of the same dimension as $Y$ and $U$ is of the same dimension as $X$. The intuitive reason for this is that we need the production functions $m$ and $h$ to be invertible in $\varepsilon$ and $U$ in order to derive at our point-identification result.\footnote{This is the main difference to other attempts in the literature like \citet{kasy2014instrumental}, who sought to identify the nonseparable triangular model whilst allowing for a possibly infinite-dimensional unobservable error term of the first stage. We do need to make a dimensionality restriction for our result to hold.}
\begin{assumption}[Dimensions]\label{dimensionsass}
The supports of $Y, \varepsilon$, $X$, $U$, and $Z$ satisfy $\text{dim}(\mathcal{Y}) = \text{dim}(\mathcal{E}) = d$, $\text{dim}(\mathcal{X}) = \text{dim}(\mathcal{U}) = k$, and $\mathcal{Z}\subseteq\mathbb{R}^m$ for finite integers $d,k,m$.
\end{assumption}
We allow for the whole or part of the vector $X\in\mathbb{R}^k$ to be endogenous. In the case where only a part of $X$ is endogenous, it is customary to write the second stage of \eqref{triangular} as $Y=m(X,W,\varepsilon)$, where $W$ are the exogenous covariates, and condition all results on $W$. In this case the dimension of $U$ has to be reduced to match the dimension of the endogenous variables. For the sake of conciseness we consider all elements of $X$ endogenous and suppress $W$ throughout. 

The idea to allow for multivariate first- and second stages is to find a sufficiently general functional requirement for $m$ and $h$, which seminal results like \citet{imbens2009identification}, \citet{torgovitsky2015identification}, and \citet{d2015identification} require to be strictly increasing and continuous. The issue is that there is no complete order in higher dimensions, so that those classical identification results based on \citet{matzkin2003nonparametric} are not applicable in this setting. We therefore argue that the appropriate generalization of a strictly increasing and continuous function in these settings is a \emph{measure preserving isomorphism}.
\begin{definition}
A map $T:\mathcal{E}\to\mathcal{Y}$ transporting a probability measure $P_\varepsilon$ onto another probability measure $P_Y$ is \emph{measure preserving} if it is measurable\footnote{Measurability of $T$ means that $\mathscr{B}_{\mathbb{R}^d}=T^{-1}\mathscr{B}_{\mathbb{R}^d}$. $\mathcal{E}$ denotes the support of $\varepsilon$.} and 
\begin{equation}
P_Y(E)=P_\varepsilon(T^{-1}(E))
\end{equation}
for every set $E$ in the Borel $\sigma$-algebra $\mathscr{B}_{\mathbb{R}^d}$ corresponding to $Y$.\footnote{$T^{-1}(E)$ denotes the set of points $e\in \mathcal{E}$ such that $T(e)\in E$.} 
If $T$ is invertible and its inverse is also measure preserving, it is called a measure-preserving isomorphism.
\end{definition}

A way to check whether a transformation is measure preserving is by checking this property for all half-open rectangles of the form $(a,b]$, $a,b\in\mathbb{R}^d$.\footnote{Half-open rectangles in $\mathbb{R}^d$ are the $d$-fold cartesian product of half-open intervals, i.e.~$(a,b]\coloneqq \bigtimes_{i=1}^d (a_i,b_i]$, $a_i,b_i\in\mathbb{R}$ and $a = (a_1,\ldots,a_d)'$, $b=(b_1,\ldots,b_d)'$.}
\begin{proposition}\label{cdfmeasurelemma}
$T:\mathcal{E}\to\mathcal{Y}$ with $y=T(\varepsilon)$ transporting $P_\varepsilon$ onto $P_Y$ is \emph{measure preserving} if and only if 
\begin{equation}\label{cdfmeasureeq}
P_Y((a,b]) = P_\varepsilon(T^{-1}((a,b])).
\end{equation}
\end{proposition}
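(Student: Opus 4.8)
The forward implication is immediate: if $T$ is measure preserving in the sense of the definition, then the defining identity $P_Y(E) = P_\varepsilon(T^{-1}(E))$ holds for \emph{every} $E \in \mathscr{B}_{\mathbb{R}^d}$, and half-open rectangles $(a,b]$ are in particular Borel sets. The entire content of the proposition therefore lies in the converse: one must upgrade the equality \eqref{cdfmeasureeq}, assumed only on rectangles, to an equality on the full Borel $\sigma$-algebra.

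The plan is to recognize both sides of \eqref{cdfmeasureeq} as set functions and invoke a uniqueness-of-measures argument. I would define $\nu(E) \coloneqq P_\varepsilon(T^{-1}(E))$ for $E \in \mathscr{B}_{\mathbb{R}^d}$; since $T$ is measurable, $T^{-1}(E)$ is itself a Borel set, and $\nu$ is the pushforward of $P_\varepsilon$ under $T$, hence a genuine probability measure. Likewise $\mu \coloneqq P_Y$ is a probability measure on $(\mathbb{R}^d, \mathscr{B}_{\mathbb{R}^d})$. Hypothesis \eqref{cdfmeasureeq} states precisely that $\mu$ and $\nu$ agree on the class $\mathcal{R}$ of half-open rectangles. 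I would then record two standard facts: (i) $\mathcal{R}$ is a $\pi$-system, because the intersection of two half-open rectangles is again a (possibly empty) half-open rectangle, being the coordinatewise product of intersected half-open intervals; and (ii) $\mathcal{R}$ generates $\mathscr{B}_{\mathbb{R}^d}$. By Dynkin's $\pi$-$\lambda$ theorem in its uniqueness-of-measures form, two finite measures that agree on a generating $\pi$-system and agree on an increasing sequence of rectangles exhausting the space — for instance $(-n,n]^d \uparrow \mathbb{R}^d$, with common value $1$ — must coincide on the generated $\sigma$-algebra. Hence $\mu = \nu$ on all of $\mathscr{B}_{\mathbb{R}^d}$, which is exactly the measure-preservation identity.

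The only genuine subtlety — and where I would be careful — concerns measurability, which is what makes $\nu$ well-defined as a measure in the first place: the pushforward requires $T^{-1}(E)$ to be Borel for every Borel $E$. Conveniently, this condition is itself characterized by the rectangles, since $\mathcal{R}$ generates $\mathscr{B}_{\mathbb{R}^d}$: the map $T$ is Borel measurable as soon as $T^{-1}((a,b]) \in \mathscr{B}_{\mathbb{R}^d}$ for every rectangle, so the very objects appearing on the right-hand side of \eqref{cdfmeasureeq} already encode measurability. Beyond this observation the argument is routine; the essential mechanism is the $\pi$-$\lambda$ theorem, whose role here is to propagate an identity verified on a convenient generating class to the whole $\sigma$-algebra, in direct analogy with the one-dimensional reduction of a distributional equality to an equality of distribution functions.
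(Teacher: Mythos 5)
Your proof is correct and follows essentially the same route as the paper's: both rest on the fact that the half-open rectangles form an intersection-stable generating class of $\mathscr{B}_{\mathbb{R}^d}$ (the paper calls it a semi-ring and cites Theorem~A.8 of Einsiedler--Ward, you call it a $\pi$-system and invoke Dynkin's $\pi$-$\lambda$ theorem; the two uniqueness-of-measures tools are interchangeable here), so that agreement of $P_Y$ and the pushforward $P_\varepsilon \circ T^{-1}$ on rectangles propagates to the whole Borel $\sigma$-algebra. Your added observation that Borel measurability of $T$ is itself already encoded by the rectangle preimages is a point the paper leaves implicit, and is a welcome clarification rather than a deviation.
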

\begin{proof}
This immediately follows from Theorem A.8 in \citet{einsiedler2013ergodic} and the fact that all rectangles of the form $(a,b]$ form a semi-ring in $\mathscr{B}_Y$ and $\mathscr{B}_X$.
\end{proof}

The set of all measure preserving isomorphisms between two distribution functions is large. In particular, notice that a strictly increasing and continuous $T$ in the univariate case must be measure preserving since 
\begin{equation}\label{monotonemeasurepreserving}
P_Y((-\infty,y])\equiv P(Y\leq y) = P(T(\varepsilon)\leq y) = P(\varepsilon\leq T^{-1}(y)) \equiv P_\varepsilon(T^{-1}((-\infty, y]),
\end{equation}
where the third equality follows from the fact that $T$ is continuous and strictly increasing. Strictly increasing and continuous functions are special measure preserving maps because they map every interval of the form $(-\infty,y_1]$ to an interval \emph{of the same form} $T^{-1}((-\infty,y_1])\equiv(-\infty,\varepsilon_1]$ such that both intervals have the same probability (Figure \ref{currentusedplot}). This requirement makes strictly increasing and continuous functions \emph{unique} in the class of measure preserving isomorphisms\footnote{A strictly increasing and continuous function is invertible.} between two \emph{fixed} distributions $F_X$ and $F_Y$, a property which has first been exploited in \citet{matzkin2003nonparametric}. On the other hand, measure preservation only requires that an interval $(-\infty,y_1]$ gets mapped to \emph{some} combination of intervals, a much weaker restriction (Figure \ref{genmeasurepreservplot}).\footnote{Completely formally, the image of a measure preserving set need not even be an interval, but could be more general, like a Cantor set for example.} 

\begin{figure}[h]
\centering
\begin{tikzpicture}
\draw[->, thick] (0,0) to (0,3);
\draw[->,thick] (0,0) to (3,0);
\draw[-,dashed] (0,2.7) to (3,2.7);
\draw[-,dashed] (6,2.7) to (9,2.7);
\draw[->, thick] (6,0) to (9,0);
\draw[->,thick] (6,0) to (6,3);
\draw[-,thick] (0,0) to [out=10,in=200] (2.7,2.7);
\draw[thick] (6,0) to [out=80,in=180] (8.7,2.7);
\draw[-,thick] (0,2.7) to (-0.1,2.7);
\draw[-,thick] (6,2.7) to (5.9,2.7);
\draw node[left] at (0,2.7) {$1$};
\draw node[left] at (6,2.7) {$1$};
\draw node[right] at (3,0) {$\varepsilon$};
\draw node[right] at (9,0) {$Y$};

\begin{scope}
     \clip (0,0) rectangle (1.96,2.5);
     \fill[gray!50] (0,0) to [out=10,in=200] (2.7,2.7) -- (2.7,2.7) -- (2.7,0) -- (0,0);
\end{scope}
\begin{scope}
\clip (0,0) rectangle (1,2.5);
     \fill[pattern=north east lines] (0,0) to [out=10,in=200] (2.7,2.7) -- (2.7,2.7) -- (2.7,0) -- (0,0);
\end{scope}
\draw[-,thick] (1,0) to (1,0.66);
\draw[-,thick] (1.96,0) to (1.96,2.2);
\begin{scope}
\clip (0,0) rectangle (7.2,2.7);
\fill[gray!50] (6,0) to [out=80,in=180] (8.7,2.7) -- (8.7,2.7) -- (8.7,0) -- (6,0);
\end{scope}
\begin{scope}
\clip (0,0) rectangle (6.16,0.66);
\fill[pattern=north east lines] (6,0) to [out=80,in=180] (8.7,2.7) -- (8.7,2.7) -- (8.7,0) -- (6,0);
\end{scope}
\draw[-,thick] (6.16,0) to (6.16,0.66);
\draw[-,thick] (7.2,0) to (7.2,2.2);
\draw[-,thick] (1.96,0) to (1.96,-0.1);
\draw[-,thick] (0,2.2) to (-0.1,2.2);
\draw[-,thick] (6.16,0) to (6.16,-0.1);
\draw[-,thick] (7.2,0) to (7.2,-0.1);
\draw[-,thick] (6,2.2) to (5.9,2.2);
\draw node[below] at (6.175,-0.2) {$a$};
\draw node[below] at (7.2,-0.1) {$b$};
\draw node[below] at (2.1,-0.1) {\footnotesize{$T^{-1}(b)$}};
\draw node[below] at (0.8,-0.1) {\footnotesize{$T^{-1}(a)$}};
\draw node[left] at (6,2.2) {\footnotesize{$F_Y(b)$}};
\draw node[left] at (0,2.2) {\footnotesize{$F_\varepsilon(T^{-1}(b))$}};
\draw node[left] at (6,0.66) {\footnotesize{$F_Y(a)$}};
\draw node[left] at (0,0.66) {\footnotesize{$F_\varepsilon(T^{-1}(a))$}};
\draw[-,thick] (1,0) to (1,-0.1);
\draw[-,dashed] (0,0.66) to (1,0.66);
\draw[-,dashed] (6,0.66) to (6.16,0.66);
\draw[-,dashed] (0,2.2) to (1.96,2.2);
\draw[-,dashed] (6,2.2) to (7.2,2.2);
\end{tikzpicture}
\caption{Map currently used for identification}\label{currentusedplot}
\end{figure}
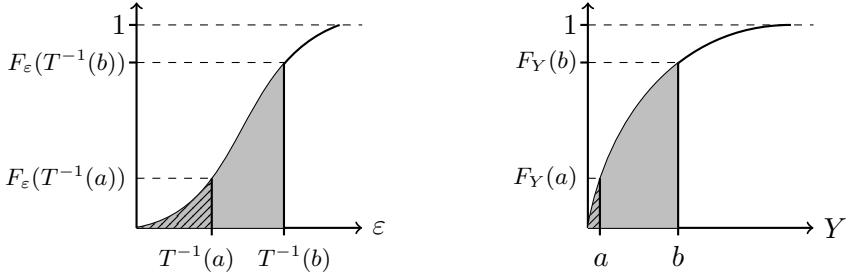

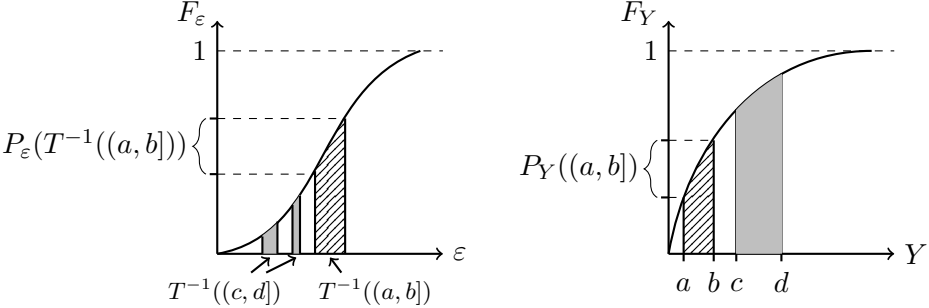
\begin{figure}[h]
\centering
\begin{tikzpicture}
\draw[->, thick] (0,0) to (0,3.1);
\draw[->,thick] (0,0) to (3,0);
\draw[->, thick] (6,0) to (9,0);
\draw[->,thick] (6,0) to (6,3.1);
\draw[-,thick] (0,0) to [out=10,in=200] (2.7,2.7);
\draw[thick] (6,0) to [out=80,in=180] (8.7,2.7);
\draw[-,dashed] (0,2.7) to (3,2.7);
\draw[-,dashed] (6,2.7) to (9,2.7);
\draw[-,dashed] (0,1.06) to (1.3,1.06);
\draw[-,dashed] (0,1.8) to (1.7,1.8);
\draw node[right] at (3,0) {$\varepsilon$};
\draw node[right] at (9,0) {$Y$};
\draw node[left] at (0,3.2) {$F_\varepsilon$};
\draw node[left] at (6,3.2) {$F_Y$};
\draw node[left] at (0,2.7) {$1$};
\draw node[left] at (6,2.7) {$1$};
\draw [decorate,decoration={brace,amplitude=5pt},rotate=0] (-0.15,1.06) -- (-0.15,1.8);
\draw node[left] at (-0.25,1.45) {$P_\varepsilon(T^{-1}((a,b]))$};
\begin{scope}
     \clip (1,0) rectangle (1.1,2) -- (0.6,0) rectangle (0.8, 0.5);
     \fill[gray!50] (0,0) to [out=10,in=200] (2.7,2.7) -- (2.7,2.7) -- (2.7,0) -- (0,0);
\end{scope}
\begin{scope}
    \clip (1.3,0) rectangle (1.7,2.7);
    \fill[pattern=north east lines] (0,0) to [out=10,in=200] (2.7,2.7) -- (2.7,2.7) -- (2.7,0) -- (0,0);
\end{scope}
\draw[-,thick] (-0.1,1.06) to (0,1.06);
\draw[-,thick] (-0.1,1.8) to (0,1.8);
\draw[-,thick] (1.3,0) to (1.3,1.1);
\draw[-,thick] (1.7,0) to (1.7,1.8);
\draw[-,thick] (0.6,0) to (0.6,0.23);
\draw[-,thick] (0.8,0) to (0.8,0.43);
\draw[-,thick] (1,0) to (1,0.65);
\draw[-,thick] (1.1,0) to (1.1,0.77);
\draw[-,thick] (6.9,0) to (6.9,1.925);
\draw[-,thick] (7.5,0) to (7.5,2.4);
\begin{scope}
\clip (6.9,0) rectangle (7.5,2.7);
\fill[gray!50] (6,0) to [out=80,in=180] (8.7,2.7) -- (8.7,2.7) -- (8.7,0) -- (6,0);
\end{scope}
\begin{scope}
\clip (6.2,0) rectangle (6.6,2.8);
\fill[pattern=north east lines] (6,0) to [out=80,in=180] (8.7,2.7) -- (8.7,2.7) -- (8.7,0) -- (6,0);
\end{scope}
\draw[-,thick] (6.2,0) to (6.2,0.75);
\draw[-,thick] (5.9,1.51) to (6,1.51);
\draw[-,dashed] (6,0.75) to (6.2,0.75);
\draw[-,dashed] (6,1.51) to (6.6,1.51);
\draw[-,thick] (5.9,0.75) to (6,0.75);
\draw[-,thick] (6.6,0) to (6.6,1.51);
\draw[-,thick] (6.2,0) to (6.2,-0.1);
\draw[-,thick] (6.6,0) to (6.6,-0.1);
\draw node[below] at (6.2,-0.2) {$a$};
\draw node[below] at (6.6,-0.1) {$b$};
\draw[-,thick] (6.9,0) to (6.9,-0.1);
\draw[-,thick] (7.5,0) to (7.5,-0.1);
\draw node[below] at (6.9,-0.2) {$c$};
\draw node[below] at (7.5,-0.1) {$d$};
\draw node[below] at (0.1,-0.2) {\footnotesize{$T^{-1}((c,d])$}};
\draw[->,thick] (0.45,-0.25) to (0.7,-0.05);
\draw[->,thick] (0.65,-0.25) to (1.05,-0.05);
\draw node[below] at (2.1,-0.2) {\footnotesize{$T^{-1}((a,b])$}};
\draw[->,thick] (1.65,-0.25) to (1.5,-0.05);
\draw [decorate,decoration={brace,amplitude=5pt},rotate=0] (5.85,0.75) -- (5.85,1.51);
\draw node[left] at (5.75,1.15) {$P_Y((a,b])$};
\end{tikzpicture}
\caption{General measure preservation}\label{genmeasurepreservplot}
\end{figure}

To make these concepts more intuitive consider $\mathcal{E}$ as well as $\mathcal{Y}$ to be a continuum of individuals, respectively. The probability measure $P_\varepsilon$ then gives the ``size'' of each (Borel-) subset $E\subset\mathcal{E}$ of individuals, and analogous for $P_Y$. The classical idea for identification using monotonicity is that---in one dimension---requiring the production function $f:\mathcal{E}\to\mathcal{Y}$ to be strictly increasing and continuous means that $f$ \emph{preserves the ordering} of individuals when mapping from $\varepsilon$ to $Y$. This immediately implies that the map $f$ preserves measure, as a group of people $E\subset\mathcal{E}$ with size $P_\varepsilon(E)$ gets mapped to a group of people $f(E)\subset\mathcal{Y}$ of size $P_Y(f(E))=P_\varepsilon(E)$, simply by the fact that the order of individuals needs to be preserved, which follows from \eqref{monotonemeasurepreserving}. Since a strictly increasing and continuous $f$ is invertible, this is our characterization of measure preservation. 
As will become clear, the main requirement for general identification is measure preservation; if $m$ is to be point-identified, we in addition need to require $m$ to be \emph{unique}. Uniqueness comes from functional form restrictions like (multivariate generalizations of) monotonicity, but monotonicity is simply a sufficient condition for measure preservation and uniqueness, which in turn are a sufficient condition for point-identification of $m$. 

On the outset it may seem like a tautology that we require uniqueness of the production function in order to obtain point-identification. Note, however, that there is a distinction between uniqueness and statistical identification, and that the latter does not follow from the former in general. A production function which is theoretically unique need not be identifiable; in fact our identification result below provides the machinery to go from uniqueness to identification in the setting of nonseparable triangular models. In particular, we briefly show below that a linear first-stage relationship of the form $X = \beta Z+U$ with $\mathcal{U}=\mathbb{R}^k$ cannot be identified by our methods for $\beta\neq 0$, which is perfectly analogous to a comment in \citet{torgovitsky2015identification}.

In this respect note that monotonicity and continuity are also structural assumptions on $m$ which ensure its uniqueness and measure-preservation. It is in this sense that our framework generalizes the framework in \citet{matzkin2003nonparametric} as we allow for more general nonparametric function classes than just (univariate) strict monotonicity and continuity. 
So what we require is simply \emph{some} nonparametric structural assumption which guarantees that the function $m$ is unique under this. There are plenty of these structural assumptions. In fact, one general class of functions comes from the theory of optimal transport and can be used for the identification of Hedonic models with multivariate heterogeneity and endogenous characteristics, our main application. We call those production functions \emph{determinable}. 

\begin{definition}\label{uniquelyidentifiable}
A measurable production function $m:\mathcal{E}\to \mathcal{Y}_x$ is \emph{determinable} if the class of functional form assumptions on $m$ intersected with the class of measure preserving isomorphisms between $P_\varepsilon$ and $P_{Y|X=x}$ contains a unique element for all $x\in\mathcal{X}$.
\end{definition}
Requiring $m(x,\varepsilon)$ to be strictly continuous and increasing in $\varepsilon$ is a functional form restriction which makes it determinable, if we in addition rule out the existence of strictly increasing and continuous transformations $g\circ m$ of $m$, because for given $F_\varepsilon$ $m$ and $g\circ m$ are observationally equivalent \citep[Lemma 1]{matzkin2003nonparametric}. Let us give some other examples of determinable measure-preserving isomorphisms. The following example is in the univariate case.
\begin{example}[A matching example on the real line with concave cost functions]\label{mccannexample}
\citet{mccann1999exact} introduces a matching model on the real line $\mathbb{R}$ between a distribution $P_\varepsilon$ of suppliers (e.g.~coal mines) of some product and a distribution $P_Y$ of demanders (e.g.~factories) of this product. The problem is this model consists of minimizing the transport cost between coal mines and factories. The cost of transportation is modeled as $c(y-e)$, $y\in\mathcal{Y}$, $e\in\mathcal{E}$ for strictly concave $c$. McCann argues that a concave cost function of the transport distance provides a reasonable model for applications in which shipping occurs along a single route, because the resulting shipping routes display economies of scale. He goes on to show that solving this optimal transport problem under the specified concave cost function has a unique measure preserving isomorphism as a solution $Y = m(\varepsilon)$, which is not monotone. Assuming that $m$ is the solution of the above optimal matching for concave costs is a functional form restriction which makes $m$ determinable. 
\end{example}
Since $m$ is not strictly increasing, it cannot be identified by current approaches in the literature.
As another example, note that multivariate monotonicity also leads to determinability in higher dimensions. 
\begin{example}[Multivariate monotonicity] For absolutely continuous probability measures $P_\varepsilon$ and $P_{Y|X=x}$ with finite second order moments there exists a unique measure preserving isomorphism $m(x,\cdot)$ transporting $P_\varepsilon$ onto $P_{Y|X=x}$ for all $x$, which takes the form of a gradient of a convex function by a famous result in \citet{brenier1991polar}, i.e.~$m(x,\cdot)\coloneqq \nabla\varphi_x(\cdot)$ for convex $\varphi_x$. \citet{mccann1995existence} generalized Brenier's theorem to show that this result holds even if the probability measures do not have finite second order moments. We have included this result in the appendix (Theorem \ref{mccann}) for the sake of completeness. Therefore, assuming that $m(x,\cdot)$ is the gradient of a convex function for all $x\in\mathcal{X}$ is a functional form restriction which makes $m$ determinable.
\end{example}

Gradients of convex functions are the most natural generalization of increasing and continuous functions. In particular, if $T=\nabla\varphi$ for some convex $\varphi$, then it is monotone in the following sense \citep[p.~53]{villani2003topics}:
\[\langle T(x)-T(z),x-z\rangle\geq 0,\] where $\langle\cdot,\cdot\rangle$ denotes the inner product on $\mathbb{R}^d$. Here it is easy to see that if $x>z$ in the partial ordering induced by the positive cone on $\mathbb{R}^d$, then this definition implies that $T(x)\geq T(z)$ or that $T(x)$ and $T(z)$ are not comparable. This monotonicity property has been exploited recently by \citet{carlier2016vector} who use it to generalize the concept of univariate quantiles. 

In general, functional form restrictions will most likely come from solving functional equations in economic theory, like a minimization problem for demand functions, for example. If these functional equations admit a unique invertible solution, then they induce a functional form restriction which make their solution determinable.

\begin{example}[Demand function]
Following \citet{matzkin2007heterogeneous}, we denote by $V(y,x,\varepsilon)$ the indirect utility function of a consumer over bundles of goods $y$, where $x$ are observable and $\varepsilon$ are unobservable characteristics. Then a demand function can be obtained by
\[d(p,I,x,\varepsilon)\coloneqq \argmin_{y}\{V(y,x,\varepsilon) : p\cdot y\leq I\},\] for $p$ the price vector and $I$ the initial endowment. The standard assumption is then that $d$ is the unique solution and invertible in $\varepsilon$ (e.g.~\citeauthor{berry2014identification} \citeyear{berry2014identification}, p.~1757), which makes $d$ determinable.
\end{example}

Note that in order to identify a determinable $m$ in practice, one needs to make a normalization assumption, usually on the unobservable distribution $F_\varepsilon$, to guarantee that there is only a unique set of $(m,F_\varepsilon)$ which can generate the distribution $F_{Y|X=x}$ for all $x\in\mathcal{X}$. This should be intuitively clear as one in principle needs to identify two things, the unobservable $F_\varepsilon$ as well as the corresponding production function $m$.

\subsection{Main assumptions for the theoretical main result}\label{assumptionssubsection}
We can now lay out the main assumptions for the theoretical identification result.
A convenient property of our approach is that one can use the assumptions from \citet{torgovitsky2015identification} for the first stage $X=h(Z,U)$, i.e.~requiring that $h$ can be written element-wise as $h(Z,U) = [h_1(Z,U_1),\ldots,h_k(Z,U_k)]'$ and require the univariate functions $h_i$ to be strictly increasing and continuous in each univariate $U_i$, which makes our approach a direct generalization. This, however, requires the strong assumption of a compact and rectangular support for $\mathcal{X}|Z=z$ for all $z\in\mathcal{Z}$ (\citeauthor{torgovitsky2015supplement} \citeyear{torgovitsky2015supplement} and \citeauthor{d2015identification} \citeyear{d2015identification}). 

We propose a general multivariate approach which allows for weaker assumptions on the support of $F_{X|Z=z}$ and $h$, but requires an additional normalization assumption on the first stage, which fixes the distribution of $U$. We assume that the distribution of $U$ is known.\footnote{An alternative approach would be the general control variable approach, which we introduce in \citet{gunsilius2017identification}.} With this assumption we can make nonparametric functional form assumptions on $h$, like requiring it to be the gradient of a convex function itself.\footnote{It will be important to work with the gradients of convex functions for our main identification result. In particular, the whole identification result rests on an apparently new insight into properties of Brenier's theorem about gradients of convex functions mapping between two multivariate probability distributions, which we prove in Lemma \ref{brenierismetricprojection} in the appendix.}

\begin{assumption}\label{mpiso}
The production function $h(z,u)$ in first stage of model \ref{triangular} takes the form of the gradient of a convex function between $U$ and $X$ for all $z\in\mathcal{Z}$. Moreover, $U$ has an absolutely continuous distribution $F_U$.
\end{assumption}
The following assumption is the normalization assumption on $U$.
\begin{assumption}\label{normalizationass}
There is a known $\bar{z}\in\mathcal{Z}$ such that $h(\bar{z},u)=u$ for all $u\in\mathcal{U}$.
\end{assumption}
Assumption \ref{normalizationass} is a standard assumption made in the literature and was first proposed in \citet{matzkin2003nonparametric}. It requires $h$ to be the identity mapping for a known $\bar{z}$ between $X$ and $U$ and hence fixes the distribution of $U$. This assumption has also been used in other areas, most notably the measurement error literature, where \citet{hu2008instrumental} require a known functional which fixes the unobservables in a certain rotation. Assumption \ref{normalizationass} allows us to relax the strong assumption of a rectangular and compact support on the observable $F_{X|Z=z}$:
\begin{assumption}\label{abscont}
The distributions $F_{X|Z=z_i}$ are absolutely continuous with convex support $\mathcal{X}_{z_i}\subseteq\mathbb{R}^k$ for all $z_i\in\mathcal{Z}$ and are strictly increasing.
\end{assumption} 
We allow for $Z$ to be a discrete and even binary instrument.\footnote{We have included the statement and the proof of the theorem for continuous $Z$ in the appendix, because it is not relevant for the identification of multi-market Hedonic models.} In the following we focus on the binary case, i.e.~$\mathcal{Z}=z,z'$, because in the proof of the main result we assume $h$ to be the identity for one $z$ and the gradient of a convex function for the other $z'$. Also, for the identification of Hedonic models, two markets are sufficient by definition, and adding more markets would not change the result in any way. 

\begin{assumption}\label{instrumentass}
$Z$ is a valid instrument for $X$, i.e.~(i) it generates exogenous variation in $X$ such that $F_{X|Z=z}(x)\neq F_{X|Z=z'}(x)$ for at least one $x\in\mathcal{X}$ and (ii) is independent of $\varepsilon$ and $U$, denoted by $(\varepsilon,U)\independent Z$.
\end{assumption}
Note that we do \emph{not} require the dimensionality of $Z$ to be at least the dimensionality of $X$. In fact, and this is analogous to the result in \citet{torgovitsky2015supplement}, $Z$ can be lower- and even one-dimensional for multivariate $X$ and the approach still works. This is important for the application to Hedonic models, because the idea is to consider each market to be the realization of a one-dimensional instrument. Intuitively, we can allow for lower-dimensional $Z$ because we make use of the nonlinear and nonparametric structure, i.e.~taking into account the information of all higher order and not just first-order moments; this will become apparent momentarily when we introduce Assumption \eqref{supportass}.

The ultimate goal for identification is to point-identify the (multivariate) second stage production function $m$.
\begin{assumption}\label{mpiso2}
$m(x,\varepsilon)$ is a determinable measure preserving isomorphism between $Y$ and $\varepsilon$ and is continuous in $X$ for all $x$. Moreover, $P_\varepsilon$ is known.\footnote{In the univariate case following \citet{matzkin2003nonparametric}, $P_\varepsilon$ is usually normalized to be the uniform distribution. Since we work with general and multivariate distributions, we only assume it is known. Alternatively, instead of assuming $P_\varepsilon$ is known, one could assume that $m(\bar{x},e)=e$ at some known $\bar{x}\in\mathcal{X}$ for all $e\in\mathcal{E}$, i.e.~assuming that it is the identity map for $\bar{x}$, just as for the first stage. Then $P_\varepsilon$ is also fixed, because by measure preservation of $m$ we have $P_{Y|X=\bar{x}}(E) = P_{\varepsilon}(m^{-1}(\bar{x},E)) = P_\varepsilon(E)$ for every Borel set $E$ in $\mathscr{B}_{\mathbb{R}^d}$.}
\end{assumption}
It turns out, however, that assuming continuity of $m(x,\varepsilon)$ in $x$ is a rather strong assumption. In particular, we will not be able to guarantee this in our efforts to identify the multi-market Hedonic model in the next section. Therefore, we need to make a weaker assumption. This requires the definition of convergence in measure, see for instance \citet[Definition 2.2.2]{bogachev2007measure1}.
\begin{definition}
Suppose we are given a measure space $(\mathcal{X},\mathscr{A}_X)$ with a probability measure $P$ and a sequence of $P$-measurable functions $\{f_n\}_{n\in\mathbb{N}}$. Then the sequence $\{f_n\}_{n\in\mathbb{N}}$ is said to converge in measure to a $P$-measurable function $f$ if for every $c>0$ one has
\[\lim_{n\to\infty} P(x:|f_n(x)-f(x)|\geq c)=0.\]
\end{definition}
Based on this, we only require $m(x,\varepsilon)$ to be continuous in $x$ in the following weaker sense.
\begin{definition}\label{continprop}
If for every sequence $\{x_n\}_{n\in\mathbb{N}}\in\mathcal{X}$ which converges to some $x\in\mathcal{X}$ the corresponding sequence $m(x_n,\varepsilon)$ converges in measure to $m(x,\varepsilon)$, we say that $m$ is continuous in measure.
\end{definition}
The weaker assumption we need to require for $m$ is hence that it is continuous in measure based on Definition \ref{continprop}.
\begin{assumptions}{6}{'}\label{mpiso2prime}
$m(x,\varepsilon)$ is a determinable measure preserving isomorphism between $Y$ and $\varepsilon$ and is continuous in measure in $x$. Moreover, $P_\varepsilon$ is known. 
\end{assumptions}

We also have to make a large support assumption on $\varepsilon$. This is the same assumption that \citet{torgovitsky2015identification} had to impose.
\begin{assumption}\label{supportepsilon}
The support $\mathcal{E}$ is convex and independent of $X=x$ and $Z=z$, i.e.~$\mathcal{E}_{x,z}$ coincides with $\mathcal{E}$ for all $(x,z)\in\mathcal{X}\times\mathcal{Z}$. 
\end{assumption}

Now, in order to identify the model with discrete or binary instruments, the main assumption needs to be made on the distribution functions $F_{X|Z=z_i}$, $i=1,\ldots,k_z$. It is the analogous assumption to the ones made in \citet{torgovitsky2015identification}, who requires all univariate distribution functions to intersect in at least one point. In the following we consider binary $Z$ with realizations $z$ and $z'$; furthermore, for each pair $z,z'\in\mathcal{Z}$ we denote as $\mathcal{I}(z,z')\subseteq \mathcal{X}_z\cup\mathcal{X}_{z'}$ the set where $F_{X|Z=z}$ and $F_{X|Z=z'}$ intersect. That is,
\[\mathcal{I}(z,z')\coloneqq\{x\in\mathcal{X}_z\cup\mathcal{X}_{z'}: 0<F_{X|Z=z}(x)=F_{X|Z=z'}(x)<1\}.\] Moreover, for every point $x_0\in\mathcal{X}_z=\mathcal{X}_{z'}$, we let $I_z(x_0)$ denote the \emph{isoquant} or \emph{level set} of $F_{X|Z=z}$ at $x_0$, which is the set of all $x\in\mathcal{X}_z$ which have the same probability as $x_0$ under $F_{X|Z=z}$, formally:
\[I_z(x_0)\coloneqq\{x\in\mathcal{X}_z:F_{X|Z=z}(x)=F_{X|Z=z}(x_0)\}.\]
Since we assumed that $F_{X|Z=z}$ is absolutely continuous with convex support and strictly increasing in a multivariate sense, we can regard the distribution functions $F_{X|Z=z}$ and $F_{X|Z=z'}$ as utility functions, in which case the isoquants $I_z(\cdot)$ and $I_{z'}(\cdot)$ can be interpreted as the indifference curves (or in higher dimensions: indifference manifolds) of $F_{X|Z=z}$ and $F_{X|Z=z'}$, respectively. This analogy is the key in proving the result as it allows us to work with the indifference curves instead of the distribution functions. 

Now for stating the main assumption which gives us identification, we need to introduce the concept of transversal intersection of manifolds. The following definition is adapted from \cite{milnor1997topology}.
\begin{definition}
Two submanifolds $N$ and $N'$ of an ambient manifold $M$ intersect transversally if for each $x\in N\cap N'$ their tangent spaces at $x$, denoted by $T_x N$ and $T_x N'$, together generate the tangent space $T_x M$ in the sense that $T_x N+T_xN'=T_xM$.
\end{definition}
Transversal intersection of indifference curves of different utility functions is a standard assumption made in economic theory \citep{mas1989theory} and is very weak since it is a \emph{generic} property in the sense that basically all indifference curves between different preferences intersect transversally by a result from Ren\'e Thom (see \citeauthor{ekeland2004identification} \citeyear{ekeland2004identification} for a discussion of generic properties). 
We are now in the position to state the main assumption on the distributions $F_{X|Z=z}$.

\begin{assumption}\label{supportass}
Let $Z$ be binary with $\mathcal{Z}=\{z,z'\}$. Then the following properties of $F_{X|Z=z}$ and $F_{X|Z=z'}$ hold:
\begin{enumerate}
\item $\mathcal{X}_{z}=\mathcal{X}_{z'}$.
\item The epigraphs\footnote{The epigraph of a real valued function $f:X\to\mathbb{R}$ for the level $\alpha\in\mathbb{R}$ is defined by $\text{epi}(f;\alpha)\equiv\{(x,\alpha)\in\mathcal{X}\times\mathbb{R}:\alpha\geq f(x)\}$, see \citet{aliprantis2006infinite}.} of $F_{X|Z=z}$ and $F_{X|Z=z'}$, denoted by $\text{epi}(F_{X|Z=z};\alpha)$ and $\text{epi}(F_{X|Z=z'};\alpha)$ for $\alpha\in[0,1]$, are convex sets. The set of all points where the isoquants meet, $\mathcal{I}(z,z')$, consists of at least one connected manifold $\mathcal{M}(z,z')$. At all points $x_0\in\mathcal{I}(z,z')$ the isoquants either intersect transversally or coincide in a neighborhood $\mathcal{N}(x_0)$ around $x_0$. 
\item The manifold $\mathcal{M}(z,z')$ is such that

\noindent (i) for each $x\in\mathcal{X}_{z}=\mathcal{X}_{z'}$ there is an $m\in\mathcal{M}(z,z')$ with $0<F_{X|Z=z}(m)=F_{X|Z=z'}(m)<1$ which either dominates or is dominated by $x$.

\noindent (ii) All points $x\in\mathcal{X}_{z}$ with $F_{X|Z=z}(x)=0$ lie on one side of the manifold and all points where $F_{X|Z=z}(x)=1$ lie on the other, and analogously for $F_{X|Z=z'}$; ``lying on one side of the manifold'' means that there are no two points $x_1,x_2\in\mathcal{X}_{z}$ with $F_{X|Z=z}(x_1)=F_{X|Z=z}(x_2)=0$ (respectively: $F_{X|Z=z'}(x_1)=F_{X|Z=z'}(x_2)=1$) such that the line $(1-t)x_1+tx_2$ for $t\in[0,1)$ intersects $\mathcal{M}(z,z')$.
\end{enumerate}
\end{assumption}
Part 1 of Assumption \ref{supportass} is restrictive as it requires that the supports of all conditional distribution functions coincide. This assumption is slightly stronger than the assumption in the univariate case of \cite{torgovitsky2015identification} as the distribution functions there only need to intersect but the supports need not coincide. On the other hand, we allow for the supports to be convex and unbounded, a much weaker assumption. Parts 2 and 3 of Assumption \ref{supportass} appear to be high level, but are actually rather natural, weak, and easy to check in practice. 

To see that Assumption \ref{supportass} is a reasonable assumption to make in practice, consider Figure \ref{cdfplot} as an example, where we display the intersection of a bivariate $t$ distribution $F_{X|Z=z}$ with density function
\begin{align*}
&f_{X|Z=z'} = (v\pi)^{-1}|\Sigma|^{-\tfrac{1}{2}}\frac{\Gamma(\tfrac{1/2}v+1)}{\Gamma(\tfrac{1}{2}v)}\left(1+\frac{x^{T}\Sigma^{-1}x}{v}\right)^{-\tfrac{1}{2}v-1}\thickspace\medspace\text{for}\thickspace\medspace v=2\thickspace\medspace\text{and}\thickspace\medspace \Sigma=\begin{pmatrix} 2&0.8\\0.8&0.5\end{pmatrix}\\
&\text{with}\thickspace\medspace\Gamma(z) = \int_0^\infty x^{z-1}\exp(-x)dx
\end{align*}
and a bivariate Normal distribution $F_{X|Z=z'}$ with density function
\[f_{X|Z=z} = (2\pi)^{-1}|\Sigma'|^{-\tfrac{1}{2}}\exp(-\tfrac{1}{2}(x-\mu)^{T}(\Sigma')^{-1}(x-\mu))\quad\text{for}\thickspace\medspace \mu=(0,0)'\thickspace\medspace\text{and}\thickspace\medspace \Sigma'=\begin{pmatrix} 1&0.8\\0.8&4\end{pmatrix}.\]
\begin{figure}[ht]
\includegraphics[width=15cm,height=8cm]{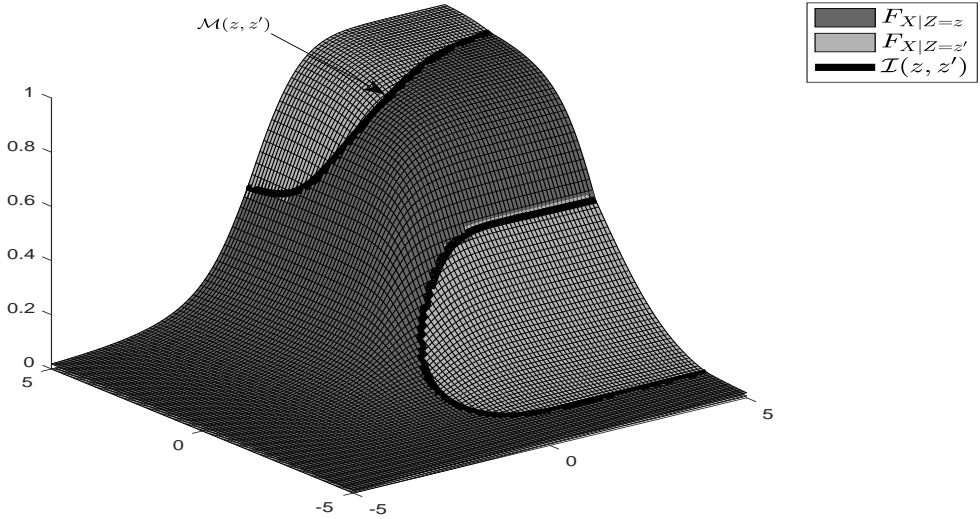}
\caption{Example of a bivariate Normal distribution and a bivariate t distribution}\label{cdfplot}
\end{figure}
In this case, the set where all isoquants intersect, $\mathcal{I}(z,z')$, consist of two separate manifolds. The one on the bottom does not satisfy part 4 of Assumption \ref{supportass}, but the one on top, $\mathcal{M}(z,z')$, does. This is all we need since we only need one manifold to satisfy Assumption \ref{supportass}. To check part 3 of Assumption \ref{supportass} we need to look at the contour plot. Figure \ref{contourplot} shows that the respective isoquants either intersect transversally (in the interior of the graph) or converge towards one another (at the boundaries) so that they coincide there. Assumption \ref{supportass} is hence satisfied which would guarantee point-identification of $m$ in a nonseparable triangular model where $F_{X|Z=z}$ and $F_{X|Z=z'}$ are those two distributions. 

We want to mention in this respect that even though the form of $I(z,z')$ depends on $\Sigma$ and $\Sigma'$, identification holds for all combinations of $\Sigma$ and $\Sigma'$ and degrees of Freedom $v$ simply by the fact that both $t$- and Normal distribution have infinite support. That is, in all cases there is a manifold $\mathcal{M}(z,z')\subseteq I(z,z')$ with the required properties simply because the CDFs need to intersect at some point in the infinite support. Note that we have chosen an example where the variance of $F_{X|Z=z}$ does not exist since $v=2$. Our approach still works in this case since the gradient of convex functions mapping one distribution to the other exists (see Theorem \ref{mccann} in the appendix). The same holds if we choose two Normal distributions. This makes us confident that Assumption \ref{supportass} is satisfied in many important practical applications, not just in two but also higher dimensions. This is why our result holds more generally than the results of \citet{torgovitsky2015identification} and \citet{d2015identification}, as the authors there need to assume a compact support in their multivariate settings. That said, there are certainly cases which do not satisfy Assumption \ref{supportass}. In particular, we cannot allow for the fact that one distribution first-order stochastically dominates the other distribution in a multivariate sense, but this requirement is the same as in \cite{torgovitsky2015identification} or \cite{d2015identification}, only in the multivariate case.

\begin{figure}[htb]
\centering
\includegraphics[width=12cm,height=8cm]{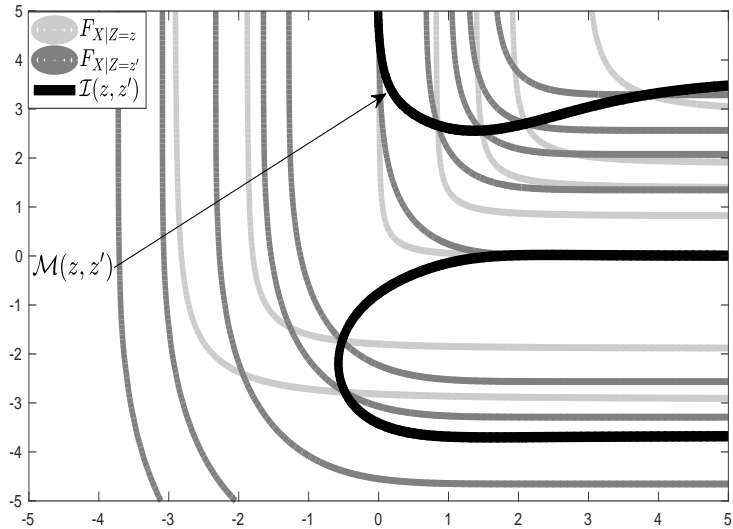}
\caption{Contour plot corresponding to Figure \ref{cdfplot}}\label{contourplot}
\end{figure}
In the univariate case, i.e.~when $X,U\in\mathbb{R}$, the manifold $\mathcal{M}(z,z')$ reduces to a point in the special case where $\mathcal{X}_z$ and $\mathcal{X}_{z'}$ coincide. In this case we can simply use Torgovitsky's assumption on the first stage which makes our approach a direct generalization of \citet{torgovitsky2015identification}. 

To get a better idea of when Assumption \ref{supportass} holds in general, consider Figure \ref{sassfigure} and suppose that the sheet of paper represents $\mathbb{R}^2$ with the standard partial order $x\geq y$ if and only if $x_1\geq y_1$ and $x_2\geq y_2$. Depicted there are three different scenarios for the supports $\mathcal{X}_z$ and $\mathcal{X}_{z'}$ as well as the manifold $\mathcal{M}(z,z')$ in these supports. These pictures are schematic versions of Figure \ref{contourplot} in that they only depict the supports $\mathcal{X}_z$ and $\mathcal{X}_{z'}$ and the respective manifolds $\mathcal{M}(z,z')$, but not the contour plots. Consider the picture on the left first. Any one of those three depicted manifolds in this example satisfies Assumption \ref{supportass} as all three are of dimension 1 and connected; moreover, for each point $x\in\mathcal{X}_z=\mathcal{X}_{z'}$, there is a point in every manifold which either dominates or is dominated by $x$.\footnote{Recall that a point $x$ dominates $x'$, i.e.~$x>x'$, if it lies to the ``north-east'' of $x'$.} Also all points in the support with $F_{X|Z=z}(x)=0$ lie on the same side of the manifolds, and analogously for $F_{X|Z=z'}$. The example in the center violates Assumption \ref{supportass} since $x_1$ with $F_{X|Z=z}(x_1)=0$ and $x_2$ with $F_{X|Z=z}(x_2)=0$ lie on opposite sides of $\mathcal{M}(z,z')$.\footnote{Note that $F_{X|Z=z}(x_1)=0=F_{X|Z=z'}(x_2)$, because the rectangles $(-\infty,x_1]$ and $(-\infty,x_2]$ do not intersect the support, so that $F_{X|Z=z}(x_1) = P_{X|Z=z}((-\infty,x_1])=0=P_{X|Z=z}((-\infty,x_2])=F_{X|Z=z}(x_2)$.} Finally, the example on the right violates Assumption \ref{supportass} even though there are \emph{two} connected manifolds which \emph{together} are such that each point $x$ either dominates or is dominated by some $m$ in one of the manifolds. The problem here is that there is not \emph{one} manifold alone for which this holds. In fact, there is no point in the top left manifold which either dominates or is dominated by $x_4$, i.e.~lies to the north-east or south-west; similarly, there is no point on the bottom right manifold which either dominates or is dominated by $x_3$. In addition, both manifolds violate the requirement that all points with $F_{X|Z=z}(x)=0$ and $F_{X|Z=z'}(x)=0$ lie on one side.

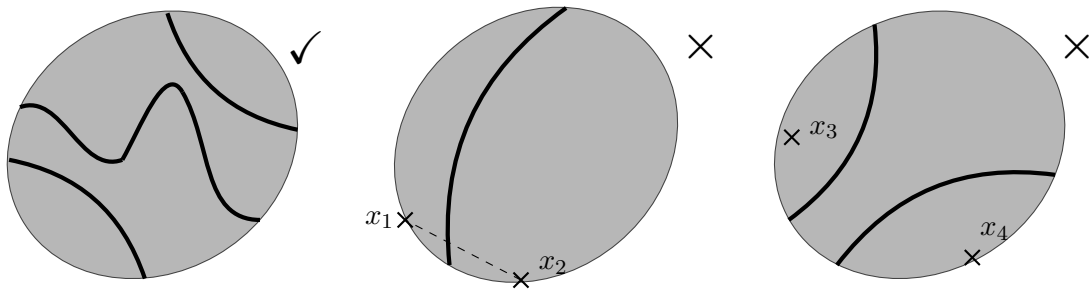
\begin{figure}[htp]
\centering
\begin{tikzpicture}
\def\firstcircle{(2.4,2.2) circle [x radius=2cm, y radius=1.7cm, rotate=30]};
\def\secondcircle{(12.6,1.8) circle [x radius=2cm, y radius=1.7cm, rotate=50]};
\draw[fill=gray!80, opacity=0.7] \firstcircle;
%\filldraw[pattern = north east lines] \secondcircle;
\draw [ultra thick] (2.6,3.95) to[bend right] (4.33,2.4);
\draw [ultra thick] (0.64,2.7) to [out=20,in=200] (2,2);
\draw [ultra thick] (2,2) to [out=60,in=120] (2.8,2.9);
\draw [ultra thick] (2.8,2.9) to [out=300,in=180] (3.83,1.2);
\draw [ultra thick] (0.495,2) to[bend left] (2.3,0.42);
%\draw [ultra thick] (2.5,3.75) to (2.7,3.95);
%\draw [ultra thick] (2.5,3.95) to (2.7,3.75);
%\draw[pattern = north east lines] (5,5) rectangle (5.5,5.5);
\draw node[right] at(5,5.25) {$\mathcal{X}_z=$};
\draw[fill=gray!80, opacity=0.7] (6.2,5) rectangle (6.7,5.5);
\draw node[right] at(6.7,5.25) {$=\mathcal{X}_{z'}$};
\draw[ultra thick] (8.2,5.25) -- (8.7,5.25);
\draw node[right] at(8.7,5.25) {$=\mathcal{I}(z,z')$};
\def\thirdcircle{(7.5,2.2) circle [x radius=2cm, y radius=1.7cm, rotate=40]};
\def\helpcircle{(9.61,2.1) circle [x radius=2cm, y radius=1.7cm, rotate=61]};
\draw[fill=gray!80, opacity=0.7] \thirdcircle;
%\draw[pattern=north east lines] \helpcircle;
\draw [ultra thick] (6.35,0.6) to[bend left] (7.9,4.02);
%\draw node[left] at (5.8,1.2) {$x_1$};
%\draw[dashed] (5.5,0.6) to (7.8, 4.2);
%\draw [thick] (5.8,1.2) to (6,1.4);
%\draw [thick] (5.8,1.4) to (6,1.2);
\draw [thick] (7.2,0.3) to (7.4,0.5);
\draw [thick] (7.2,0.5) to (7.4,0.3);
\draw node[right] at (7.4,0.6) {$x_2$};
\draw node[left] at (5.75,1.2) {$x_1$};
\draw [thick] (5.66,1.1) to (5.86,1.3);
\draw [thick] (5.66,1.3) to (5.86,1.1);
%\draw node[right] at (7.8,0.6) {$x_3$};
\draw[dashed] (7.3,0.4) to (5.76, 1.2);
%\draw [thick] (7.6,0.6) to (7.8,0.8);
%\draw [thick] (7.6,0.8) to (7.8,0.6);
%\draw [thick] (9.1,2.5) to (9.3,2.7);
%\draw [thick] (9.1,2.7) to (9.3,2.5);
%\draw node[right] at (9.3,2.6) {$x_4$};
\def\fourthcircle{(12.6,2.2) circle  [x radius=2cm, y radius=1.7cm, rotate=30]};
\draw[fill=gray!80, opacity=0.7] \fourthcircle;
%\draw[pattern=north east lines] \fourthcircle;
\draw [ultra thick] (10.86,1.2) to[bend right] (12,3.8);
\draw [ultra thick] (14.4,1.8) to[bend right] (11.5,0.6);
\draw node[above] at (13.6,0.8) {$x_4$};
%\draw node[left] at (14.3,2.2) {$x_2$};
%\draw [thick] (14.3,2.2) to (14.5,2);
%\draw [thick] (14.5,2.2) to (14.3,2);
\draw [thick] (13.2,0.6) to (13.4,0.8);
\draw [thick] (13.2,0.8) to (13.4,0.6);
\draw [thick] (10.8,2.2) to (11,2.4);
\draw [thick] (10.8,2.4) to (11,2.2);
\draw node[right] at (11,2.4) {$x_3$};
\draw node[right] at(4.05,3.5) {\LARGE{$\checkmark$}};
\draw node[right] at(9.3,3.5) {\LARGE{\textbf{\texttimes}}};
\draw node[right] at(14.3,3.5) {\LARGE{\textbf{\texttimes}}};
\end{tikzpicture}
\caption{Examples (not) satisfying Assumption \ref{supportass}}
\label{sassfigure}
\end{figure}

Assumption \ref{supportass} is also weaker than the currently existing assumptions in another respect. In particular, we only require one manifold $\mathcal{M}(z,z')$ for binary $Z$.  In contrast, \cite{torgovitsky2015supplement}  requires that for every endogenous variable $X_i$, $F_{X_i|Z=z}$ and $F_{X_i|Z=z'}$ have to intersect in at least one point, and this for every $i=1,\ldots,d_x$. So in a $k$-variate case, this would actually require $k$-linear curves, each orthogonal to one of the $k$ dimensions, instead of one general curve from Assumption \ref{supportass}.

All Assumptions are rather weak and can even be checked by estimating the respective distribution functions $\hat{F}_{X|Z=z}$ and $\hat{F}_{X|Z=z'}$ and examining wether their intersection satisfy Assumption \ref{supportass}. Note, however, that, analogous to \citet{torgovitsky2015identification}, this support assumption excludes linear relationships like $X=\beta Z+U$ with $\mathcal{U} = \mathbb{R}^k$ for $\mathcal{Z}=\{0,1\}$ with bounded supports, because in those relationships the two conditional distribution functions would simply be a shift of each other and would not intersect for $\beta\neq 0$.

Still, by our above reasoning and the fact that the assumption is satisfied by Normal distributions, we are convinced that Assumption \ref{supportass} holds in many practical settings. 

\subsection{The theoretical main result and intuition of the proof}\label{mainresultsubsection}
Under the above mentioned assumptions, we can now state the main theoretical result of this article.
\begin{theorem}\label{maintheorem2}
Let Assumptions \ref{dimensionsass} -- \ref{supportass} hold and let $Z$ be discrete with at least two points in its support. Then $m(x,\varepsilon)$ is identified for almost every $x\in\mathcal{X}$ and all $\varepsilon\in\mathcal{E}$ in model \eqref{triangular}. The identified set \[\mathbb{I}\coloneqq \{m\in\mathcal{H}(Y_{xz},\varepsilon_{xz}):(m^{-1}(X,Y),U)\independent Z\}\] hence contains an $X$-almost everywhere unique element. Here, $\mathcal{H}(Y_{xz},\varepsilon_{xz})$ denotes the set of all measure preserving isomorphisms between $P_{Y|X,Z}$ and $P_{\varepsilon|X,Z}$ satisfying Assumption \ref{mpiso2}.
If Assumption \ref{mpiso2prime} holds in place of \ref{mpiso2}, the analogous result holds, but $m$ is then only identified for almost every $\varepsilon\in\mathcal{E}_{xz}$ instead of all $\varepsilon$.
\end{theorem}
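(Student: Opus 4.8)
The plan is to exploit the triangular structure in three layers --- first stage, error transport, second stage --- and to generalize the univariate sequencing argument of \citet{torgovitsky2015identification} to the multivariate intersection manifold $\mathcal{M}(z,z')$ of Assumption \ref{supportass}. First I would identify the first stage. By Assumption \ref{mpiso} the map $h(z,\cdot)$ is the gradient of a convex function pushing the known $F_U$ onto the observed $F_{X\mid Z=z}$, and by Brenier's theorem (Theorem \ref{mccann}) such a map is unique given its source and target. Hence $h(z,\cdot)$, and therefore its inverse, is identified for each $z\in\mathcal{Z}=\{z,\bar z\}$, where $\bar z$ is the normalization point of Assumption \ref{normalizationass} at which $h(\bar z,\cdot)$ is the identity. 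Consequently $U=h^{-1}(Z,X)$ is a known deterministic function of the observables: given $(X=x,Z=z)$ we recover $u=(\nabla\psi_z)^{-1}(x)$, and given $(X=x,Z=\bar z)$ simply $u=x$.

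Second, I would transport the conditional law of the structural error. The joint independence $(\varepsilon,U)\independent Z$ of Assumption \ref{instrumentass} implies that the conditional distribution of $\varepsilon$ given $U=u$ does not depend on $Z$; since $U$ is pinned down by $(X,Z)$, this yields the central identity $P_{\varepsilon\mid X=x,Z=z}=P_{\varepsilon\mid U=h^{-1}(z,x)}$, and in particular $P_{\varepsilon\mid X=x,Z=\bar z}=P_{\varepsilon\mid U=x}$. Because $m$ is a measure-preserving isomorphism (Assumption \ref{mpiso2}), for every $(x,z)$ the model forces $P_{Y\mid X=x,Z=z}(E)=P_{\varepsilon\mid U=h^{-1}(z,x)}(m^{-1}(x,E))$ for all Borel $E$, which by Proposition \ref{cdfmeasurelemma} may be checked on half-open rectangles alone. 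Thus, if the family $\{P_{\varepsilon\mid U=u}\}$ were known, determinability (Definition \ref{uniquelyidentifiable}) together with the known $P_\varepsilon$ would select $m(x,\cdot)$ as the unique admissible measure-preserving isomorphism at each $x$.

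The heart of the argument is therefore to identify $\{P_{\varepsilon\mid U=u}\}$, and this is where the multivariate sequencing enters. Interpreting $F_{X\mid Z=z}$ and $F_{X\mid Z=\bar z}$ as utility functions with isoquants $I_z(\cdot)$ as indifference manifolds, I would match a point $(x,z)$ to the point $(x',\bar z)$ carrying the same latent $u$, i.e.\ $x=\nabla\psi_z(x')$ so that $h^{-1}(z,x)=h^{-1}(\bar z,x')=x'$; at such matched pairs $P_{\varepsilon\mid X=x,Z=z}=P_{\varepsilon\mid X=x',Z=\bar z}=P_{\varepsilon\mid U=x'}$, linking $m(x,\cdot)$ and $m(x',\cdot)$ through a common error law. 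The metric-projection characterization of Brenier maps (Lemma \ref{brenierismetricprojection}) is what makes this matching single-valued and compatible with the order structure in dimension $k>1$. Starting from the normalization region, where $P_{\varepsilon\mid U=x}=P_{\varepsilon\mid X=x,Z=\bar z}$ is tied to the known $P_\varepsilon$, I would propagate: parts 3(i)--(ii) of Assumption \ref{supportass} guarantee that every $x$ dominates or is dominated by some point of the single connected manifold $\mathcal{M}(z,\bar z)$ and that the level sets $\{F=0\}$ and $\{F=1\}$ sit on opposite sides, so the manifold genuinely separates $\mathcal{X}$ and the chains of dominations sweep out the whole support, while the convex-epigraph and transversality conditions keep the propagation path-independent. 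Iterating ``identify $P_{\varepsilon\mid U=u}$ on a region $\Rightarrow$ identify $m$ there via determinability $\Rightarrow$ recover $P_{\varepsilon\mid U=u}$ on a larger region'' then covers almost every $x$, giving the $X$-a.e.\ unique element of $\mathbb{I}$.

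I expect the main obstacle to be precisely this sequencing step: in one dimension ranks are totally ordered and $\mathcal{M}(z,\bar z)$ collapses to a point, so propagation is automatic, whereas for $k>1$ one must show the manifold-based matching is well defined, exhausts $\mathcal{X}$ up to a null set, and never assigns two incompatible error laws to the same $u$ --- exactly the content that Lemma \ref{brenierismetricprojection} and the geometric hypotheses of Assumption \ref{supportass} (transversality, convex epigraphs, the one-sided level sets) are engineered to supply. A secondary and more routine difficulty is upgrading the conclusion under Assumption \ref{mpiso2prime}: there continuity of $m$ in $x$ is weakened to continuity in measure (Definition \ref{continprop}), so the limiting arguments along the propagating chains deliver $m(x,\cdot)$ only up to a $P_\varepsilon$-null set of $\varepsilon$, which accounts for the ``almost every $\varepsilon$'' qualifier in the statement.
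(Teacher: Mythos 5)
Your reading of the machinery is largely right: the first-stage identification via Brenier's theorem, the key identity $P_{\varepsilon\mid X=x,Z=z}=P_{\varepsilon\mid U=h^{-1}(x,z)}$, the matching of points carrying the same latent $u$ across instrument values, and the role of Assumption \ref{supportass} and Lemma \ref{brenierismetricprojection} in making the multivariate sequencing well defined. But the architecture you build on top of this has a genuine gap: your induction has no valid base case. You propose to identify the family of conditional laws $\{P_{\varepsilon\mid U=u}\}$ by propagation, ``starting from the normalization region, where $P_{\varepsilon\mid U=x}=P_{\varepsilon\mid X=x,Z=\bar z}$ is tied to the known $P_\varepsilon$.'' Knowing the \emph{marginal} $P_\varepsilon$ (Assumption \ref{mpiso2}) and knowing that $U=X$ when $Z=\bar z$ (Assumption \ref{normalizationass}) does not pin down the \emph{conditional} law $P_{\varepsilon\mid U=u}$ at any $u$: under endogeneity these conditionals are precisely the unobserved objects, and they are unknown everywhere, including on the normalization region and on the fixed manifold $\mathcal{M}(z,\bar z)$. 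Since your loop ``identify $P_{\varepsilon\mid U}$ on a region $\Rightarrow$ identify $m$ there via determinability $\Rightarrow$ recover $P_{\varepsilon\mid U}$ on a larger region'' requires one of the two unknowns to be known somewhere before it can start, it never gets off the ground. This is also why the paper stresses that Theorem \ref{maintheorem2} is \emph{not} constructive: it delivers uniqueness of the element of $\mathbb{I}$, not a procedure that recovers $m$ or the conditional error laws.

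The paper's proof avoids this entirely by running a two-candidate uniqueness argument rather than a construction. One posits two admissible structures $(m,\varepsilon)$ and $(m^*,\varepsilon^*)$ generating the same observables, forms the discrepancy map $q(x,\cdot)=m^{-1}(x,m^*(x,\cdot))$, and notes that by Assumption \ref{instrumentass} and Lemma \ref{stringofequalitieslemma} the conditional laws of \emph{both} $\varepsilon$ and $\varepsilon^*$ are unchanged when one moves from $(x_0,z)$ to $(Tx_0,z')$, where $T$ is the Brenier map between $F_{X\mid Z=z}$ and $F_{X\mid Z=z'}$; hence $q(x_0,\cdot)=q(Tx_0,\cdot)=\cdots=q(T^nx_0,\cdot)$ without anyone ever knowing what those conditional laws are. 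Lemmas \ref{firstlemmatheorem2} and \ref{secondlemmatheorem2} then show that the orbit $T^nx_0$ (or $(T^{-1})^nx_0$, depending on which side of the manifold $x_0$ lies) converges monotonically to a point of $\mathcal{M}(z,z')$, and continuity of $q$ in $x$ (Assumption \ref{mpiso2}), or continuity in measure (Assumption \ref{mpiso2prime}, yielding only $\varepsilon$-a.e.\ equality), transfers constancy along orbits to constancy on all of $\mathcal{X}_z$ modulo the measure-zero manifold. Only at the very end do determinability and the known $P_\varepsilon$ enter: a $q$ that does not depend on $x$ forces $m$ and $m^*$ to have the same functional form, hence $m=m^*$. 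If you recast your propagation as this two-candidate argument---propagating the \emph{equality} of the two structures along the $T$-orbits instead of propagating knowledge of the conditional error laws---your matching and convergence intuitions slot in correctly.
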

The importance of Assumption \ref{mpiso2prime} is that it basically does not weaken the result (identification for almost every $\varepsilon$ compared to identification for every $\varepsilon$) compared to Assumption \ref{mpiso2}, while being a rather substantial weakening of Assumption \ref{mpiso2}. In particular, we can guarantee continuity in probability of $m$ but not full continuity in the next section for identification of Hedonic models. 

We have relegated the proof of this result to the appendix, but let us give an outline of the idea. Intuitively, the problem of identification results from the fact that $m$ is the map between $P_\varepsilon$ and $P_{Y|X}$ for exogenous $X$. If $X$ were actually exogenous, we would not need a first stage relationship, because in this case the observable distribution $F_{Y|X}$ is exactly the distribution corresponding to $m$ and we could simply use the observable distribution and a normalization of $F_\varepsilon$ to identify $m$. This is the underlying idea for identification of single market Hedonic models with exogenous characteristics. 

Since $X$ is endogenous, however, the observable distribution $F_{Y|X}$ is not the right distribution for identifying $m$. We therefore need an instrument $Z$ which has a nonzero influence on $X$ and is independent of $\varepsilon$. Then for a binary (or discrete) $Z$ with values $z$ and $z'$ and the first stage relationships $X=h(z,U)$ and $X=h(z',U)$, we can use $U$ with known distribution $F_U$ as a control variable in the sense of \citet{imbens2009identification}, only in a multivariate setting and not requiring $U$ to have a uniform distribution. In fact, note that by the assumption that $h$ is the gradient of a convex function mapping $U$ to $X$ for $z'$ and the identity map between $X$ and $U$ for $z$ and the fact that both $P_{X|Z}$ and $P_U$ are absolutely continuous, $h$ establishes a bijective relation between $X$ and $U$ for $z$ as well as $z'$. That is, for each $u\in\mathcal{U}$ there are two $x,x'\in\mathcal{X}$, possibly coinciding, corresponding to it: $u=h^{-1}(x,z)$ and $u=h^{-1}(x',z')$. If we fix $Z=z$, then the relation is bijective. In the other direction, for every $x\in\mathcal{X}$ there are two $u,u'\in\mathcal{U}$: $x=h(z,u)$ and $x=h(z',u')$. The second crucial ingredient is the measure preservation of $h$. In fact, for every Borel set $E_u\in\mathscr{B}_{\mathbb{R}^k}$ we have $P_U(E_u) = P_{X|Z=z}(h^{-1}(E_u,z))$ and similarly for $z'$, so that the distributions do not change if we condition on $U$ or $X$. 

Therefore, for fixed $Z$, conditioning on $U$ is the same as conditioning on $X$. Now the crucial assumption guaranteeing that $Z$ is independent of $\varepsilon$ and $U$ allows us to use the fact that for each $u$ there are two $x,x'$, depending on which realization of $Z$ we use for the map $h$.  Then the idea---for all $x\in\mathcal{X}$---is to construct a sequence from $x$ to $u$ via $h^{-1}(\cdot,z)$, and then change to $x'$ via $x'=h(z',u)$, then change to $u'$, and so forth. The key here is that for this sequence starting with any $x$ the distributions $F_{Y|X}$ and $F_\varepsilon$ of the second stage \emph{do not change} because of the independence of $Z$ and $\varepsilon$. So this sequence induces an exogenous change in $X$ by changing $Z$ which does not affect the distribution of $F_\varepsilon$. By assumption \ref{supportass}, this sequence must converge and cannot go on forever, because at some point it must be that $x=x'$. This holds for every starting point $x$, so that we can in principle identify $m$ by exogenously varying $Z$. 

In practice, we do not observe $F_{Y|X}$ for exogenous $X$, even using the instrument $Z$. Theorem \ref{maintheorem2} hence only shows that $m$ is \emph{identifiable}, and gives us the identification set, but not a constructive way to obtain $m$. This reasoning is perfectly analogous to the result in \citet{torgovitsky2015identification} and also \citet{d2015identification}. Note again that the dimension of $Z$ can be smaller and even one-dimensional for this, as long as $Z$ is a valid instrument for each variable in the vector $X$. This works since we require nonlinearities of $m$ by way of Assumption \ref{supportass} and hence implicitly take into account the information of all higher order moments as mentioned.

To be slightly more formal: the basic idea is to prove uniqueness of $m$. So it is natural to assume that there are $m$ and $m^*$ as well as corresponding $\varepsilon$ and $\varepsilon^*$ satisfying the assumptions and $Y=m(X,\varepsilon)$ as well as $Y=m^*(X,\varepsilon^*)$. Identifiability can then be proved if $m=m^*$. This is done by showing that the isomorphism \[q(x,z,\cdot)=q(x,\cdot)=m^{-1}(x,m^*(x,\cdot))\] is actually the identity, i.e. that $q(x,e)=e$ for all $x\in\mathcal{X}_z\cup\mathcal{X}_{z'}$ and $e\in\mathcal{E}$, which would imply $m=m^*$ for every $x$. To show that $q(x,e)$ is the identity with respect to $\varepsilon$, it is actually sufficient to show that it is not a function of $x$ by the fact that $F_\varepsilon$ is known and $m$ is determinable. \label{explanation} In fact, as $m(x,e)$ is determinable between $F_\varepsilon$ and $F_{Y|X=x}$ for each $x\in\mathcal{X}$, there can be no other $m(x,\cdot)$ of this functional form by definition for each $x$. Therefore, if $q(x,e)$ is only a function of $e$, say $f(e)$, this means that the functional form of $m^{-1}\circ m^*$ does not change with $x$ so that both $m$ and $m^*$ have the same functional form. But since $m$ is determinable, it must be that $m=m^*$. This is the same reasoning as in \cite{torgovitsky2015identification}, only put in a more general framework. In order to achieve this, Assumption \ref{instrumentass} is crucial, as it guarantees that $P_{\varepsilon|X,Z}=P_{\varepsilon|X}$ and analogously for $\varepsilon^*$. 

\begin{figure}[h!t]
\centering
\begin{tikzpicture}
\draw node[below] at (1.5,0) {$P_{\varepsilon^*|X=x,Z=z}=P_{\varepsilon^*|U=h^{-1}(x,z),Z=z}$};
\draw node[above] at (-0.5,2.5) {$P_{Y|X=x,Z=z}$};
\draw node[above] at (7.2,2.4) {$P_{\varepsilon|X=x,Z=z}=P_{\varepsilon|U=h^{-1}(x,z),Z=z}$};
\draw node[left] at (0,1.25) {$m^*(x,\cdot)$};
\draw node[above] at (2.75,2.75) {$m(x,\cdot)$};
\draw node[right] at (2.4, 0.9) {$q(x,z,\cdot)=q(h^{-1}(x,z),\cdot)$};
\draw[->,thick] (0,0.15) to (0,2.5);
\draw[<-,thick] (0.85,2.75) to (4.25,2.75);
\draw[->,thick] (0.6,-0.05) to (4.5,2.5);
\end{tikzpicture}
\caption{Underlying isomorphism structure}
\label{isostructure}
\end{figure}
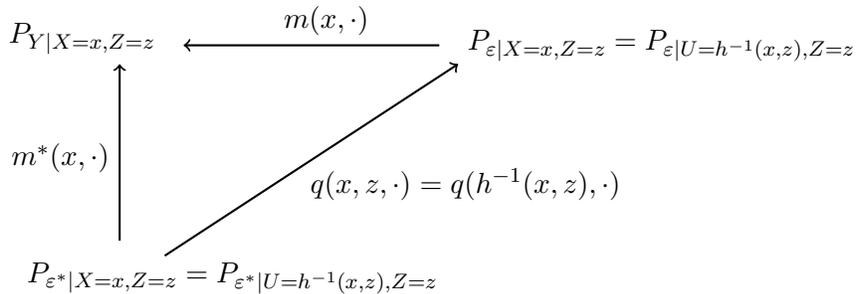

Therefore, varying $Z$ does not affect the distributions of the unobservables, but does affect $X$, which means that one can get at the exogenous effect of $X$ on $Y$. This is the same reasoning as in the case where $Z$ is absolutely continuous. In the binary case one can use a general sequencing argument as mentioned above. Let us be more specific about this sequencing argument now.

 Recall that in the univariate case \cite{torgovitsky2015identification} uses the measure preserving isomorphism $(x,z)\mapsto (F_{X|Z=z}(x),z)$ to condition $\varepsilon$ on $F_{X|Z}$ instead of $X,Z$ and then applies the monotone rearrangement $T(x)=F^{-1}_{X|Z=z'}(F_{X|Z=z}(x))$ as a map between $F_{X|Z=z}$ and $F_{X|Z=z'}$; this ensures that for every point $x_0\in\mathcal{X}_z\cup\mathcal{X}_{z'}$ $F_{\varepsilon|X=x_0,Z=z'}=F_{\varepsilon|X=Tx_0,Z=z}$ and analogously for $\varepsilon^*$, so that $q$ is the same for all iterations $T^n$. He then shows that in one dimension this iteration converges to a fixed point and can hence show that $q$ is constant for all starting points $x_0$, which by the assumed normalization implies that $m=m^*$.

Now, there are mainly two reasons for why this simple reasoning does not work in a higher dimensional setting. First, the map $(x,z)\mapsto (F_{X|Z=z}(x),z)$ is only invertible in the one-dimensional case, so this simple argument does not work. Our solution for this is Assumption \ref{normalizationass}. With this we can write $U=h^{-1}(X,Z)$ since both $P_{X|Z}$ and $P_U$ are absolutely continuous, so that $h$ is invertible, which gives 
\begin{equation}\label{equalitiesimportant}
P_{\varepsilon|X=x,Z=z}=P_{\varepsilon|U=h^{-1}(x,z),Z=z}=P_{\varepsilon|U=h^{-1}(x,z)}.
\end{equation} The second equality in \eqref{equalitiesimportant} follows from Assumption \ref{instrumentass}. The first equality follows from the following reasoning: the map $\phi: (X,Z) \mapsto (h^{-1}(X,Z),Z)$ is a measure-preserving isomorphism since $z\mapsto z$ is a measure preserving isomorphism and $x\mapsto h^{-1}(x,z)$ is a measure preserving isomorphism for all $z$, so that for every rectangle $E_x\times E_z\equiv(-\infty,x]\times (-\infty,z]\in\mathscr{B}_{\mathbb{R}^{k+m}}$ 
\[P_{X,Z}(E_x\times E_z) = P_{U,Z}(\phi^{-1}(E_x\times E_z))\equiv P_{U,Z}(h^{-1}(E_x,z)\times E_z).\] Analogously, the map $(\varepsilon,x,z)\mapsto (\varepsilon,h^{-1}(x,z),z)$ is a measure preserving isomorphism for the same reasoning so that for every rectangle $E_\varepsilon\times E_x\times E_z\equiv (-\infty,\varepsilon]\times (-\infty,x]\times (-\infty,z] \in\mathscr{B}_{\mathbb{R}^{d+k+m}}$ 
\[P_{\varepsilon,X,Z}(E_\varepsilon\times E_x\times E_z) = P_{\varepsilon,U,Z}(\phi^{-1}(E_\varepsilon\times E_x\times E_z))=P_{\varepsilon,U,Z}(E_\varepsilon\times h^{-1}(E_x,z)\times E_z).\]
Thus
\[P_{\varepsilon|X,Z}(E_\varepsilon)=\frac{P_{\varepsilon,X,Z}(E_\varepsilon\times E_x\times E_z)}{P_{X,Z}(E_x\times E_z)} =\frac{P_{\varepsilon,U,Z}(E_\varepsilon\times h^{-1}(E_x,z)\times E_z)}{P_{U,Z}(h^{-1}(E_x,z)\times E_z)}=P_{\varepsilon|U,Z}(E_\varepsilon).\] The last thing to notice is that conditioning on measure zero events does not cause issues, because $(X,Z) \mapsto (h^{-1}(X,Z),Z)$ is measurable with measurable inverse by definition of a measure preserving isomorphism, so that their $\sigma$-algebras coincide, i.e.~$\sigma(U,Z) = \sigma(X,Z)$. We give another formal proof of this fact in Lemma \ref{stringofequalitieslemma} in the appendix, using disintegrations.\footnote{Note that this conditioning is different from the approach in \citet{kasy2014instrumental}. Kasy used the mapping $\psi:(X,Z)\mapsto (X,h^{-1}(X,U))$, where he defined the inverse of $h$ is \emph{with respect to $X$}, which is not invertible since in his case it is a map from $\mathbb{R}^2$ to $\mathbb{R}\times \mathbb{U}$, where $\mathbb{U}$ is the (in Kasy's case possiby infinite dimensional) metric space containing $U$. Therefore, the respective $\sigma$-algebras $\sigma(X,Z)$ and $\sigma(X,h^{-1}(X,U))$ need not coincide. In our case, however, we use the measure preserving isomorphism $\phi(X,Z)= (h^{-1}(X,Z),Z)$, which is measurable with measurable inverse, so that $\sigma(X,Z)$ and $\sigma(h^{-1}(X,Z),Z)$ coincide.}

Second, we need to use a general sequencing argument which is more intricate in higher dimensions. By Assumption \ref{normalizationass} the distribution of $U$ is fixed to be $F_U=F_{X|Z=z}$ for one $z\in\mathcal{Z}$ and we require the map $h^{-1}(x,z)$ to be the identity and the other map $h^{-1}(x,z')$ to be the gradient of a convex function transporting $F_U$ onto $F_{X|Z=z'}$. They key step here then is a new result for the dynamics of measure preserving isomorphisms which take the form of the gradient of a convex function, which we prove in the appendix. Intuitively, we use Assumption \ref{supportass} and the strict monotonicity of the $F_{X|Z}$ to show that the gradient of a convex function $T$ mapping $F_{X|Z=z}$ onto $F_{X|Z=z'}$ never crosses $\mathcal{M}(z,z')$ in the sense that for each $x\in\mathcal{X}_z\cup\mathcal{X}_{z'}$ the curve $(1-t)x+tTx$, $t\in(0,1)$ never intersects $\mathcal{M}(z,z')$.
With this we can show that $\mathcal{M}(z,z')$ is a fixed set of the iteration $T^nx_0$ for the map $T$ between $F_{X|Z=z}$ and $F_{X|Z=z'}$. The key here is Assumption \ref{supportass} which requires that the manifold $\mathcal{M}(z,z')$ lies in the supports $\mathcal{X}_z=\mathcal{X}_{z'}$ in such a way that an iteration of this map converges to $\mathcal{M}(z,z')$ for every point. The proof of Theorem \ref{maintheorem2} contains the details. 

Note in this respect that if we were to make a different functional form assumption on $h$ we would have to work out the dynamics of a different measure preserving isomorphism which in turn would lead to a different Assumption \ref{supportass}. Gradients of convex functions are very general and well-behaved as transport maps, however, and it is not likely that one will find a measure preserving map with better properties. Even more importantly, the assumption that $h$ is the gradient of a convex function is the most natural generalization of a strictly increasing and continuous $h$ to the multivariate setting.
Lastly, Theorem \ref{maintheorem2abscont} in the appendix shows identification in the case where $Z$ is absolutely continuous under weaker assumptions on the supports $\mathcal{X}_z$, but requiring $m$ to be a measure preserving $C^{1}$-diffeomorphism instead of simply being a measure preserving isomorphism, which is much stronger. Its statement is a straightforward generalization of the result in \citet{torgovitsky2015supplement}. 

To conclude this section, we also want to stress again that Theorem \ref{maintheorem2} and its absolutely continuous counterpart from the appendix are \emph{not} constructive identification results in the sense that they do not provide us with the function $m$. They just provide the identified set $\mathbb{I}$ which we prove to contain a single element $m$, just as the univariate result \citet{torgovitsky2015identification} and \citet{d2015identification}. There are ways to estimate the function $m$ semi-parametrically like \citet{komunjer2010semi} or \cite{torgovitsky2016minimum}, but a fully nonparametric approach is still lacking. This is especially important to keep in mind in the following section where we show identification of the Hedonic model in multiple markets and identification of the BLP-model without index restrictions.

\section{Applications: BLP-, and Hedonic models}\label{applications}
In order to showcase the applicability of Theorem \ref{maintheorem2} we apply it in two different settings. First to the BLP \citep{berry1995automobile} model, where we complement the point-identification result of \citet{berry2014identification}. Second to Hedonic models with multivariate heterogeneity and endogenous characteristics, providing the first identification result in this setting and answering an open question posed in \citet{chernozhukov2014single} in the process.
Let us start with the former.

\subsection{The BLP model}
The BLP model \citep{berry1995automobile} was introduced for identifying and estimating utility- and cost functions of participants in demand and supply systems of differentiated product markets when only aggregate market share data are available to the researcher. The only article providing results on identification of the BLP model to date is the seminal \citet{berry2014identification}. In this article the authors need to make a somewhat artificial \emph{index restriction}, because their identification result relies on univariate identification results from the literature, in particular the identification result in \citet{chernozhukov2005iv}. Using Theorem \ref{maintheorem2} we can generalize their result directly to prove nonparametric identification without the need for the index restriction, complementing their result.
Let us start with the demand side.

\paragraph{Demand side} The demand side in this model is obtained by aggregating a continuum of individual discrete choice models in the following way, where we adapt the notation from \citet{berry2014identification}.
Each consumer $i$ in market $t$ chooses a good $j$ from a market $\mathcal{J}_t\coloneqq \{0,1,\ldots,J_t\}$, which consists of a continuum of consumers with total measure $M_t$. A market is formally defined by $(\mathcal{J}_t,\chi_t)$ with $\chi_t\coloneqq (x_t,p_t,\xi_t)$. Here, $x_t=(x_{1t},\ldots,x_{J_tt})$ is a $K\times J_t$ matrix containing the observed and exogenous characteristics of the products in the market. $\xi_t\coloneqq (\xi_{1t},\ldots,\xi_{J_tt})$ contains all of the unobservable characteristics at the product or market level and $p_t\coloneqq(p_{1t},\ldots,p_{J_tt})$ contains observable endogenous characteristics, i.e.~those characteristics which are correlated with $\xi_t$ like the price. 

Consumer preferences in the BLP model are determined by indirect utilities in the sense that consumer $i$ in market $t$ has conditional indirect utilities $v_{i0t},\ldots,v_{iJ_tt}$. Following \citet{berry2014identification}, we normalize the outside option $v_{i0t}$ to be zero, i.e.~$v_{i0t}=0$ for all $i$ and $t$ and assume that the utilities are independent and identically distributed across consumers and markets with joint distribution function
$F_{v}(v_{i1t},\ldots,v_{iJ_tt}|\chi_t).$ Then the standard assumption is that $\argmax_{j\in\mathcal{J}}v_{ijt}$ is unique with probability 1, which leads to the following definition of the market shares $s_{jt}$ for each product $j$ in market $t$:
\begin{equation}\label{berryequation}
s_{jt} = \sigma_j(\chi_t)=P\left(\argmax_{k\in\mathcal{J}} v_{ikt} = j | \chi_t\right),\quad j=0,\ldots, J,
\end{equation}
under the normalization $s_{0t} = 1-\sum_{k=1}^J s_{kt}$. Now here is where \citet{berry2014identification} are forced to introduce the \emph{index restriction} assumption, because it enables them to write the demand function element-wise for every $j$. In particular, they define a univariate index $\delta_{jt} = \delta_j(x_{jt},\xi_{jt})$ for each product $j$, where $\delta_j$ is a function which is strictly increasing and continuous in the unobservable $\xi_{jt}$ for $x_{jt}$.\footnote{In the main text, they even assume that $\delta_j$ is linear, a much stronger assumption, but they relax this assumption to allow for strictly increasing and continuous $\delta_j$ in the appendix, and this is the result we focus on.} The idea then is to write the demand function $\sigma_j(\chi_t)$ for each $j$ only in terms of $x_{jt}$, $\xi_{jt}$, and $p_t$, element-wise for every $j$, and then identify $\xi_{jt}$ by inverting $\sigma_j$ as well as the index $\delta_j$ to get
\[\xi_{jt} = \delta_j^{-1}\left(\sigma_j^{-1}(s_t,p_t),x_{jt}\right),\]
requiring both $\sigma_j$ and $\delta_j$ to be strictly increasing and continuous functions in $\xi_{jt}$.

It is exactly here where we can apply the framework from section \ref{torgovitskysubsection}. In fact, the functional form assumption of strict monotonicity and continuity on the index $\delta_j$ and the demand function $\sigma_j$ is the standard assumption from \cite{matzkin2003nonparametric}, and simply serves as a tool in order to work with a univariate unobservable for every distribution. Using Theorem \ref{maintheorem2}, we are able to prove identification of the model without being forced to make any index restrictions, therefore complementing the result in \citet{berry2014identification}. 
We do so as follows.

Firstly, we allow for a general demand function $\sigma$ solving the demand problem \eqref{berryequation} over all products $j = 1,\ldots, J$ in the market \emph{simultaneously} and for all individuals $i$ in the continuum $M_t$, so that $s_{t}=\sigma(\chi_t).$ This is the main difference to the index restriction: \citet{berry2014identification} allow for multiple products like we do, but they only do so element-wise, i.e.~they treat every good separately. We on the other hand can allow for general interactions between the products.
The outside option is still fixed as above. Analogous to \citet{berry2014identification}, we assume that this maximization problem has a unique solution. Note that this is the uniqueness assumption we need to make $\sigma$ determinable. In addition we have to assume that $\sigma$ is invertible between $\xi_t$ and $s_t$, a condition which might be hard to satisfy in practice, but has been the standard assumption in this literature (see \citet{matzkin2007heterogeneous} for further discussion of this point). \citet{berry2014identification} require monotonicity and continuity in every element, which is a sufficient condition for invertibility of $\sigma$ and might be even harder to satisfy in practice. 

Measure preservation is a natural assumption for BLP models. Recall that $s_t$ is the vector of market shares of every product, which possesses a certain (conditional) probability distribution $P_{s_t|x_t,p_t}$. This distribution is just the distribution of choices of individuals $i$ in the market $t$, so that one can view each point in the support of $s_t$ conditional on $x_t$ and $p_t$, denoted by $\mathcal{S}_{x_t,p_t}$, as the \emph{purchase plan} of an individual $i$, determining the probability with which this individual is to buy which product $j$ in the market. Then this individual $i$ needs to have a certain \emph{evaluation} of the products $j=1,\ldots,J$, which is unobservable to the econometrician, i.e.~a distribution over the unobservables $x_{jt}$ analogous to the purchase plan of the individual; this can be thought of as giving for each product $j$ a probability of how ``important'' the respective unobservable $x_{jt}$ of the product is for the individual's choice. 

Since all individuals lie on a continuum, it makes more sense to talk about \emph{sets of individuals} instead of unique individuals. Therefore, every (Borel-) set $E\in \mathcal{S}_{x_t,p_t}$ of individuals with purchase plans $P_{s_t|x_t,p_t}(E)$ must have a corresponding set in $P_{\xi_t}$ which is of the same size, because all evaluations and purchase plans are based on the same set of individuals. But this is exactly the definition of a measure preserving demand function $\sigma$, i.e.~we require 
\begin{equation}\label{BLPmeasurepreserving}
P_{s_t|x_t,p_t}(E) = P_{\xi_t}\left(\sigma^{-1}(x_t,p_t,E)\right)\quad \text{for all}\quad E\in\mathscr{B}_{s_t|x_t,p_t}.
\end{equation}

Now, again, as in the example of Hedonic models, the need for Theorem \ref{maintheorem2} arises from the fact that the $p_t$ are endogenous. To the best of our knowledge, this provides the first instance where nonseparable triangular models can be used for identification of the BLP model. Those results were not possible previously, because they required that the second stage in those models be univariate for point-identification, as argued in \citet[p.~1754]{berry2014identification}. Providing complete identification of the BLP model therefore provides another instance proving how important a multivariate generalization of these identification results really is. 
In order to apply Theorem \ref{maintheorem2}, we need to model the first stage relationship
\begin{equation}\label{BLPfirststage}
p_t = h(z_t, u_t),
\end{equation}
where $z_t$ is a set of instruments which are excluded from the demand model, $u_t$ is a vector of unobservables with non distribution and of the same dimension as $p_t$ and $h$ is a measure preserving isomorphism, which we assume to be the gradient of a convex function transporting the distribution of $u_t$ onto the distribution of $p_t$ for all $t$. 

In our setting it is very natural to let $p_{jt}$ be multivariate for every $j$, hence letting $p_t$ be a matrix. All we have to do to make this work is to \emph{vectorize} the matrix $p_t$ by stacking each column onto one another, i.e.~identifying the matrix space $\mathbb{R}^{J_t\times K}$ with $\mathbb{R}^{J_t\cdot K}$, where $K$ is the number of columns for every $p_{jt}$. Note again, that we can allow for instruments $z_t$ which are discrete and lower dimensional than the endogenous variables $p_t$, allowing for binary policy changes. All the instruments need to satisfy is $z_t\perp (u_t,\xi_t)$ and that they have an influence on each element of $p_t$. 
Let us now state the identification result for the demand side.
%HIER WEITER!
The observables of the market are $(M_t,x_t,p_t,s_t,z_t)$. The demand side is modeled through \eqref{BLPmeasurepreserving} and \eqref{BLPfirststage}. Then the following holds.
\begin{proposition}[General identification of the demand side in the BLP model]\label{BLPdemandproposition}
In the case where the instruments $z_t$ are absolutely continuous, let the regularity assumptions hold as stated in Theorem \ref{maintheorem2abscont} in the appendix. In the case where the instruments $z_t$ are discrete, let Assumptions \ref{dimensionsass} -- \ref{supportass} hold for $p_t$, $z_t$, $u_t$, and $\xi_t$. Then the model \eqref{BLPmeasurepreserving} and \eqref{BLPfirststage} is identified in the sense that the identified set 
\[\mathbb{I}\coloneqq \{\sigma\in\mathcal{H}(S_{x_t,p_t},\xi_t):(\sigma^{-1}(x_t,p_t,s_t),u_t)\perp z_t\}\] contains an almost everywhere unique element $\sigma$. $\mathcal{H}(S_{x_t,p_t},\xi_t)$ is the set of all isomorphisms between $\xi_t$ and $s_t$ for exogenous $x_t$ and $p_t$. 
\end{proposition}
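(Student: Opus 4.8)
The plan is to derive Proposition \ref{BLPdemandproposition} as a direct corollary of Theorem \ref{maintheorem2} (discrete $z_t$) and its continuous counterpart Theorem \ref{maintheorem2abscont} (absolutely continuous $z_t$), so that the work is entirely in translating the BLP demand model into the notation of the triangular model \eqref{triangular} and then checking that the demand function inherits the structural properties demanded of the second stage. First I would fix the dictionary: conditioning throughout on the exogenous characteristics $x_t$ (which play the role of the suppressed covariates $W$), the demand relation $s_t=\sigma(x_t,p_t,\xi_t)$ is the second stage $Y=m(X,\varepsilon)$ with $s_t$ in the role of $Y$, the endogenous prices $p_t$ in the role of $X$, and the unobserved characteristics $\xi_t$ in the role of $\varepsilon$; the pricing equation \eqref{BLPfirststage}, $p_t=h(z_t,u_t)$, is the first stage $X=h(Z,U)$ with instruments $z_t$ in the role of $Z$ and $u_t$ in the role of $U$. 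Assumption \ref{dimensionsass} then holds because, after imposing the outside-good normalization $s_{0t}=1-\sum_{k=1}^{J_t}s_{kt}$, both $s_t$ and $\xi_t$ are $J_t$-dimensional ($d=J_t$), while the vectorized price matrix and $u_t$ are both $J_t\cdot K$-dimensional ($k=J_t\cdot K$). The remaining first-stage, support, and instrument conditions (Assumptions \ref{mpiso}, \ref{normalizationass}, \ref{abscont}, \ref{instrumentass}, \ref{supportepsilon}, \ref{supportass}) are imposed on $(p_t,z_t,u_t,\xi_t)$ by hypothesis and require no further argument; in particular $h$ is assumed to be the gradient of a convex function and $z_t\perp(u_t,\xi_t)$ is part of the model's exclusion restriction.

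The only genuinely model-specific step is to verify that $\sigma$ meets Assumption \ref{mpiso2}, i.e.\ that it is a \emph{determinable measure preserving isomorphism}. Measure preservation is precisely \eqref{BLPmeasurepreserving}: for every Borel set $E\subseteq\mathcal{S}_{x_t,p_t}$ of purchase plans, $P_{s_t|x_t,p_t}(E)=P_{\xi_t}(\sigma^{-1}(x_t,p_t,E))$, which is the defining identity of a measure preserving map. Invertibility of $\sigma$ in $\xi_t$ is assumed in the model, making it an isomorphism. For determinability in the sense of Definition \ref{uniquelyidentifiable} I would read the functional-form restriction as the statement that $\sigma$ is the demand system generated by the discrete-choice problem \eqref{berryequation}, and then use the assumed almost-sure uniqueness of $\argmax_{k\in\mathcal{J}}v_{ikt}$ to conclude that this restriction, intersected with the measure preserving isomorphisms between $P_{\xi_t}$ and $P_{s_t|x_t,p_t}$, contains exactly one element for each $(x_t,p_t)$; since $P_{\xi_t}$ is taken to be known (or fixed by the normalization described in the footnote to Assumption \ref{mpiso2}), this delivers determinability. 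This is the BLP analogue of the demand-function example given after Definition \ref{uniquelyidentifiable}.

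With the dictionary in place and $\sigma$ shown to satisfy Assumption \ref{mpiso2}, I would invoke Theorem \ref{maintheorem2} in the discrete case and Theorem \ref{maintheorem2abscont} in the continuous case to conclude that the identified set $\mathbb{I}=\{\sigma\in\mathcal{H}(S_{x_t,p_t},\xi_t):(\sigma^{-1}(x_t,p_t,s_t),u_t)\perp z_t\}$ contains an almost everywhere unique element, which is exactly the claim. The main obstacle I anticipate is the continuity requirement on $\sigma$ in $p_t$ built into Assumption \ref{mpiso2}: although discrete-choice market-share functions are typically smooth in prices, if full continuity cannot be guaranteed one must instead verify the weaker \emph{continuity in measure} of Assumption \ref{mpiso2prime} and fall back on the second conclusion of Theorem \ref{maintheorem2}, yielding identification for almost every $\xi_t$ rather than every $\xi_t$. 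A secondary point requiring care is the bookkeeping forced by the simplex constraint on the shares: one must confirm that, once $s_{0t}$ is eliminated, $\sigma$ is a measure preserving isomorphism between genuinely full-dimensional, absolutely continuous supports in $\mathbb{R}^{J_t}$, so that neither the dimension matching of Assumption \ref{dimensionsass} nor the absolute-continuity hypotheses are undermined by the degeneracy of $s_t$.
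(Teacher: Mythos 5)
Your proposal is correct and follows essentially the same route as the paper: the paper's own proof is the single observation that the result ``follows immediately from Theorem \ref{maintheorem2} or Theorem \ref{maintheorem2abscont},'' with the translation into the triangular framework (measure preservation via \eqref{BLPmeasurepreserving}, determinability from the assumed uniqueness and invertibility of the discrete-choice solution, the normalization of $P_{\xi_t}$, and the vectorization of $p_t$) carried out in the surrounding discussion exactly as in your dictionary. Your two closing caveats --- falling back on Assumption \ref{mpiso2prime} if only continuity in measure holds, and checking that eliminating $s_{0t}$ leaves full-dimensional, absolutely continuous supports --- are careful additions that the paper itself does not spell out, but they refine rather than alter the argument.
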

The proof of this proposition follows immediately from Theorem \ref{maintheorem2} or Theorem \ref{maintheorem2abscont}. Proposition \ref{BLPdemandproposition} therefore provides nonparametric identification for the demand side of the BLP model in the most general case, only requiring $\sigma$ to be a measure preserving isomorphism. Note that Assumption \ref{mpiso2} requires a normalization of the demand function. This can be done by assuming a multivariate uniform distribution for $\xi_t$, in the sense that all $\xi_{jt}$ are uniformly distributed, which is the analogue to the normalization in \citet{berry2014identification} who assume a univariate uniform distribution for every $\xi_{jt}$. In our case, one is actually free to model the dependency structure between the $\xi_{jt}$, however, i.e.~one is not required to assume that they are all independently distributed. As for $\sigma$ being a measure preserving isomorphism, this is satisfied as soon as the utility maximization problem has a unique and invertible solution.

The nice thing about Proposition \ref{BLPdemandproposition} is that the assumptions of a unique and measure preserving demand function $\sigma$ are natural and can be implied by the set-up of the model. Also notice how our approach allows for multivariate $p_t$ even from the set-up. The last interesting and also important thing to recall is that we can allow for instruments to be of lower dimension than the endogenous variables $p_t$. This is especially important in practice. In fact, it might often be the case that there is a dichotomous shock introduced into the model, possibly through a policy change, which can serve as an instrument.
If this policy change is truly independent and is such that it influences all $p_t$, then it alone can serve as a single instrument to identify the whole demand side of the model, under the restriction that Assumption \ref{supportass} on the supports of $P_{p_t|z_t=z}$ and $P_{p_t|z_t=z'}$ is satisfied; but this assumption can be checked in higher dimensions and simply be eyeballed in the case where $p_t$ is two-dimensional, an important special case. 

\paragraph{Supply side} Having identified the demand side, one can model the supply side in basically two ways. First, one can simply assume that one \emph{knows} the oligopolistic structure of the supply side in which case one can immediately deduce the vector of marginal costs $mc_t\coloneqq (mc_{1t},\ldots,mc_{J_tt})$ by
\begin{equation}\label{supplyside1}
mc_t = \psi(s_t,M_t,\sigma,p_t),
\end{equation} since all quantities on the right hand side are observed ($s_t$, $M_t$, $p_t$) or identified ($\sigma$). Note that \eqref{supplyside1} is more general than the function proposed in \citet{berry2014identification}, which is, again, only defined element-wise, i.e.~one $\psi_j$ for every product $j$, analogous to their element-wise definition of the demand function $\sigma_j$. In the case for known $\psi$, there is nothing to do from an econometric perspective, as one simply assumes away the problem of identifying the respective monopoly structure, i.e.~the function $\psi$. This can be warranted in some cases, where one has additional knowledge on the oligopoly structure. Based on this, one can identify the cost functions 
\begin{equation}\label{supplyside2}
mc_t = c(Q_t,w_t,\omega_t)
\end{equation}
with some instrument (i.e.~supply shifter) and Theorem \ref{maintheorem2} or Theorem \ref{maintheorem2abscont}.
Here, $w_t\coloneqq(w_{1t},\ldots,w_{J_tt})$ are observable and exogenous cost shifters, and $\omega_t\coloneqq(\omega_{1t},\ldots,\omega_{J_tt})$ are unobservable cost shifters. Let $\mathcal{J}_j$ denote the set of products produced by the firm producing product $j$. Let $q_{jt}=M_ts_{jt}$ be the quantity produced of good $j$ in equilibrium and let $Q_{jt}$ be the vector of quantities of all goods $k\in\mathcal{J}_j$. This setting is completely analogous to the demand side if we replace $w_t\equiv x_t$, $\omega_t\equiv\xi_t$, and $Q_t=p_t$. 
 
The important and more realistic way to model the supply side, however, is to allow for an unknown function $\psi$. Note that in this case, there are several approaches towards identification. In principle, there are four things to identify in the model: $mc_t$, $\psi$, $c$, and the unobservable shocks $\omega_t$. The approach in \citet{berry2014identification} is to combine \eqref{supplyside1} and \eqref{supplyside2} into one equation
\begin{equation}\label{supplysidecombined}
\omega_t = c^{-1}\left(Q_t,w_t,\psi(s_t,M_t,\sigma,p_t)\right)\coloneqq \pi^{-1}(p_t,M_t,s_t,w_t),
\end{equation}
eliminating $mc_t$ in the process. This approach enables them to identify the unobservable shocks $\omega_t$ as well as the function $\pi$ which incorporates both $c$ and $\psi$. Their approach consists of assuming that both $c$ and $\psi$ can be written element-wise, with each $c_j$ being linear in $w_{jt}$ and $\omega_{jt}$. We can do the same but much more generally with our approach, simply requiring $\psi$ and $c$ to be measure preserving isomorphisms. Then $\pi^{-1}$ must be a measure preserving isomorphism, too. Now realize that \eqref{supplysidecombined} is perfectly analogous to \eqref{BLPmeasurepreserving}. Therefore, we can apply Theorem \ref{maintheorem2} again in order to establish identification. We only need some instrument $z_t$ for the supply side. Then a first stage for the endogenous $p_t$ is 
\begin{equation}\label{firststagesupply}
p_t = h(z_t,u_t),
\end{equation}
where, again, $u_t$ is unobservable and of the same dimension as $p_t$ and $h$ is the gradient of a convex function, exactly as in \eqref{BLPfirststage}. 
Then we can state the following identification result for the supply side, which we only state in terms of Theorem \ref{maintheorem2}---the statement for Theorem \ref{maintheorem2abscont} is of course analogous.
\begin{proposition}\label{BLPsupplyproposition}
In the case where the instruments $z_t$ are discrete, let Assumptions \ref{dimensionsass} -- \ref{supportass} hold for $p_t$, $z_t$, $u_t$, and $\omega_t$. Then the model consisting of \eqref{supplysidecombined} and \eqref{firststagesupply} is identified in the sense that the identified set 
\[\mathbb{I}\coloneqq \{\pi\in\mathcal{H}(S_{w_t,M_t,p_t},\omega_t):(\pi^{-1}(w_t,M_t,p_t,s_t),u_t)\perp z_t\}\] contains an almost everywhere unique element $\pi$. $\mathcal{H}(S_{w_t,M_t,p_t},\omega_t)$ is the set of all isomorphisms between $\omega_t$ and $s_t$ for exogenous $w_t$ and $p_t$ and fixed $M_t$. 
\end{proposition}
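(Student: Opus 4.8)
The plan is to read the supply-side system \eqref{supplysidecombined}--\eqref{firststagesupply} as a particular instance of the general triangular model \eqref{triangular} and then invoke Theorem \ref{maintheorem2} under the relabelling already flagged in the text, namely $w_t \equiv x_t$, $\omega_t \equiv \xi_t$, and $Q_t \equiv p_t$. Concretely, I would use the dictionary $s_t \leftrightarrow Y$, $\omega_t \leftrightarrow \varepsilon$, $p_t \leftrightarrow X$, $z_t \leftrightarrow Z$, $u_t \leftrightarrow U$, $\pi \leftrightarrow m$, and $h \leftrightarrow h$, letting the exogenous arguments $(w_t, M_t)$ play the role of the exogenous covariates $W$ from the discussion following Assumption \ref{dimensionsass}, so that they are held fixed and suppressed throughout. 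Rewriting \eqref{supplysidecombined} in forward form as $s_t = \pi(p_t, M_t, w_t, \omega_t)$ then makes the second stage literally $Y = m(X, \varepsilon)$, and the dimension requirement of Assumption \ref{dimensionsass} holds because $s_t$, $\omega_t$, $p_t$, and $u_t$ each carry one coordinate per inside product $j = 1, \ldots, J_t$.

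The first substantive step is to verify that $\pi$ satisfies Assumption \ref{mpiso2} (or its weakening \ref{mpiso2prime}). For measure preservation I would use the hypotheses that $\psi$ and $c$ are measure-preserving isomorphisms: since $c^{-1}$ is then measure preserving and a composition of measure-preserving isomorphisms is again one, the map $\pi^{-1} = c^{-1}(Q_t, w_t, \psi(s_t, M_t, \sigma, p_t))$ transports $P_{s_t \mid w_t, M_t, p_t}$ onto $P_{\omega_t}$ measurably with measurable inverse, which is exactly the analogue of \eqref{BLPmeasurepreserving} with $\omega_t$ in place of $\xi_t$. For determinability I would combine the assumed uniqueness of the oligopoly map $\psi$ --- which is a known function of the already-identified demand $\sigma$ --- with the functional-form restriction placed on the cost function $c$; their combination pins down a unique element of the class of measure-preserving isomorphisms between $P_{s_t \mid w_t, M_t, p_t}$ and $P_{\omega_t}$ for each value of the conditioning variables, which is precisely Definition \ref{uniquelyidentifiable}. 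The normalization of $P_{\omega_t}$ to a known (e.g.\ multivariate uniform) law supplies the ``$P_\varepsilon$ known'' clause. I expect this to be the one genuinely non-mechanical point, because \eqref{supplysidecombined} identifies only the composite $c^{-1} \circ \psi$ and not $c$ and $\psi$ separately; the care needed is to check that eliminating the intermediate variable $mc_t$, i.e.\ composing the two isomorphisms through $mc_t$, leaves a single measure-preserving isomorphism whose functional-form class still meets the measure-preserving isomorphisms in exactly one element.

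The remaining ingredients follow directly from the standing hypothesis that Assumptions \ref{dimensionsass}--\ref{supportass} hold for $(p_t, z_t, u_t, \omega_t)$. The first stage \eqref{firststagesupply} satisfies Assumptions \ref{mpiso} and \ref{normalizationass} because $h$ is assumed to be the gradient of a convex function transporting an absolutely continuous $u_t$ onto $p_t$ and normalized to the identity at some known $\bar z$; instrument validity $z_t \perp (u_t, \omega_t)$ together with relevance is Assumption \ref{instrumentass}; and the absolute-continuity, convex-support, strict-monotonicity, large-support, and intersection-manifold requirements are Assumptions \ref{abscont}, \ref{supportepsilon}, and \ref{supportass} imposed on the conditional laws $P_{p_t \mid z_t}$. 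With all of Assumptions \ref{dimensionsass}--\ref{supportass} in force for the relabelled system, Theorem \ref{maintheorem2} applies and delivers that $\pi$ is identified for almost every $p_t$, i.e.\ that the identified set $\mathbb{I}$ contains an almost-everywhere unique element, which is the assertion of the proposition.
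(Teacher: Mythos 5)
Your proposal is correct and takes essentially the same route as the paper, whose entire proof is the one-line observation that the result ``follows immediately from Theorem \ref{maintheorem2}'' (plus the normalization remark), i.e.\ exactly your relabelling $s_t \leftrightarrow Y$, $\omega_t \leftrightarrow \varepsilon$, $p_t \leftrightarrow X$, $z_t \leftrightarrow Z$, $u_t \leftrightarrow U$, $\pi \leftrightarrow m$ with $(w_t,M_t)$ as conditioned-on exogenous covariates. Your extra verification that $\pi^{-1}=c^{-1}\circ\psi$ is a measure-preserving isomorphism and determinable mirrors the paper's pre-proposition discussion rather than its proof, since the proposition's hypotheses (Assumptions \ref{dimensionsass}--\ref{supportass} for $p_t$, $z_t$, $u_t$, $\omega_t$) already posit these properties directly.
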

The proof again follows immediately from Theorem \ref{maintheorem2}. Again, one has to make a normalization assumption, but assuming that $mc_t$ is multivariate uniform is not a strong restriction. 

Now, in many cases one is actually interested in identifying both functions $c$ and $\psi$ \emph{separately}, because $\psi$ gives information about the oligopoly structure of the market. This can again be done in our setting if one has proper instruments for both equations, which in general is not a strong requirement, because one can use standard exogenous shifters as argued in \citet{berry2014identification}.
The idea is to use our identification approach first on \eqref{supplyside1} and then on \eqref{supplyside2} once we have identified $mc_t$. Let us assume that there are appropriate sets of shifters (i.e.~instruments) $z_t^1$ and $z_t^2$ for \eqref{supplyside1} and \eqref{supplyside2}, respectively. Note that we do admit the likely case $z^1_t=z^2_t$. We also need two first stage equations:
\begin{equation}\label{supplyside1first}
p_t = h_1(z^1_t,u^1_t),
\end{equation}
and
\begin{equation}\label{supplyside2first}
Q_t = h_2(z^2_t,u^2_t),
\end{equation}
for unobservables $u_t^1$ and $u_t^2$ with the same dimension as $p_t$ and $Q_t$, respectively, and $h_1$ and $h_2$ are gradients of convex functions. Since $Q_t$ is a matrix, we rely on the simple trick of writing it as a vector, stacking the columns upon one another, as mentioned. 
 The main identification result for the supply side is then as follows. Again, we only state the result for Theorem \ref{maintheorem2}, but the result for Theorem \ref{maintheorem2abscont} is perfectly analogous.
\begin{proposition}[General identification of the supply side of the BLP model]\label{BLPsupplyproposition}
In the case where the instruments $z^1_t$ and $z^2_t$ are discrete, let the first stages \eqref{supplyside1first} and \eqref{supplyside2first} satisfy Assumptions \ref{dimensionsass} -- \ref{normalizationass}, let $c$ and $\psi$ satisfy Assumption \ref{mpiso2}, and let Assumptions \ref{instrumentass} as well as \ref{abscont} -- \ref{supportass} hold for $p_t$, $Q_t$, $z^1_t$, $z^2_t$, $u^1_t$, $u_t^2$, $mc_t$, and $\omega_t$. Then the two models consisting of \eqref{supplyside1} and \eqref{supplyside1first} as well as \eqref{supplyside2} and \eqref{supplyside2first} are identified in the sense that the identified sets
\[\mathbb{I}_1\coloneqq \{\psi^{-1}\in\mathcal{H}(S_{w_t,M_t,p_t},mc_t):(\psi(\sigma,M_t,p_t,s_t),u_t^1)\perp z^1_t\}\] 
and 
\[\mathbb{I}_2\coloneqq \{c\in\mathcal{H}(S_{w_t,M_t,Q_t},\omega_t):(c^{-1}(w_t,M_t,Q_t,s_t),u_t^2)\perp z^2_t\}\] 
contain almost everywhere unique elements $\psi$ and $c$, respectively. As before, $\mathcal{H}(S_{w_t,M_t,p_t},mc_t)$ and $\mathcal{H}(S_{w_t,M_t,Q_t},\omega_t)$ are the respective sets of isomorphisms. 
\end{proposition}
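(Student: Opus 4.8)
The plan is to prove the proposition by applying Theorem \ref{maintheorem2} \emph{twice in sequence}: once to the pair \eqref{supplyside1}--\eqref{supplyside1first}, which recovers $\psi$ (equivalently $\psi^{-1}$) together with the latent marginal costs $mc_t$, and then once to the pair \eqref{supplyside2}--\eqref{supplyside2first}, which recovers $c$ while treating the marginal costs identified in the first step as the observed outcome. The entire content of the proof is thus the verification that each sub-model is an instance of the triangular model \eqref{triangular} satisfying Assumptions \ref{dimensionsass}--\ref{supportass}; once this dictionary is in place, identification of each identified set is immediate from the main theorem.

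First I would read \eqref{supplyside1} as the second stage of \eqref{triangular} under the assignment $Y\equiv s_t$ (observed), $X\equiv p_t$ (observed, endogenous), $Z\equiv z^1_t$ (instrument), second-stage unobservable $\varepsilon\equiv mc_t$, and first-stage unobservable $U\equiv u^1_t$; the second-stage production function is $m\equiv\psi^{-1}$, mapping $mc_t$ to $s_t$ through $s_t=\psi^{-1}(mc_t,M_t,\sigma,p_t)$, and the first stage is $h\equiv h_1$ from \eqref{supplyside1first}. With this dictionary the exclusion condition appearing in $\mathbb{I}_1$ is exactly the independence part of Assumption \ref{instrumentass} for $(\varepsilon,U)\perp Z$; the hypotheses imposed on \eqref{supplyside1first} supply the first-stage Assumptions \ref{dimensionsass}--\ref{normalizationass}; Assumptions \ref{abscont}--\ref{supportass} are assumed for $p_t,z^1_t,u^1_t$; and the hypothesis that $\psi$ satisfies Assumption \ref{mpiso2} furnishes determinability, measure preservation, and a known $P_{mc_t}$ via the multivariate uniform normalization. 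Theorem \ref{maintheorem2} then delivers an almost everywhere unique $\psi^{-1}\in\mathbb{I}_1$; since $Y$ and $X$ are observed, inverting recovers $mc_t=\psi(s_t,M_t,\sigma,p_t)$ as a measurable function of the observables.

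Once $mc_t$ is pinned down as a measurable function of the observables, its joint law with $(Q_t,z^2_t)$ is identified, so I can treat $mc_t$ as the observed outcome of a second triangular model built from \eqref{supplyside2}--\eqref{supplyside2first}, with $Y\equiv mc_t$, $X\equiv Q_t$, $Z\equiv z^2_t$, $\varepsilon\equiv\omega_t$, $U\equiv u^2_t$, second-stage map $m\equiv c$, and first stage $h\equiv h_2$. Again the hypotheses on \eqref{supplyside2first} and on $c$ furnish Assumptions \ref{dimensionsass}--\ref{supportass}, the independence condition displayed in $\mathbb{I}_2$ is the independence part of Assumption \ref{instrumentass} for $(\omega_t,u^2_t)\perp z^2_t$, and a normalization on $\omega_t$ supplies the known $P_\varepsilon$. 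Theorem \ref{maintheorem2} then yields an almost everywhere unique $c\in\mathbb{I}_2$. Because the case $z^1_t=z^2_t$ is admitted and the two first stages are otherwise separate, there is no circularity: the only object passed from the first application to the second is the identified distribution of $mc_t$.

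The step that requires the most care, and the only place where this is more than a relabelling of Theorem \ref{maintheorem2}, is the hand-off between the two applications. I would need to argue that identifying $\psi^{-1}$ only $p_t$-almost everywhere (and $mc_t$ only up to the step-one normalization of $P_{mc_t}$) still makes the recovered $mc_t$ a legitimate observed outcome for the second application: concretely, that $mc_t$ is determined on a set of full measure, so that $P_{mc_t\mid Q_t,z^2_t}$ is identified and the absolute-continuity and support/manifold conditions of Assumptions \ref{abscont} and \ref{supportass} can genuinely be checked for the second first stage \eqref{supplyside2first}. The remaining verifications, namely that $\psi^{-1}$ and $c$ are measure-preserving isomorphisms, which as on the demand side follows from uniqueness and invertibility of the underlying markup/cost optimization, and that the two normalizations on $mc_t$ and $\omega_t$ are mutually consistent, are routine given the corresponding arguments for Proposition \ref{BLPdemandproposition}.
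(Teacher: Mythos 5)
Your proposal follows essentially the same route as the paper's own proof: a sequential double application of Theorem \ref{maintheorem2}, first to \eqref{supplyside1}--\eqref{supplyside1first} to recover $\psi$ and $mc_t$ (relying on the demand-side identification of $\sigma$ from Proposition \ref{BLPdemandproposition}), then to \eqref{supplyside2}--\eqref{supplyside2first} with the identified $mc_t$ serving as the observed outcome. Your write-up is in fact more explicit than the paper's very terse argument, notably in spelling out the dictionary between the supply-side equations and the triangular model \eqref{triangular} and in flagging that almost-everywhere identification of $mc_t$ is enough for the hand-off to the second application.
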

\begin{proof}
First, one needs to identify $mc_t$ and $\psi$ in \eqref{supplyside1} and \eqref{supplyside1first}. This follows immediately from Theorem \ref{maintheorem2} and the fact that $\sigma$ has already been established to be identified on the demand side by Proposition \ref{BLPdemandproposition}. Then once $mc_t$ is identified, one can turn to the identification of $c$ and $\omega_t$ in  \eqref{supplyside2} and \eqref{supplyside2first}, which also follows immediately from Theorem \ref{maintheorem2} and the fact that $mc_t$ has already been identified.
\end{proof}
Propositions \ref{BLPdemandproposition} and \ref{BLPsupplyproposition} together establish general identification of the BLP model, without the need to make index restriction assumptions  and hence allowing for the most general set-up. The main assumptions, in addition to regularity assumptions like continuity and invertibility, guaranteeing this result are measure preservation and uniqueness of the demand function $\sigma$ as well as the cost functions $\psi$ and $c$. Both are very natural and fundamental, being a direct result of the model set-up. As a result of the endogeneity of quantity and price, we rewrote both sides of the model as a nonseparable triangular system and applied our two results from the previous section to guarantee point identification of the respective functions. There are certainly other ways for identification of this model than using Theorem \ref{maintheorem2} or Theorem \ref{maintheorem2abscont}, but both are very general and it is rather unlikely that even more general theorems will hold for this setting. Note that this application was just an outline that a different identification result to \citet{berry2014identification} holds, where one does not have to make index restriction assumptions or assume that the demand function is element-wise strictly increasing and continuous, but can allow for more general results. The discussion in this section was very high level. A next step from here is to find appropriate low level assumptions on the demand function (other than monotonicity) which imples that it is measure-preserving and invertible. Those should come from economic theory. 
Let us now turn to the main application of our main result where we prove nonparametric identification of Hedonic models with multivariate unobservable heterogeneity and endogenous characteristics.

\subsection{Hedonic models with endogenous characteristics}
Theorem \ref{maintheorem2} enables us to prove nonparametric identification of general Hedonic models with multivariate unobservable heterogeneity and endogenous observable characteristics in a multi-market setting, the main result of this whole section. 

Seminal identification results in \citet{ekeland2004identification} and \citet{heckman2010nonparametric} have focused on single-market Hedonic models with univariate unobservable heterogeneity and stronger functional form assumptions on the utility function. The recent working paper \citet{chernozhukov2014single} extended these results to single-market Hedonic models with multivariate unobservable heterogeneity using optimal transport theory. We complement these results in this subsection by for the first time allowing for endogenous observable characteristics in a multi-market setting under multivariate unobservables. 

We adapt the notation in \citet{chernozhukov2014single} slightly to make it compatible with the notation from the previous section. We consider an environment where consumers and producers trade a good or contract which is fully characterized by its type or quality $y\in\mathcal{Y}\subseteq\mathbb{R}^d$. Its price, $p(y)$ is determined endogenously in equilibrium. Producers $\Wtilde$ and consumers $\Xtilde$ are characterized by their respective types, $\wtilde\in\widetilde{\mathcal{W}}\subseteq\mathbb{R}^{m+d}$ and $\xtilde\in\widetilde{\mathcal{X}}\subseteq\mathbb{R}^{k+d}$. They are price takers and maximize quasi-linear utility $U(\xtilde,y)-p(y)$ and profit $p(y)-C(\wtilde,y)$, respectively, where $U$ is upper semicontinuous and bounded and $C$ is lower semicontinuous and bounded. Both are normalized to zero in case of nonparticipation.
We use the following equilibrium concept.
\begin{assumption}[Equilibrium concept]\label{equilibriumconcept}
The pair $(\gamma,p)$, for $\gamma$ a probability measure on $\widetilde{\mathcal{X}}\times\mathcal{Y}\times\widetilde{\mathcal{W}}$ and $p$ a function on $\mathcal{Y}$, is a hedonic equilibrium in the sense that $\gamma$ has marginals $P_{\Xtilde}$ and $P_{\Wtilde}$ and for $\gamma$-almost all $(\xtilde,y,\wtilde)$
\begin{align*}
U(\xtilde,y)-p(y) = \max_{y'\in\mathcal{Y}}(U(\xtilde,y')-p(y)),\\
p(y)-C(\wtilde,y) = \max_{y'\in\mathcal{Y}}(p(y)-C(\wtilde,y')).
\end{align*}
In addition, observed qualities $y=y(\xtilde,\wtilde)$, which maximize the joint surplus $U(\xtilde,y)-C(\wtilde,y)$ for all $(\xtilde,\wtilde)\in\widetilde{\mathcal{X}}\times\widetilde{\mathcal{W}}$, lie in the interior of $\mathcal{Y}$.
\end{assumption}
For absolutely continuous distributions $P_{\Xtilde}$ and $P_{\Wtilde}$ \citet{ekeland2010existence} and \citet{chiappori2010hedonic} show that a pure equilibrium exists and is unique under the twist condition from optimal transport \citep[p.~216]{villani2008optimal} which requires that the gradients $\nabla_{\xtilde}U(\xtilde,y)\coloneqq\frac{\partial}{\partial \xtilde} U(\xtilde,y)$ and $\nabla_{\wtilde} C(\wtilde,y)\coloneqq\frac{\partial}{\partial \wtilde} C(\wtilde,y)$ exist and are injective as functions of quality $y$. 
To obtain an estimable model, unobservable heterogeneity is introduced in the following way.\footnote{We only focus on the consumer's side throughout, because identification of the supplier's side is analogous.}
\begin{assumption}[Unobservable heterogeneity]\label{unobservedheterogeneity}
The observable type $\xtilde$ consists of an observable portion $x\in\mathbb{R}^k$ as well as a random unobservable portion $\varepsilon\in\mathbb{R}^d$, i.e.~$\xtilde=(x,\varepsilon)$.\footnote{Recall that $\varepsilon$ is assumed to be of the same dimension as $Y$.} Furthermore, utility $U(\xtilde,y)$ can be decomposed as $U(\xtilde,y) = \bar{U}(x,y) + \xi(x,\varepsilon,y)$.
\end{assumption}
In this setting, the object of interest for identification is the deterministic component of the utility function, $\bar{U}(x,y)$. Denote $V(x,y) = p(y)-\bar{U}(x,y).$
Here is where \citet{chernozhukov2014single} need to make the rather strong assumption that the observable characteristics $X$ are exogenous, i.e.~$X\independent\varepsilon$. The idea for identification then proceeds roughly as follows. The Monge-Kantorovich problem leads to natural functional form restrictions in this setting as the consumer's optimization program is to choose $y$ such that
\begin{equation}\label{generalconjugate}
V^\gamma=\sup_y\left\{\xi(x,\varepsilon,y)-V(x,y)\right\}.\footnote{Note that this optimization problem is the definition of the convex conjugate \citep[\S 12]{rockafellar1997convex} if $\xi(x,\varepsilon,y) = y'\varepsilon$, as in this case
$V^*(x,y) = \sup_{y\in\mathcal{Y}}\left\{y'\varepsilon-V(x,y)\right\},$ where $V^*(x,y)$ is the convex conjugate of $V(x,y)$.}
\end{equation}

Identification of $\bar{U}(x,y)$ requires identification of $V(x,y)$; to identify the latter one can use two approaches. One way is to show that the pair $(V(x,y),V^\gamma(x,y))$ uniquely solves the dual problem of the Monge-Kantorovich problem under the cost function $\xi(x,\varepsilon,y)$ in the general case under some regularity assumptions. The second route to identify $V(x,y)$ is via the optimal planner's problem, which takes the form of the Monge-Kantorovich problem with general cost function $\xi(x,\varepsilon,y)$. In the special case $\xi(x,\varepsilon,y)=y'\varepsilon$ considered in \citet{ekeland2004identification} for example, the cost function for the Monge-Kantorovich problem is the squared Euclidean norm, in which case the inverse demand function $m^{-1}(x,y)$ to take the form of the Brenier map as the determinable measure preserving map between $P_{Y|X}$ and $\varepsilon$ under regularity assumptions. To root this set-up in our notation from above, note that $m^{-1}(x,y)$ is the inverse of the measure preserving map $m:\mathcal{E}\to\mathcal{Y}_x$ from the setting in section \ref{mainidentsection}.

Allowing for general $\xi(x,\varepsilon,y)$ is a generalization of the concept of convex duality, leading to \eqref{generalconjugate}, see \citet[Chapter 2.4]{villani2003topics}. \eqref{generalconjugate} possesses a dual problem of the form
\begin{equation}\label{generalconjugate2}
(V^\gamma)^\gamma=\sup_\varepsilon\left\{\xi(x,\varepsilon,y)-V^\gamma(x,y)\right\},
\end{equation}
analogous to convex duality. For standard convex duality, if $V$ is convex and proper, it holds that $(V^*)^*=V$ \citep[\S12]{rockafellar1997convex}. Therefore, if the solution $(V^\gamma)^\gamma$ to the dual problem \eqref{generalconjugate2} coincides with $V$, we say that $V$ is $\gamma$-convex. This idea is also used in \citet{chernozhukov2014single} in the following assumption, which we also require.
\begin{assumption}\label{gammaconvex}
$V(x,y)$ is $\gamma$-convex.
\end{assumption}

Under some further regularity assumptions \citet{chernozhukov2014single} then use the fact that the Monge-Kantorovich problem has a unique measure preserving solution, so that $m^{-1}(x,y)$ is determinable in the sense of Definition \ref{uniquelyidentifiable}. Note that it is the assumption of exogenous $X$ which enables them to identify $m^{-1}(x,y)$, because they use the functional form restrictions imposed by the Monge-Kantorovich on $m^{-1}(x,y)$ to identify it, generalizing the identification result from \citet{matzkin2003nonparametric} in the process. They cannot allow for endogenous $X$, however, and explicitly leave open the problem of identifying it. This is where we can apply Theorem \ref{maintheorem2}.

For the endogenous case instruments in the form of exogenous shifters are required. One very convenient setting exists if the researcher has access to data in several, i.e.~at least two, disjoint markets $z$ and $z'$. This is where Theorem \ref{maintheorem2} shines, because one can consider the two markets as realizations of a binary instrument $Z$ under the assumption that $\varepsilon$ does not change between markets, i.e.~$Z \independent \varepsilon$. Then if the distribution of $X$ changes between the two markets, $Z$ is an instrument for $X$, as needed for Theorem \ref{maintheorem2}. Formally, we have to include a first stage $X=h(Z,\eta)$, where $h$ satisfies Assumptions \ref{mpiso} and \ref{normalizationass} for every market $z_i$ and the unobservable $\eta$ is of the same dimension as $X$.

\begin{assumption}[Multiple markets as a discrete instrument]\label{marketsassumption}
The distribution of the unobservable $\varepsilon$ does not change between different markets $z$, $z'$, but the distribution of $X$ does.
\end{assumption}

As before, we assume that $F_\eta$ is known and $h$ is the gradient of a convex function, as done in the proof of Theorem \ref{maintheorem2}.  
Now in order to apply Theorem \ref{maintheorem2}, we need to not only guarantee that $m^{-1}(x,y)$ is the determinable measure preserving map between $P_{Y|X=x,Z=z_i}$ and $P_{\varepsilon|X=x,Z=z_i}$, but also that it is continuous in all $x$. It turns out that under reasonably weak regularity assumptions we cannot guarantee this strong form of continuity; we can guarantee continuity in measure, however, so that we have to use Assumption \ref{mpiso2prime}. 

\begin{assumption}[Regularity assumptions]\label{regularityass} The following hold:
\begin{itemize}
\item[(i)] $U(\tilde{x},y)$ and $C(\tilde{y},y)$ are differentiable with respect to $x$ and $w$, respectively, for all $\tilde{x}$ and $\wtilde$.
\item[(ii)] $\xi(x,\varepsilon,y)$ is continuously differentiable with respect to $y$ for all $x$, $\varepsilon$, and $y$.
\item[(iii)] $\det\left(\nabla_{y\varepsilon}^2\xi(x,\varepsilon,y)\right)\neq0$ for all $x,\varepsilon,y$, where $\nabla_{y\varepsilon}^2\equiv (\partial^2\xi / \partial\varepsilon_i\partial y_j)_{ij}$ denotes the Hessian.
\item[(iv)] Twist condition: For all $x$ and $y$, the gradient $\nabla_y\xi(x,\varepsilon,y)$ of $\xi(x,\varepsilon,y)$ in $y$ is injective as a function of $\varepsilon$.
\item[(v)] $h$ is the gradient of a convex function transporting $\eta$ onto $X$.
\item[(vi)] $P_{\varepsilon}$ is known.
\item[(vii)] Assumptions \ref{dimensionsass} -- \ref{supportass} hold for this model.
\end{itemize}
\end{assumption}
Parts (i) -- (iv) of Assumption \ref{regularityass} guarantee existence and uniqueness of an optimal transport map between $P_{Y|X=x,Z=z_i}$ and $P_{\varepsilon|X=x,Z=z_i}$ for all $x$ and $z_i$ (see e.g.~Chapter 10 in \citeauthor{villani2008optimal} \citeyear{villani2008optimal}) and are the same as in \citet{chernozhukov2014single}. Parts (v) -- (vii) are the assumptions we need to require for Theorem \ref{maintheorem2}. Overall, our regularity assumptions are only slightly more demanding than the ones of \citet{chernozhukov2014single} in the exogenous case, but we can allow for endogenous characteristics. 
Of course, we need to require that $F_{X|Z=z_i}$ satisfy Assumption \ref{supportass}, which requires that their supports are convex, do not change between markets, and admit an appropriate manifold $\mathcal{M}(z_i,z_j)$. This leads to the following result.

\begin{theorem}[Identification of utility functions]\label{utilityidentprop}
In the Hedonic model defined in Assumptions \ref{equilibriumconcept} and \ref{unobservedheterogeneity} assume $\varepsilon\not\independent X$. Let the researcher have access to data in at least two disjoint markets $z$ and $z'$ and assume a first stage of the form $X=h(Z,\eta)$ for $\eta,X\in\mathbb{R}^k$. Furthermore, let Assumptions \ref{gammaconvex}, \ref{marketsassumption}, and \ref{regularityass} hold. Then the utility function $\bar{U}(x,y)$ is identified for almost every $(x,y)\in\mathcal{X}\times\mathcal{Y}$ up to an additive constant.
\end{theorem}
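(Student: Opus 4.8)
The plan is to recast the endogenous Hedonic model as an instance of the triangular model \eqref{triangular} and to identify its second stage with Theorem \ref{maintheorem2}, letting the optimal-transport structure of the consumer's program supply the determinable measure-preserving isomorphism. Write $m(x,\cdot):\mathcal{E}\to\mathcal{Y}_x$ for the demand map sending an unobservable type $\varepsilon$ to the quality it purchases, so that $Y=m(X,\varepsilon)$, and write $m^{-1}(x,\cdot)$ for the inverse demand map. The three tasks are then: (a) exhibit $m(x,\cdot)$ as a determinable measure-preserving isomorphism between $P_{\varepsilon|X=x,Z=z_i}$ and $P_{Y|X=x,Z=z_i}$; (b) discharge the hypotheses of Theorem \ref{maintheorem2} so that $m$ is identified despite $\varepsilon\not\independent X$; and (c) pass from the identified $m^{-1}$ to $\bar U$.

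For (a), differentiating the consumer's program \eqref{generalconjugate} yields the first-order condition $\nabla_y\xi(x,\varepsilon,y)=\nabla_y V(x,y)$ at the optimal quality $y=m(x,\varepsilon)$. The Hessian nondegeneracy of Assumption \ref{regularityass}(iii) together with the twist condition \ref{regularityass}(iv) make this relation globally invertible in $(\varepsilon,y)$, so $m(x,\cdot)$ is a well-defined bijection with inverse $m^{-1}(x,\cdot)$. By the standard Monge--Kantorovich existence-and-uniqueness theory under Assumption \ref{regularityass}(ii)--(iv) (e.g.~Chapter 10 in \citeauthor{villani2008optimal} \citeyear{villani2008optimal}), $m^{-1}(x,\cdot)$ is the \emph{unique} measure-preserving map associated with the cost $\xi(x,\cdot,\cdot)$ transporting $P_{Y|X=x,Z=z_i}$ onto $P_{\varepsilon|X=x,Z=z_i}$; combined with the $\gamma$-convexity of $V$ (Assumption \ref{gammaconvex}) this is exactly the statement that $m(x,\cdot)$ is determinable in the sense of Definition \ref{uniquelyidentifiable}, for every $x$ and $z_i$.

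For (b), Assumption \ref{regularityass}(v)--(vii) delivers the first stage $X=h(Z,\eta)$ as a gradient of a convex function with known $F_\eta$ (Assumptions \ref{mpiso} and \ref{normalizationass}) and the support requirements \ref{dimensionsass}--\ref{supportass}, while Assumption \ref{marketsassumption} gives $Z\independent\varepsilon$ and the variation of $F_{X|Z}$ needed for Assumption \ref{instrumentass}. The one property that cannot simply be read off is full continuity of $x\mapsto m(x,\cdot)$: as $x_n\to x$ both the target law $P_{Y|X=x_n,Z=z_i}$ and the cost $\xi(x_n,\cdot,\cdot)$ move, and the stability theory for optimal maps returns convergence of $m(x_n,\cdot)$ to $m(x,\cdot)$ only \emph{in measure}. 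Establishing this convergence-in-measure, i.e.~verifying Definition \ref{continprop}, is the main obstacle of the proof, and it is precisely why Assumption \ref{mpiso2prime} must be used in place of \ref{mpiso2}. Granting it, all hypotheses of Theorem \ref{maintheorem2} under \ref{mpiso2prime} hold, and $m(x,\cdot)$, hence $m^{-1}(x,\cdot)$, is identified for almost every $x$ and almost every $\varepsilon$.

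For (c), evaluate the first-order condition at $\varepsilon=m^{-1}(x,y)$ to obtain
\begin{equation*}
\nabla_y V(x,y)=\nabla_y\xi\bigl(x,m^{-1}(x,y),y\bigr).
\end{equation*}
The right-hand side is known, since $\xi$ is a specified functional form and $m^{-1}$ has just been identified, so $\nabla_y V(x,\cdot)$ is identified; integrating in $y$ recovers $V(x,\cdot)$, and $\gamma$-convexity (uniqueness of the Kantorovich potential on its connected support) pins it down up to an additive constant. Because the price schedule $p(y)$ is observed and $V(x,y)=p(y)-\bar U(x,y)$, the deterministic utility $\bar U(x,y)=p(y)-V(x,y)$ is identified for almost every $(x,y)$ up to that additive constant, which is the claim. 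Identification of the supplier's side follows by the same argument after the substitutions $w\equiv x$ and $\omega\equiv\varepsilon$.
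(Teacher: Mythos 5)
Your proposal is correct and follows essentially the same route as the paper's proof: determinability of $m^{-1}(x,\cdot)$ from Monge--Kantorovich uniqueness under Assumption \ref{regularityass}(i)--(iv), application of Theorem \ref{maintheorem2} with Assumption \ref{mpiso2prime} in place of \ref{mpiso2}, and recovery of $\bar U$ from the first-order condition $\nabla_y\xi\bigl(x,m^{-1}(x,y),y\bigr)=\nabla p(y)-\nabla_y\bar U(x,y)$ using the differentiability of $V$ guaranteed by $\gamma$-convexity. The one step you ``grant'' rather than prove --- continuity in measure of $x\mapsto m^{-1}(x,\cdot)$ --- is exactly the content of the paper's Lemma \ref{continuitylemma}, which requires two ingredients beyond what a bare appeal to ``stability theory'' provides: weak convergence of the conditional laws $P_{Y|X=x_n,Z=z_i}\Rightarrow P_{Y|X=x,Z=z_i}$ along $x_n\to x$ (obtained from the continuous-disintegration/Tjur property of absolutely continuous distributions) and a strengthening of Villani's Corollary 5.23 (the paper's Lemma \ref{villanistrengthening}) that allows \emph{both} marginals to drift with $x$ under a domination condition, since the textbook stability result holds only for a fixed source measure.
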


The idea for the proof of Theorem \ref{utilityidentprop} is to apply Theorem \ref{maintheorem2}. As stated, the first four parts of Assumption \ref{regularityass} guarantee existence and uniqueness of the optimal transport map $m^{-1}(x,y)$ as proved in \citet{chernozhukov2014single}, which is the determinable measure preserving map between $P_{Y|X=x,Z={z_i}}$ and $P_{\varepsilon|X=x,Z=z_i}$ required by Theorem \ref{maintheorem2}. The additional requirement that this measure preserving map be continuous in measure for every $x$ is guaranteed by the existence of a continuous disintegration and a slight generalization of stability results for the Monge-Kantorovich problem \citep[Chapter 5]{villani2008optimal}. The other regularity assumptions then guarantee that $m^{-1}(x,y)$ can be identified for almost every $y$ and $x$ through Theorem \ref{maintheorem2}. Assumption \ref{gammaconvex} is needed to guarantee differentiability of $V(x,y)$, which in turn leads to the identification of $\bar{U}(x,y)$ for almost every $x$ and $y$ through a first order condition since $m^{-1}(x,y)$ is identified for almost every $x$ and $y$. The detailed proof is in the appendix.

Theorem \ref{utilityidentprop} is the first result in the literature to use data from multiple markets to identify Hedonic models with multivariate unobservable and endogenous characteristics; it therefore answers the open question stated in \citet{chernozhukov2014single} who ask under which conditions one can identify Hedonic models with multivariate unobservables that are not independent of the observables. 

\section{Conclusion}\label{conclusion}
In this article we have proposed a framework for nonparametric point-identification of nonseparable triangular models with a multivariate first- and second stage.
The main result is a direct generalization of the seminal results from \citet{torgovitsky2015identification} and \citet{d2015identification} for point-identification of nonseparable triangular models with discrete instruments. In particular, we derive primitive conditions on the data generating process under which a nonseparable triangular model with a multivariate first and second stage is identified.

This result is widely applicable. In fact, we use it to derive assumptions under which both the supply and the demand side of the BLP model are nonparametrically identified, even under general heterogeneity. Previously, one had to uphold index restrictions \citep{berry2014identification}. As a second  and main application, we prove the first nonparametric identification result for Hedonic models with endogenous characteristics and multivariate heterogeneity, treating different markets as realizations of a discrete instruments. In particular, this answers an open question in \citet{chernozhukov2014single} showing when Hedonic models with general heterogeneity are identified under endogeneity. Other possible applications not addressed in this article are to competing risk models \citep{lee2013nonparametric} or generalized random coefficient models \citep{lewbel2017unobserved}. 

We were able to obtain the theoretical result, because we use and derive some apparently new results in the theory of optimal transport, in particular, we derive some apparently new results about properties of transport maps which take the form of gradients of convex functions, the arguably most natural generalization of a strictly increasing and continuous function to the multivariate setting. In particular, we prove a new result on their dynamics between two absolutely continuous measures whose supports coincide, and defining a criterion for the existence of a fixed set of measure zero. This result builds the heart of our third main theoretical result, but is also of interest in itself as it for instance can be applied to provide a new explanation for how equilibria are formed in General Equilibrium theory. 

Finally, our main theoretical identification result is non-constructive as it is a direct generalization of the seminal result of \citet{torgovitsky2015identification}. There are ways to use this identification result for semi-parametric identification, but a fully nonparametric approach is still lacking. The next important step will hence be to find slightly stronger assumptions than the current ones which would enable nonparametric estimation and inference in these models. Moreover, since the identification result rely upon optimal transport theory, it is imperative to derive further statistical properties of these maps, in particular their large sample distributions.

\bibliography{nonseparable}
\appendix
\section{The Monge-Kantorovich problem}\label{mongekantorovichsection}
A way to model determinable measure preserving maps is via the theory of Optimal Transport, a very active research area in (Applied) Mathematics with important recent applications in economics and econometrics. For references on this vast subject we refer to \citet{gangbo1996geometry}, \citet{rachev1998mass}, \citet{villani2003topics}, \citet{villani2008optimal}, and \citet{santambrogio2015optimal}; the recent addition to this literature, \citet{galichon2016optimal}, approaches the subject through the lens of economics. 

The original goal of this area of research is to find a measure preserving map $T$ which transports one probability measure $P_\varepsilon$ onto another probability measure $P_Y$ in a ``cost-efficient way''. The set up for this is the \emph{Monge-Kantorovich problem}. 
To be precise, the Monge- and the Kantorovich problem are actually two different problems, the latter being the convex relaxation of the former. Monge's Problem asks for an optimal transport map between two (probability) measures, $P_{\varepsilon}$ and $P_{Y}$, where optimality is measured with respect to some cost function $c:\mathcal{E}\times\mathcal{Y}\to\mathbb{R}$. 
This problem can be stated as
\begin{equation}\label{monge}
\text{minimize} \int_{\mathbb{R}} c\left(e,T(e)\right) dP_{\varepsilon}(e)\quad T:\mathcal{E}\to\mathcal{Y}\thickspace\text{measurable}.
\end{equation}
Here $y\in\mathcal{Y}$, $e\in\mathcal{E}$, and $\Pi(P_{Y},P_{\varepsilon})$ is the set of all probability measures on $\mathcal{Y}\times\mathcal{E}$ such that the marginal distributions of some $\pi\in\Pi(P_{Y},P_{\varepsilon})$ are precisely $P_{Y}$ and $P_{\varepsilon}$. 
The Kantorovich problem between (probability) measures $P_{Y}$ and $P_{\varepsilon}$ under some cost function $c:\mathcal{Y}\times\mathcal{E}\to\mathbb{R}$ asks for an optimal transport \emph{plan} in the sense that the transport does not have to be accomplished through a function as in the Monge problem, but is concentrated on the support $\Gamma$ of a joint probability distribution $\gamma$ which has $P_\varepsilon$ and $P_Y$ as marginals.
\begin{equation}\label{kantorovichproblem}
\min_{\pi\in \Pi(P_{Y},P_{\varepsilon})} \int_{\mathcal{Y}\times\mathcal{E}} c(y,e)d\pi(y,e).
\end{equation}
To make the two problems more tangible, one can picture a pile of sand and some hole of the same volume. Then the Kantorovich problem asks for the most cost effective way to put the sand into the hole, allowing each grain of sand to be split up further. 
The Monge problem requires the optimal transportation plan mapping $P_{\varepsilon}$ to $P_{Y}$ to be such that no grain of sand be split, i.e.~that the transport can actually be accomplished through some function $T$. 
For many cost functions $c$, the solution to the Monge and the Kantorovich problem actually coincide under the assumption that $P_\varepsilon$ is absolutely continuous, so that it is legitimate to speak of the Monge-Kantorovich problem in these cases. Moreover, this solution is \emph{unique} for many important cost functions. All of these statements can be found in Chapter 1 of \cite{villani2003topics}. 

Unique solutions of the Monge-Kantorovich problem are hence determinable measure preserving maps.
The convenience of the Monge-Kantorovich problem lies in the fact that different cost functions $c$ lead to different measure preserving maps and even isomorphisms, many of which occur naturally in economics and econometrics. In fact, Optimal Transport theory has found many applications in optimal matching theory (see the applications in \citeauthor{galichon2016optimal} \citeyear{galichon2016optimal}). The arguably most important measure preserving map which can be derived from the Monge-Kantorovich problem is the Brenier map, which has already seen some application in Statistics (\citeauthor{carlier2016vector} \citeyear{carlier2016vector}; \citeauthor{carlier2016misspecified} \citeyear{carlier2016misspecified}; \citeauthor{chernozhukov2014single} \citeyear{chernozhukov2014single}; \citeauthor{chernozhukov2016monge} \citeyear{chernozhukov2016monge}) as a multivariate generalization of strictly increasing and continuous functions. Let us give a brief overview. 

\subsection{Brenier map} 
The Brenier map results from solving the Monge-Kantorovich problem under the standard squared Euclidean distance as a cost function, i.e.
\[c(y-x)=\|y-x\|^2_2=\sum_{i=1}^d |y_i-x_i|^2.\]

\cite{brenier1991polar} first proved that if $P_\varepsilon$ and $P_{Y|X=x}$ possess finite second order moments and if $P_\varepsilon$ is absolutely continuous, then the Monge and the Kantorovich problem coincide, and the \emph{unique} solution to 
\begin{equation}
T_0\coloneqq\argmin_T \int_{\mathcal{E}} \|e-T(e)\|_2^2 dP_\varepsilon(e),\quad T\thickspace\medspace\text{measurable}
\end{equation}
is the \emph{gradient of a convex function}, i.e.~$T_0(e)=\nabla\varphi(e)$ for some convex $\varphi$. Based on this result, gradients of convex functions are usually called Brenier maps in the Optimal Transport literature. Later, \citet{mccann1995existence} proved that the Brenier map always exists \and{is unique} between two probability measures $P_\varepsilon$ and $P_{Y|X=x}$ \emph{as soon as} $P_\varepsilon$ \emph{is absolutely continuous}, without the assumption of finite second order moments:
\begin{theorem}[\citeauthor{mccann1995existence} \citeyear{mccann1995existence}]\label{mccann}
Let $P_\varepsilon,P_Y$ be two Borel probability measures on $\mathbb{R}^d$ such that $P_\varepsilon$ vanishes on Borel sets of Hausdorff-dimension $d-1$. Then there exists a convex function $\varphi$ on $\mathbb{R}^d$ whose gradient $\nabla\varphi$ is a measure preserving map, pushing forward $P_\varepsilon$ to $P_Y$. Although $\varphi$ may not be unique, $\nabla\varphi$ is uniquely determined $P_\varepsilon$-almost everywhere.
\end{theorem}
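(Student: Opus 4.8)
The plan is to obtain the map as the gradient of a convex potential extracted from an optimal coupling, using the classical equivalence between cyclical monotonicity and subdifferentials of convex functions. Concretely, I would first produce a probability measure $\gamma$ on $\mathbb{R}^d\times\mathbb{R}^d$ with marginals $P_\varepsilon$ and $P_Y$ whose support is cyclically monotone. By Rockafellar's theorem (\citeauthor{rockafellar1997convex} \citeyear{rockafellar1997convex}), any cyclically monotone set lies in the graph of the subdifferential $\partial\varphi$ of some lower semicontinuous convex function $\varphi$. The decisive step is then a regularity fact: a finite convex function on $\mathbb{R}^d$ is differentiable outside a set of Hausdorff dimension at most $d-1$. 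Hence the hypothesis that $P_\varepsilon$ vanishes on Borel sets of Hausdorff dimension $d-1$ forces $\nabla\varphi$ to be single-valued and well defined $P_\varepsilon$-almost everywhere, and the coupling $\gamma$ to be concentrated on the graph of $\nabla\varphi$. This means exactly that $\gamma=(\mathrm{id},\nabla\varphi)_\#P_\varepsilon$, so that $P_Y(E)=P_\varepsilon((\nabla\varphi)^{-1}(E))$ for every Borel set $E$; i.e.~$\nabla\varphi$ is the desired measure preserving map. The hypothesis on $P_\varepsilon$ is in fact sharp, being matched precisely to the size of the non-differentiability set of a convex function.

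To build the cyclically monotone coupling I would first treat the case where $P_\varepsilon$ and $P_Y$ have finite second moments, in particular when they are compactly supported. Then $\int\|x-y\|_2^2\,d\gamma$ is finite, the set $\Pi(P_\varepsilon,P_Y)$ of couplings is weakly compact by Prokhorov's theorem, and the quadratic cost functional is weakly lower semicontinuous, so a minimizer $\gamma$ exists. A standard rearrangement argument shows that the support of any minimizer must be cyclically monotone: were it not, one could permute mass along a finite cycle to strictly lower the cost, contradicting optimality. Combined with Rockafellar and the differentiability fact above, this yields the map in the finite-moment case.

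The hard part, and the essence of \citet{mccann1995existence}'s contribution, is to remove the second-moment assumption, since without it the quadratic cost may be infinite for every coupling and the minimization becomes meaningless. The plan is to approximate $P_\varepsilon$ and $P_Y$ by compactly supported measures $P_\varepsilon^n,P_Y^n$, obtain convex potentials $\varphi_n$ with $(\nabla\varphi_n)_\# P_\varepsilon^n=P_Y^n$ as above, and pass to the limit. After normalizing the $\varphi_n$ (fixing a value and a subgradient at a common point), local uniform bounds for convex functions give, along a subsequence, local uniform convergence $\varphi_n\to\varphi$ to a convex $\varphi$, with $\nabla\varphi_n\to\nabla\varphi$ at points of differentiability. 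One then argues that the induced couplings converge weakly to a coupling with marginals $P_\varepsilon,P_Y$ concentrated on the graph of $\nabla\varphi$. Ensuring that no mass escapes to infinity and that the marginal constraints survive the limit is the main technical obstacle; tightness of the approximating couplings together with the monotonicity of the $\nabla\varphi_n$ are the tools I would use to control it.

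Finally, for uniqueness $P_\varepsilon$-almost everywhere, suppose $\nabla\varphi_1$ and $\nabla\varphi_2$ both push $P_\varepsilon$ forward to $P_Y$, inducing couplings $\gamma_1,\gamma_2$. Their average $\tfrac{1}{2}(\gamma_1+\gamma_2)$ is again a coupling of $P_\varepsilon$ and $P_Y$, and by the cyclical monotonicity characterization it is supported on the subdifferential $\partial\psi$ of a single convex function $\psi$. Thus $\nabla\varphi_1(x)$ and $\nabla\varphi_2(x)$ both belong to $\partial\psi(x)$ for $P_\varepsilon$-almost every $x$; since $\psi$ is differentiable $P_\varepsilon$-almost everywhere by the same Hausdorff-dimension argument, $\partial\psi(x)$ is a singleton $P_\varepsilon$-a.e., which forces $\nabla\varphi_1=\nabla\varphi_2$ $P_\varepsilon$-almost everywhere. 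This delivers uniqueness of the gradient even though $\varphi$ itself is pinned down only up to an additive constant.
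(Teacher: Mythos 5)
The paper itself contains no proof of this statement: Theorem \ref{mccann} is quoted from \citet{mccann1995existence} ``for the sake of completeness,'' so the only meaningful benchmark is McCann's original argument. Your existence half tracks that argument faithfully in outline: approximate by compactly supported measures, solve the quadratic-cost problem there (Prokhorov compactness plus lower semicontinuity, cyclical monotonicity of the support of a minimizer), invoke Rockafellar's theorem to obtain a convex potential, use the fact that the non-differentiability set of a convex function has Hausdorff dimension at most $d-1$, and pass to the limit by tightness. One remark: McCann passes to the limit in the \emph{couplings}, exploiting that cyclical monotonicity of supports is stable under weak convergence, rather than in the normalized potentials; your potential-based limit can be made to work, but the coupling-based route sidesteps exactly the normalization and mass-escape issues you flag as the main obstacle.

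The uniqueness half, however, has a genuine gap, and it sits precisely in the regime the theorem is about. You form the average $\tfrac{1}{2}(\gamma_1+\gamma_2)$ and assert ``by the cyclical monotonicity characterization'' that it is supported in $\partial\psi$ for a single convex function $\psi$. Rockafellar's theorem applies to cyclically monotone \emph{sets}, and the support of the average is the union $\mathrm{supp}(\gamma_1)\cup\mathrm{supp}(\gamma_2)$; a union of two cyclically monotone sets is not cyclically monotone in general, and nothing in your argument shows that this particular union is. The standard justification of this step runs through optimality: under finite second moments both $\gamma_i$ are optimal plans, the average of two optimal plans is optimal by linearity of the cost and convexity of the constraint set, and any optimal plan has cyclically monotone support. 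But when the moment assumption is dropped---which is the entire content of McCann's theorem relative to Brenier's---the quadratic cost may be $+\infty$ for \emph{every} coupling, optimality becomes vacuous, and this route collapses; you cannot quietly reuse a finite-cost fact inside the infinite-cost uniqueness proof. This is why McCann's uniqueness argument is a different, direct comparison of the two potentials: if $y=\nabla\psi(x_1)=\nabla\phi(x_2)$ at points of differentiability, convex duality yields an order relation on $(\psi-\phi)(x_1)$ versus $(\psi-\phi)(x_2)$, and a level-set argument combined with the fact that both gradients push $P_\varepsilon$ forward to $P_Y$ forces $\nabla\psi=\nabla\phi$ $P_\varepsilon$-almost everywhere, with no averaging of plans anywhere. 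Your final step (the Hausdorff-dimension hypothesis making $\partial\psi(x)$ a singleton $P_\varepsilon$-a.e.) is correct, but the single potential $\psi$ it requires has not been legitimately produced.
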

The assumption that $P_\varepsilon$ vanishes on Borel sets of smaller Hausdorff dimension is actually implied by the assumption that $P_\varepsilon$ be absolutely continuous, so that the theorem in particular holds in the absolutely continuous case. The theorem can also be strengthened in the case where $P_Y$ is also absolutely continuous. Under this additional assumption, there exists $\nabla\varphi^*$ such that for $P_\varepsilon$ almost every $e$ and $P_Y$ almost every $y$,
\begin{equation}\label{inversenabla}
\nabla\varphi^*\circ\nabla\varphi(e)=e,\quad\text{and}\quad\nabla\varphi\circ \nabla \varphi^*(y)=y,
\end{equation} and $\nabla\varphi^*$ is the $P_Y$ almost everywhere unique gradient of a convex function which pushes forward $P_Y$ to $P_\varepsilon$ \citep[p.~67]{villani2003topics}. $\varphi^*$ is the \emph{Legendre-Fenchel transform} of $\varphi$, see \citet[\S12]{rockafellar1997convex}. This can be expressed in a more intuitive way as
\begin{equation}
T^{-1}=(\nabla\varphi)^{-1}=\nabla\varphi^*,
\end{equation}
i.e.~the Brenier map $T$ is invertible almost everywhere on its domain and its inverse is given by $\nabla\varphi^*$.

Gradients of convex functions are so intriguing for Econometricians, because they define a notion of monotonicity which is often argued to be the most natural generalization of a monotone function to higher dimensions. $T=\nabla\varphi$ is monotone in the following sense \citep[p.~53]{villani2003topics}:
\[\langle \nabla\varphi(x)-\nabla\varphi(z),x-z\rangle\geq 0,\] where $\langle\cdot,\cdot\rangle$ denotes the inner product on $\mathbb{R}^d$. Here it is easy to see that if $x>z$ in the partial ordering induced by the positive cone on $\mathbb{R}^d$, then this definition implies that $\nabla\varphi(x)\geq\nabla\varphi(z)$ or that $\nabla\varphi(x)$ and $\nabla\varphi(z)$ are not comparable. In particular, $\nabla\varphi$ \emph{does not involve crossings} in the sense that for $x\in\mathbb{R}^d$
\begin{equation}
(1-t)x+t\nabla\varphi(x)=(1-t)x'+t\nabla\varphi(x')\quad \text{implies that}\quad x=x'\thickspace\text{for}\thickspace t\in[0,1),
\end{equation} 
see the discussion in \cite{mccann1997convexity}; we use this property in the proof of Lemma \ref{secondlemmatheorem2} below. 
Lastly, note that Theorem \ref{mccann} also holds for $P_{Y|X=x}$, by disintegration. For a readable introduction to the theory of the Brenier map consider \citet[Chapter 2]{villani2003topics}. 

\section{Proofs from section \ref{mainidentsection}}

\subsection{Proof of Theorem \ref{maintheorem2}}
Here we prove Theorem \ref{maintheorem2} and its analogue for the case where $Z$ is absolutely continuous.
We need the following important lemmata for the proof.
\begin{lemma}\label{brenierismetricprojection}
Let $F_{X|Z=z}$ and $F_{X|Z=z'}$ be continuous and (multivariate) strictly increasing with the same support $\mathcal{X}_z=\mathcal{X}_{z'}$ and let Assumption \ref{supportass} hold. Let $T\coloneqq\nabla\varphi$ be the Brenier map between $F_{X|Z=z}$ and $F_{X|Z=z'}$. Then, for each $x^*\in\mathcal{X}_z=\mathcal{X}_{z'}$, $T$ is either the metric projection of $x^*$ onto the epigraph of the isoquant $I_{z'}(x^*)$ or its inverse.
\end{lemma}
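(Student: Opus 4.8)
The plan is to exploit the classical fact that the metric projection onto a closed convex set is itself a Brenier map, and then to pin down $T$ by the $P_\varepsilon$-almost-everywhere uniqueness recorded in Theorem \ref{mccann}. Concretely, for a closed convex set $C\subseteq\mathbb{R}^k$ the projection $P_C$ is the gradient of the convex function $h_C(x)=\sup_{c\in C}\left(\langle x,c\rangle-\tfrac12\|c\|^2\right)=\tfrac12\|x\|^2-\tfrac12\,\mathrm{dist}(x,C)^2$, and differentiating gives $\nabla h_C(x)=P_C(x)$. By part 2 of Assumption \ref{supportass} the sublevel sets $S_{z'}(\alpha)\coloneqq\{x:F_{X|Z=z'}(x)\leq\alpha\}=\mathrm{epi}(F_{X|Z=z'};\alpha)$ are closed and convex, and since $F_{X|Z=z'}$ is continuous and strictly increasing they are nested in $\alpha$ with boundary exactly the isoquant $I_{z'}$ at level $\alpha$. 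Hence each projection $P_{S_{z'}(\alpha)}$ is single-valued and is a gradient of a convex function.

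First I would assemble a candidate map $\widetilde T$ out of these projections. The manifold $\mathcal{M}(z,z')$ where $F_{X|Z=z}=F_{X|Z=z'}$ splits $\mathcal{X}_z=\mathcal{X}_{z'}$ into the two regions $\{F_{X|Z=z}>F_{X|Z=z'}\}$ and $\{F_{X|Z=z}<F_{X|Z=z'}\}$; part 3 of Assumption \ref{supportass} guarantees that these sit on definite sides of $\mathcal{M}(z,z')$ and that every $x^*$ is comparable to a point of $\mathcal{M}(z,z')$. On one region $\widetilde T$ would be the nearest-point projection of $x^*$ onto the appropriate convex sublevel set of $F_{X|Z=z'}$; on the other region it would be the inverse of such a projection, which accounts for the ``or its inverse'' alternative in the statement, while on $\mathcal{M}(z,z')$ the two prescriptions agree, so $\widetilde T$ fixes $\mathcal{M}(z,z')$ pointwise. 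Because $\widetilde T$ is built from projections onto convex sets and their inverses, it is a gradient of a convex function; the transversality hypothesis is what guarantees that the two pieces match up across $\mathcal{M}(z,z')$ into a single monotone, single-valued map without the displacement segments $[x^*,\widetilde T x^*]$ crossing, consistent with the no-crossing property of gradients of convex functions recorded in the appendix.

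The decisive step is to verify that $\widetilde T$ transports $P_{X|Z=z}$ onto $P_{X|Z=z'}$. Here I would use the nested convex structure: the projection should send the shell of $P_{X|Z=z}$-mass lying between two neighbouring isoquants of $F_{X|Z=z}$ onto the corresponding shell of $F_{X|Z=z'}$, and I would make this precise by disintegrating both measures along the one-parameter family of isoquants and checking that the conditional masses agree, using strict monotonicity of the distribution functions to identify the levels. Once $\widetilde T$ is shown to be a measure-preserving gradient of a convex function between $F_{X|Z=z}$ and $F_{X|Z=z'}$, the $P_{X|Z=z}$-almost-everywhere uniqueness in Theorem \ref{mccann} forces $\widetilde T=T$, which is exactly the assertion that $T$ is, on each side of $\mathcal{M}(z,z')$, the metric projection onto the epigraph of the relevant isoquant of $F_{X|Z=z'}$ or its inverse.

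I expect the measure-preservation computation in the third step to be the main obstacle, since that is where the global geometry of Assumption \ref{supportass} (convex nested sublevel sets, a single connected separating manifold $\mathcal{M}(z,z')$, and transversal intersection) must be converted into an exact matching of mass. An alternative that sidesteps reconstructing the pushforward is to take $T$ as given and instead verify the variational inequality $\langle x^*-T(x^*),\,c-T(x^*)\rangle\leq 0$ for all $c$ in the relevant sublevel set, i.e.\ that the Brenier displacement $x^*-T(x^*)$ lies in the normal cone to the convex sublevel set at $T(x^*)$; convexity of the sublevel sets reduces this normal cone to the outward normal direction of the isoquant, and transversality together with the no-crossing property pins down the sign. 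Either route reduces the lemma to uniqueness of the Brenier map, but the measure-preserving bookkeeping (respectively, the normal-cone direction argument) is the part requiring real care.
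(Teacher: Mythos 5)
Your plan follows the same architecture as the paper's proof: realize the metric projection onto the relevant convex level-set region as the gradient of a convex function (your computation $P_C=\nabla h_C$ is exactly the Moreau--Holmes fact the paper cites) and then conclude that $T$ coincides with it by the almost-everywhere uniqueness of the Brenier map in Theorem \ref{mccann}. The paper's primary stated route is packaged slightly differently --- it checks cyclic monotonicity of the projection and invokes the criterion that a cyclically monotone transport map between absolutely continuous laws is the Brenier map --- but it offers your gradient-plus-uniqueness argument verbatim as an alternative inside the same proof, so the two approaches are essentially the same.

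However, two steps of your proposal do not go through as written. First, the claim that $\widetilde T$ is a gradient of a convex function ``because it is built from projections onto convex sets and their inverses'' is a non sequitur: each projection $P_{S_{z'}(\alpha)}$ for a \emph{fixed} level $\alpha$ is the gradient of a convex function, but your $\widetilde T$ sends $x^*$ to its projection onto the set at the $x^*$-dependent level $\alpha(x^*)=F_{X|Z=z}(x^*)$, and a map glued from an uncountable family of projections onto varying convex bodies is not automatically cyclically monotone, let alone the gradient of a single convex potential; showing that the gluing works (using transversality and the separating role of $\mathcal{M}(z,z')$ from Assumption \ref{supportass}) is precisely where the content of the lemma lies, and it is missing. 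Second, the measure-preservation step that you defer --- and correctly identify as the main obstacle --- is not optional: the uniqueness in Theorem \ref{mccann} only applies among maps already known to push $P_{X|Z=z}$ forward to $P_{X|Z=z'}$, and your disintegration-along-isoquants sketch is never carried out. You have in fact put your finger on the soft spot of the paper's own argument, since the cyclic-monotonicity criterion it invokes likewise presupposes the transport property, which the paper never verifies for the metric projection; but diagnosing that gap is not the same as closing it. Your alternative normal-cone route suffers from the same circularity: the variational inequality characterizes $T(x^*)$ as the projection only once one already knows that $T(x^*)$ lies on the isoquant at the matched level, which is again exactly the level-matching/measure-preservation statement that remains unproved.
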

\begin{proof}
Recall that the graph of $T$, $\Gamma$, is cyclical monotone \citep[Theorem 2.3]{gangbo1996geometry}. That is, for all $m\geq 1$, and for all $(x_1,Tx_1),\ldots,(x_m,Tx_m)\in\Gamma$,
\[\sum_{i=1}^m\|x_i-Tx_i\|_2^2\leq\sum_{i=1}^m\|x_i-Tx_{i-1}\|_2^2,\] with the convention $x_0=x_m$. 

Now by assumption, the epigraphs $\text{epi}(F_{X|Z=z};\alpha)$ and $\text{epi}(F_{X|Z=z'};\alpha)$ are closed convex subsets of $\mathcal{X}_z=\mathcal{X}_{z'}$ for all $\alpha\in[0,1]$. Focusing now on some $\alpha\in[0,1]$ and a corresponding $x^*\in\mathcal{X}_z$ such that the corresponding two isoquants $I_z(x^*)=\{x\in\mathcal{X}_z:F_{X|Z=z}(x)=\alpha\}$ and $I_{z'}(x^*)=\{x\in\mathcal{X}_{z'}:F_{X|Z=z'}(x)=\alpha\}$, we know that if they interesect, they do so either by intersecting transversally---which by the transversality theorem yields a closed manifold \citep[p.~43]{mas1989theory}---or coincide, which trivially results in a closed manifold. Also recall that $T^{-1}\coloneqq\nabla\varphi^*$ is the Brenier map between $F_{X|Z=z'}$ and $F_{X|Z=z}$. With this, our goal is now to show that for each $x\in I_z(x^*)$ for which $F_{X|Z=z}(x)>F_{X|Z=z'}(x)$, the Brenier map is the \emph{metric projection} of $x$ onto the closed convex set $\text{epi}(I_{z'}(x^*))$. 
The metric projection $T$ of $x$ onto the closed convex set $I_{z'}(x^*)$ maps $x$ onto the point $y\in I_{z'}(x^*)$ which is closest to $x$ in the sense that \[y=\argmin_{z\in I_{z'}(x^*)}\|x-z\|_2^2.\] This map exists and is unique since $I_{z'}(x^*)$ is a closed an convex subset of $\mathcal{X}_z=\mathcal{X}_{z'}$ \citep[p.~248]{aliprantis2006infinite}. By our assumption on the intersection of the manifolds, there are two forms this projection can take: in the case of transversal intersection, it is non-trivial in the sense that the distance between $x$ and its metric projection is positive since $I_{z}(x^*)\cap I_{z'}(x^*)$ is a manifold by the transversality theorem, and since $x^*\in I_{z}(x^*)\cap I_{z'}(x^*)$, so that all those $x$ for which $F_{X|Z=z}(x)>F_{X|Z=z'}(x)$ lie outside $\text{epi}(I_{z'}(x^*))$. In the case where the two manifolds coincide the projection is trivial in the sense that the metric projection of $x$ onto $\text{epi}(I_{z'}(x^*))$ coincides with $x$, because $x$ by definition must lie on the boundary of both $I_{z}(x^*)$ and $I_{z'}(x^*)$. Moreover, since $\mathcal{X}_z=\mathcal{X}_{z'}$, those $x$ are in $\mathcal{X}_z$ just as $\text{epi}(I_{z'}(x^*))$ is, so that there does indeed exist a projection of those $x$ onto $\text{epi}(I_{z'}(x^*))$ in both cases.
Now since both $F_{X|Z=z}$ and $F_{X|Z=z'}$ are absolutely continuous, it follows that if a map $T$ is cyclically monotonic, then it must be the Brenier map \citep[p.~80]{villani2003topics}. Therefore, we only have to show that the metric projection satisfies cyclic monotonicity. So for $m\geq 1$ pick $(x_i,Tx_i)$, $i=1,\ldots,m$, on $I_z(x^*)$ for which 
\[F_{X|Z=z}(x_1)>F_{X|Z=z'}(x_1)\quad\text{and}\quad F_{X|Z=z}(x_2)>F_{X|Z=z'}(x_2),\] and where $T$ is the metric projection. Then by definition we have \[\|x_i-Tx_i\|_2^2<\|x_i-y_i\|_2^2\quad\text{for all}\thickspace\medspace y_i\in\text{epi}(I_{z'}(x^*))\thickspace\medspace\text{and all}\thickspace\medspace i=1,\ldots,m,\] but this immediately implies cyclic monotonicity of $T$ for those points. 
Another way to see that the Brenier map between $F_{X|Z=z}$ and $F_{X|Z=z'}$ is the metric projection for those $x\in I_z(x^*)$ which lie outside $\text{epi}(I_{z'}(x^*))$ is by noting that the metric projection onto a closed convex set in a Hilbert space is the gradient of a convex function (\citeauthor{moreau1965proximite} \citeyear{moreau1965proximite}, \citeauthor{holmes1973smoothness} \citeyear{holmes1973smoothness}), and by Theorem \ref{mccann} this gradient of a convex function is the unique measure preserving map between $F_{X|Z=z}$ and $F_{X|Z=z'}$, i.e.~the Brenier map.

Since both $F_{X|Z=z}$ and $F_{X|Z=z'}$ are absolutely continuous, the analogous argument must hold for points $x\in I_{z'}(x^*)$ where $F_{X|Z=z}(x)<F_{X|Z=z'}(x)$, only for $T^{-1}$. Recall that when both distributions are absolutely continuous, the Brenier map is almost everywhere invertible with inverse $T^{-1}=\nabla\varphi^*$, where $\varphi^*$ is the convex conjugate of $\varphi$. Now since $I_z(x^*)$ and $I_{z'}(x^*)$ intersect transversally, just like above it holds that those points $x$ lie outside $\text{epi}(I_z(x^*))$. From the same reasoning it follows that $T^{-1}$ cannot be the Brenier map unless is the metric projection of $x\in I_{z'}(x^*)$ onto the convex set $\text{epi}(I_z(x^*))$. Therefore, the Brenier map $T$ for those $x$ is the inverse of the metric projection onto $\text{epi}(I_z(x^*))$. This shows that $T(I_z(x^*)\cap I_{z'}(x^*))=I_z(x^*)\cap I_{z'}(x^*)$. Having established this, it follows immediately that for each $x\in I_z(x^*)\cap I_{z'}(x^*)$ it actually holds that $Tx=x$ for every point $x\in I_z(x^*)\cap I_{z'}(x^*)$ if $T$ is the metric projection, which must coincide with the Brenier map by uniqueness. 
\end{proof}
\begin{lemma}\label{firstlemmatheorem2}
Let Assumption \ref{supportass} hold and let $T\coloneqq\nabla\varphi$ be the Brenier map between the two distribution functions $F_{X|Z=z}$ and $F_{X|Z=z'}$. If $0<F_{X|Z=z}(x^*)=F_{X|Z=z'}(x^*)<1$ for some $x^*\in\mathcal{X}_z=\mathcal{X}_{z'}$ then $Tx^*=x^*$.
\end{lemma}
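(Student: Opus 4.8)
The plan is to read this off almost directly from Lemma \ref{brenierismetricprojection}, together with the elementary fact that a metric projection onto a closed convex set fixes every point already lying in that set.

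First I would unpack the hypothesis. The condition $0<F_{X|Z=z}(x^*)=F_{X|Z=z'}(x^*)<1$ says exactly that $x^*\in\mathcal{I}(z,z')$. Writing $\alpha\coloneqq F_{X|Z=z}(x^*)=F_{X|Z=z'}(x^*)$, this means $x^*$ lies simultaneously on the isoquant $I_z(x^*)=\{x\in\mathcal{X}_z:F_{X|Z=z}(x)=\alpha\}$ of the first distribution and on the isoquant $I_{z'}(x^*)=\{x\in\mathcal{X}_{z'}:F_{X|Z=z'}(x)=\alpha\}$ of the second. Hence $x^*\in I_z(x^*)\cap I_{z'}(x^*)$, i.e.\ $x^*$ is one of the points where the two level sets at the common height $\alpha$ meet.

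Next I would invoke Lemma \ref{brenierismetricprojection}: on the isoquant $I_z(x^*)$ the Brenier map $T$ coincides with the metric projection of its argument onto the closed convex set $\text{epi}(I_{z'}(x^*))$ (on the portion where $F_{X|Z=z}\geq F_{X|Z=z'}$), or with the inverse of the metric projection onto $\text{epi}(I_z(x^*))$ (on the portion where $F_{X|Z=z}\leq F_{X|Z=z'}$). The key property I would then use is that the metric projection onto a closed convex set $K$ is the identity on $K$: if a point already belongs to $K$, its nearest point in $K$ is itself. Since $x^*$ lies on the isoquant $I_{z'}(x^*)$, it lies on the boundary of $\text{epi}(I_{z'}(x^*))$ and therefore belongs to that closed convex set, so its projection equals $x^*$; symmetrically $x^*$ lies on $I_z(x^*)$ and hence in $\text{epi}(I_z(x^*))$, so the inverse projection also leaves it fixed. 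Either way $Tx^*=x^*$, which is the claim.

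The only point requiring care is to check that \emph{both} alternatives of Lemma \ref{brenierismetricprojection} genuinely fix $x^*$, rather than merely one of them; this is handled by the observation that $x^*$ lies on both isoquants simultaneously (being in $\mathcal{I}(z,z')$), so it is a boundary point of both convex sets $\text{epi}(I_z(x^*))$ and $\text{epi}(I_{z'}(x^*))$ at once. The genuinely hard work — showing that $T$ really is this metric projection, via cyclical monotonicity of the graph of $T$ and uniqueness of the Brenier map — has already been carried out in the proof of Lemma \ref{brenierismetricprojection}; once that characterization is in hand, the present statement is essentially immediate, and indeed it is exactly the assertion $T(I_z(x^*)\cap I_{z'}(x^*))=I_z(x^*)\cap I_{z'}(x^*)$ established at the close of that proof, specialized to the single point $x^*$.
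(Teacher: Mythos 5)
Your proof is correct and follows essentially the same route as the paper's own argument: both reduce the claim to Lemma \ref{brenierismetricprojection} and then use the elementary fact that a metric projection onto a closed convex set fixes any point already contained in it, noting that $x^*$ lies on both isoquants (hence in both epigraphs) simultaneously. Your treatment is in fact slightly more careful than the paper's, since you explicitly check that both alternatives in Lemma \ref{brenierismetricprojection} (the projection and its inverse) fix $x^*$, and you correctly observe that the claim is already contained in the closing assertion $T(I_z(x^*)\cap I_{z'}(x^*))=I_z(x^*)\cap I_{z'}(x^*)$ of that lemma's proof.
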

\begin{proof}
First note that
\[F_{X|Z=z}(x^*)=F_{X|Z=z'}(x^*)\Leftrightarrow P_{X|Z=z}(X\leq x^*)=P_{X|Z=z'}(X\leq x^*),\] so that $Tx^*=x^*$ would be measure preserving and hence an admissible solution. The Brenier map in this setting is the metric projection onto the epigraph of the respective isoquant by Lemma \ref{brenierismetricprojection}. But for each point which lies on the intersection between two epigraphs the metric projection is the point itself, so that $Tx^*=x^*$.
\end{proof}

\begin{lemma}\label{secondlemmatheorem2}
Let Assumption \ref{supportass} hold and consider the Brenier map $T\coloneqq \nabla\varphi$ between $F_{X|Z=z}$ and $F_{X|Z=z'}$. If at $x_0\in\mathcal{X}_{z}=\mathcal{X}_{z'}$ it holds that $F_{X|Z=z}(x_0)>F_{X|Z=z'}(x_0)>0$, then it must hold that $F_{X|Z=z}(Tx_0)\geq F_{X|Z=z'}(Tx_0)$.
\end{lemma}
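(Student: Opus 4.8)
The plan is to argue by contradiction, reducing the desired inequality to the geometric fact that the transport rays of a Brenier map do not cross. Suppose, contrary to the claim, that at $x_0$ with $F_{X|Z=z}(x_0) > F_{X|Z=z'}(x_0) > 0$ we nonetheless have $F_{X|Z=z}(Tx_0) < F_{X|Z=z'}(Tx_0)$, so that $x_0$ lies strictly on the $z$-dominant side of the intersection set $\mathcal{I}(z,z')$ while its image $Tx_0$ lies strictly on the $z'$-dominant side. The idea is to show that the straight transport segment joining $x_0$ to $Tx_0$ must then pass through a point where the two distribution functions agree, and that such a point is necessarily fixed by $T$, which will collide with non-crossing.

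First I would track the sign of the difference of the two distribution functions along the transport segment. Set $\ell(t) = (1-t)x_0 + tTx_0$ for $t \in [0,1]$ and $g(t) = F_{X|Z=z}(\ell(t)) - F_{X|Z=z'}(\ell(t))$. Since both conditional distribution functions are continuous by Assumption \ref{abscont} and $\ell$ is continuous, $g$ is continuous with $g(0) > 0$ and $g(1) < 0$. By the intermediate value theorem there is some $t^* \in (0,1)$ with $g(t^*) = 0$; writing $p^* = \ell(t^*)$ this says $F_{X|Z=z}(p^*) = F_{X|Z=z'}(p^*)$, i.e. $p^* \in \mathcal{I}(z,z')$. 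Taking $t^*$ to be the \emph{first} crossing, together with strict monotonicity of the $F_{X|Z}$ and the support structure in Assumption \ref{supportass}, the common value $F_{X|Z=z}(p^*) = F_{X|Z=z'}(p^*)$ lies in $(0,1)$, so that Lemma \ref{firstlemmatheorem2} applies and yields $T p^* = p^*$.

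Finally I would invoke the no-crossing property of Brenier maps recorded in the appendix, namely that for $T = \nabla\varphi$ the identity $(1-t)x + tTx = (1-t)x' + tTx'$ with $t \in [0,1)$ forces $x = x'$. Applying this with $x = x_0$, $x' = p^*$ and $t = t^*$: the left-hand side equals $\ell(t^*) = p^*$ by construction, while the right-hand side equals $(1-t^*)p^* + t^* T p^* = p^*$ because $p^*$ is fixed. Hence $x_0 = p^*$. But $F_{X|Z=z}(x_0) > F_{X|Z=z'}(x_0)$ whereas $F_{X|Z=z}(p^*) = F_{X|Z=z'}(p^*)$, so $x_0 \neq p^*$, a contradiction. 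This forces $F_{X|Z=z}(Tx_0) \geq F_{X|Z=z'}(Tx_0)$, as claimed.

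The main obstacle I anticipate is the interiority step: guaranteeing that the crossing point $p^*$ is a genuine fixed point rather than a boundary point with $F_{X|Z=z}(p^*) \in \{0,1\}$, where Lemma \ref{firstlemmatheorem2} is silent. I expect this to follow either from selecting the first crossing in combination with $F_{X|Z=z'}(x_0) > 0$ and monotonicity, or by appealing directly to the metric-projection characterisation of $T$ on the intersection manifold from Lemma \ref{brenierismetricprojection}, which fixes the intersection points at every level including the boundary ones. Once this is secured, the remaining steps -- the intermediate value theorem and the single application of non-crossing -- are routine.
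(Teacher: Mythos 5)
Your route is genuinely different from the paper's. The paper disposes of Lemma \ref{secondlemmatheorem2} in three lines: assuming $F_{X|Z=z}(Tx_0)<F_{X|Z=z'}(Tx_0)$, it argues that $Tx_0$ would have to lie inside the epigraph onto which, by Lemma \ref{brenierismetricprojection}, $T$ is the metric projection of $x_0$, contradicting that characterisation directly. You instead run an intermediate-value argument along the transport segment, use the fixed-point property of Lemma \ref{firstlemmatheorem2} at the crossing point, and conclude with the non-crossing property of Brenier rays. Curiously, it is your proof and not the paper's that makes good on the remark in the appendix (``we use this property in the proof of Lemma \ref{secondlemmatheorem2} below'') attached to the non-crossing property: the paper's written proof never invokes it. Your use of Lemma \ref{firstlemmatheorem2} still rests on the projection characterisation indirectly, but only through the fixed-point statement at interior levels, and the final contradiction $x_0=p^*$ is clean.

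The interiority step you flag is, however, a genuine gap, and neither of your proposed repairs closes it. The dangerous case is that the segment $\ell(t)$ meets the set where \emph{both} conditional CDFs vanish (or both equal one); in dimension $k\geq 2$ such sets are nonempty even under Assumption \ref{abscont} (e.g.\ the lower-left portion of the boundary of a compact convex support has $F_{X|Z=z}=F_{X|Z=z'}=0$). Along the segment, $g$ can then pass from positive to negative through a plateau at level $0$: first $F_{X|Z=z'}$ reaches zero while $g>0$, then $F_{X|Z=z}$ reaches zero (so $g=0$ but the common level is $0$), and the segment exits with $F_{X|Z=z'}$ turning positive first, so that $g<0$ afterwards and \emph{no} crossing at an interior level ever occurs. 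Your first repair (first crossing plus $F_{X|Z=z'}(x_0)>0$ plus monotonicity) does not exclude this, because the segment is not monotone for the componentwise partial order, so the CDF values along it may fall to zero even though they are positive at $x_0$. Your second repair over-reads Lemma \ref{brenierismetricprojection}: its fixed-point conclusion is obtained only on $\mathcal{I}(z,z')$, which by definition consists of points with common level strictly inside $(0,1)$ and on which Assumption \ref{supportass} imposes the transversality-or-coincidence structure; at level $0$ the ``epigraph of the isoquant'' is the whole support, the metric projection degenerates to the identity, and the Brenier map certainly need not fix the both-zero set.

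The gap is closable, but by a different observation: any point in the \emph{interior} of the common convex support has both CDF values strictly in $(0,1)$, since every nonempty open subset of the support of an absolutely continuous measure has positive mass; and by convexity, if a single point of $[x_0,Tx_0]$ is interior, then every $\ell(t)$ with $t\in(0,1)$ is interior, so every IVT crossing automatically has level in $(0,1)$ and Lemma \ref{firstlemmatheorem2} applies. This reduces the problem to the degenerate case in which the entire segment lies in the boundary of the support --- vacuous when the support is all of $\mathbb{R}^k$, as in the paper's Normal/$t$ example, but not for polytopal supports --- and that residual case still needs an argument. As written, your proof is an attractive alternative to the paper's, but it is incomplete without this step.
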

\begin{proof}
Pick the corresponding $\alpha\in[0,1]$ for which $F_{X|Z=z}(x_0)=\alpha$ and assume without loss of generality that $x_0$ lies outside the closed convex $\text{epi}(F_{X|Z=z};\alpha)$. Moreover, suppose by contradiction that it holds $F_{X|Z=z}(x_0)>F_{X|Z=z'}(x_0)$ but $F_{X|Z=z}(Tx_0)<F_{X|Z=z'}(Tx_0)$. Since the isoquants are all convex and all distribution functions are strictly increasing, it follows that $Tx_0$ must lie inside $\text{epi}(F_{X|Z=z};\alpha)$. But then $T$ cannot be the metric projection of $x_0$ onto this epigraph as in Lemma \ref{brenierismetricprojection}, a contradiction.
\end{proof}

The following lemma proves the intuitive explanation from page \pageref{equalitiesimportant} formally. 
\begin{lemma}\label{stringofequalitieslemma}
The first equality in \eqref{equalitiesimportant} holds.
\end{lemma}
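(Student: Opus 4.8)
The plan is to reduce the claimed equality of regular conditional distributions to the elementary fact that the conditioning variable $(X,Z)$ is obtained from $(U,Z)$ by a bi-measurable bijection, together with the covariance of disintegrations under such maps. First I would fix the joint law $P_{\varepsilon,X,Z}$ on $(\mathbb{R}^{d}\times\mathbb{R}^{k}\times\mathbb{R}^{m},\mathscr{B}_{\mathbb{R}^{d+k+m}})$ and invoke the disintegration theorem \citep[see, e.g.,][]{bogachev2007measure1} to obtain a $P_{X,Z}$-almost everywhere unique family $\{P_{\varepsilon|X=x,Z=z}\}$ of Borel probability measures on $\mathbb{R}^{d}$ satisfying
\[
P_{\varepsilon,X,Z}(A\times B)=\int_{B}P_{\varepsilon|X=x,Z=z}(A)\,dP_{X,Z}(x,z)
\]
for all Borel $A\subseteq\mathbb{R}^{d}$ and $B\subseteq\mathbb{R}^{k+m}$. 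Existence and $P_{X,Z}$-a.e.\ uniqueness of this family is precisely what licenses the symbol $P_{\varepsilon|X=x,Z=z}$ even though $\{X=x\}$ is a $P$-null event under Assumption \ref{abscont}.

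Next I would record that, by Assumption \ref{mpiso} together with the absolute continuity from Assumption \ref{abscont}, the map $\phi:(x,z)\mapsto(h^{-1}(x,z),z)$ is a bijection that is measurable with measurable inverse: for each fixed $z$ the map $x\mapsto h^{-1}(x,z)=\nabla\varphi^{*}_{z}(x)$ is the Brenier map between $F_{X|Z=z}$ and $F_{U}$, hence by Theorem \ref{mccann} and \eqref{inversenabla} a measure-preserving isomorphism, while $z\mapsto z$ is trivially one. Consequently $U=h^{-1}(X,Z)$ is a bi-measurable function of $(X,Z)$, so $P_{U,Z}=\phi_{*}P_{X,Z}$ and $\sigma(U,Z)=\sigma(X,Z)$. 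The key step is then a change of variables: for Borel $A\subseteq\mathbb{R}^{d}$ and $C\subseteq\mathbb{R}^{k+m}$ in the $(U,Z)$-coordinates, since $\varepsilon$ is untouched by $\phi$,
\[
P_{\varepsilon,U,Z}(A\times C)=P_{\varepsilon,X,Z}\!\left(A\times\phi^{-1}(C)\right)=\int_{\phi^{-1}(C)}P_{\varepsilon|X=x,Z=z}(A)\,dP_{X,Z}(x,z),
\]
and pushing the integral forward through $\phi$ (using $P_{U,Z}=\phi_{*}P_{X,Z}$ and the substitution $x=h(z,u)$) gives
\[
P_{\varepsilon,U,Z}(A\times C)=\int_{C}P_{\varepsilon|X=h(z,u),Z=z}(A)\,dP_{U,Z}(u,z).
\]
This exhibits $(u,z)\mapsto P_{\varepsilon|X=h(z,u),Z=z}$ as a disintegration of $P_{\varepsilon,U,Z}$ over $P_{U,Z}$.

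To finish, I would appeal to the $P_{U,Z}$-a.e.\ uniqueness of disintegrations: the family just constructed must agree $P_{U,Z}$-almost everywhere with the canonical one $\{P_{\varepsilon|U=u,Z=z}\}$, whence $P_{\varepsilon|U=u,Z=z}=P_{\varepsilon|X=h(z,u),Z=z}$ for $P_{U,Z}$-a.e.\ $(u,z)$; substituting $u=h^{-1}(x,z)$ and transporting back via $P_{U,Z}=\phi_{*}P_{X,Z}$ yields the first equality in \eqref{equalitiesimportant} for $P_{X,Z}$-almost every $(x,z)$. I expect the only real obstacle to be regularity bookkeeping rather than anything conceptual: I must confirm the ambient spaces are standard Borel so the disintegration theorem delivers a genuinely \emph{unique} (not merely existent) conditional family, and I must verify bi-measurability of $\phi$ with care, since the entire argument rests on the identity $\sigma(U,Z)=\sigma(X,Z)$. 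This is exactly the hinge on which the alternative construction of \citet{kasy2014instrumental} fails, as noted in the footnote on page \pageref{equalitiesimportant}; everything else reduces to a routine change of variables and the uniqueness clause.
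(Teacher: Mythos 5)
Your proposal is correct and follows essentially the same route as the paper: both treat the conditional laws $P_{\varepsilon|X=x,Z=z}$ and $P_{\varepsilon|U=h^{-1}(x,z),Z=z}$ as disintegrations and exploit that the reindexing map $\phi$ is a bi-measurable, measure-preserving isomorphism of the conditioning variables that leaves $\varepsilon$ untouched. The only difference is that where the paper cites Corollary 5.24 of \citet{einsiedler2013ergodic} for the covariance of disintegrations under such a map, you verify it by hand via a change of variables and the $P_{U,Z}$-a.e.\ uniqueness of disintegrations, which makes the argument self-contained and more explicit about the almost-everywhere qualifier that the paper leaves implicit.
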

\begin{proof}
The simplest proof is to consider $P_{\varepsilon|X=x,Z=z}$ and $P_{\varepsilon|V=h^{-1}(x,z),Z=z}$ to be disintegrations \citep[Chapter 10]{bogachev2007measure2}. Note that these disintegrations exist and coincide with the abstract conditional expectations, because we work in Euclidean space and with absolutely continuous distribution functions \citep[Theorem 1]{chang1997conditioning}. The map $\phi: (\varepsilon,X,Z)\mapsto(\varepsilon,h^{-1}(X,Z),Z)$ is a measure preserving isomorphism since $h^{-1}$ is a measure preserving isomorphism for all $z\in\mathcal{Z}$, just as the identity maps $\varepsilon\mapsto\varepsilon$ and $Z\mapsto Z$. But then by Corollary 5.24 in \citet{einsiedler2013ergodic} the disintegrated measure $P_{\varepsilon|V=h^{-1}(x,z),Z=z}$ is the pushforward of the disintegrated measure $P_{\varepsilon|X=x,Z=z}$, that is
\[P_{\varepsilon|V=h^{-1}(x,z),Z=z}(E) = P_{\varepsilon|\phi(x,z)}(E) = P_{\varepsilon|X=x,Z=z}(\phi^{-1}(E)) = P_{\varepsilon|X=x,Z=z}(E),\] for all $E\in\mathscr{B}_{\mathbb{R}^d}$. Here, the first equality follows by the definition of $\phi$, the second equality follows by the definition of a pushforward measure as in Corollary 5.24 of \citet{einsiedler2013ergodic}, and the third equality follows from the fact that $\phi$ maps $\varepsilon$ to $\varepsilon$.
\end{proof}
We can now prove the main theorem.

\paragraph{Proof of Theorem \ref{maintheorem2}}
\begin{proof}
Recall that we want to show identification, so that the assume there is $m:\mathcal{E}\to\mathcal{Y}_x$ as well as $m^*:\mathcal{E}\to\mathcal{Y}_x$ with corresponding unobservable distributions $F_\varepsilon$ and $F_{\varepsilon^*}$ so that $(m,\varepsilon)$ and $(m^*,\varepsilon^*)$ are equivalent in the sense that they generate the same distribution $F_{Y|X=x}$ for endogenous $X$.
We work on a suitably equipped complete probability space on which $\varepsilon$, $\varepsilon^*$, $X$, $Z$, and $U$ are defined. Let us first prove that the identified set is 
\[\mathcal{I}\coloneqq \{m\in\mathcal{H}(Y_x,\varepsilon):(m^{-1}(X,Y),U)\independent Z\}.\] The proof of this is almost exactly like the original proof in \cite{torgovitsky2015identification}. If $m$ is in the identified set, then $Y=m(X,\varepsilon)$ for some $\varepsilon$ with $(\varepsilon,U)\independent Z$. Now since $m$ is a measure preserving isomorphism we have $\varepsilon = m^{-1}(X,Y)$. It therefore holds that $(m^{-1}(X,Y),U)\independent Z$, so that $m\in\mathcal{I}$. On the other hand, if $m\in\mathcal{I}$, then $(m^{-1}(X,Y),U)\independent Z$ and since $\varepsilon=m^{-1}(X,Y)$ $m$ is in the identified set. Let us now prove the main part of the theorem.

Since we want to show identification, the goal is to show that $m^*=m$ almost everywhere as in Figure \ref{isostructure}, so that $\mathbb{I}$ contains a unique $m$.
We can assume that $Z$ is binary with points $z,z'\in\mathcal{Z}$ and that $F_{X|Z=z}$ and $F_{X|Z=z'}$ only intersect in one manifold $\mathcal{M}(z,z')$. In fact, if there are more manifolds, it does not matter which one we use; moreover, since the manifolds are of measure zero, their union is still of measure zero, so that we still can identify $q$ modulus the measure zero set of the union of the manifolds.

Now since both $m$ and $m^*$ are isomorphisms and continuous in $x$, $q(x,z,\cdot)$ must be an isomorphism and continuous in $x$ for all $(x,z)\in\mathcal{X}\times\mathcal{Z}$. Recall that by Assumption \ref{instrumentass} $q$ does not depend on $z$, so that we can write $q(x,z,\cdot)=q(x,\cdot)$. Furthermore, note that 
\[q(x,\varepsilon) = m^{-1}(x,m^*(x,\varepsilon))\quad\text{for all}\thickspace\medspace (x,z,\varepsilon)\in\mathcal{X}\times\mathcal{Z}\times\mathcal{E}.\] Therefore, $q(x,\varepsilon)=\varepsilon$ if and only if $m^*=m$. 
Now by the assumed normalization, as soon as $q(x,\varepsilon)$ is a function which does not depend on $x$, we know that it must hold that $q(x,\varepsilon)=\varepsilon$, based on the reasoning on page \pageref{explanation}; as mentioned, this is analogous to the reasoning in \citet{torgovitsky2015identification}, only for measure preserving isomorphisms instead of quantile functions. Therefore, we only need to prove that $q(x,\cdot)$ is independent of $x$.

Now by Assumption \ref{normalizationass} $h$ is the identity for $z$ and the Brenier map for $z'$, so that $F_U=F_{X|Z=z}$. Moreover, the fact that $h(\cdot,z)$ and $h(\cdot,z')$ are measure preserving isomorphisms implies that conditioning on the event $[X=x,Z=z]$ is the same as conditioning on the event $[V=g^{-1}(x,z),Z=z]$ by Lemma \ref{stringofequalitieslemma}.

Now in order to prove that $q$ is constant for almost all $x$, we use the general sequencing argument mentioned in the main text, similar in spirit to \cite{d2015identification} and \cite{torgovitsky2015identification}, only for several dimensions. To make the notation simpler in the following, we denote the Brenier map $h(z',u)$ mapping $F_{X|Z=z'}$ onto $F_{X|Z=z}=F_U$ by $T$, and its inverse $h^{-1}(x,z')$ by $T^{-1}$. We thus want to show that for each $x\in\mathcal{X}_z=\mathcal{X}_{z'}$ either $\lim_{n\to\infty} T^nx=x_m$ or $\lim_{n\to\infty}(T^{-1})^nx=x_m$ for some $x_m\in\mathcal{M}(z,z')$, so that $q(x,\cdot)$ is independent of $x$ up to the measure zero set $\mathcal{M}(z,z')$.\footnote{We denote by $T^n$ the $n$-fold repeated application of $T$, i.e.~$T^n(x) \equiv T(T(\ldots T(x))\ldots)$.} In words, the idea is, for every $x\in\mathcal{X}_z=\mathcal{X}_{z'}$, to obtain the sequence which does not change $F_\varepsilon$, but changes $F_{X|Z=z}$, so that we can obtain the exogenous effect of $X$ on the second stage. As stated in the main text, we need to show that this sequence converges, and we will now show that it converges to some element $x_m\in\mathcal{M}(z,z')$. This will be the difficult step in what follows.

So pick some $x\in\mathcal{X}_z=\mathcal{X}_{z'}$. If $F_{X|Z=z}(x) = F_{X|Z=z'}(x)$ there is nothing to prove as by Lemma \ref{firstlemmatheorem2} it holds that $T^nx=x$ for $n\in\mathbb{N}$, so that this $x$ already has converged and must lie in $\mathcal{M}(z,z')$ since we assume there is only one manifold where the distribution functions intersect. Now depending on the interplay between $\mathcal{M}(z,z')$ and $x$ as well as $F_{X|Z=z}(x)$ and $F_{X|Z=z'}(x)$, we have to decide if we iterate $T$ or $T^{-1}$ to find convergence. The underlying idea is that we always iterate such that the sequence converges towards $\mathcal{M}$. Now since by part 4 (ii) of Assumption \ref{supportass} all elements $x$ for which $F_{X|Z=z}(x)=0=F_{X|Z=z'}(x)$ lie on one side of the manifold, which means that for $x_0$ with $1>F_{X|Z=z}(x_0),F_{X|Z=z'}(x_0)>\varepsilon$ for some $\varepsilon>0$ it must either hold that $F_{X|Z=z}(x_0)> F_{X|Z=z'}(x_0)$ or the reverse strict inequality; note that we can rule out the case where equality holds since then this point lies on some manifold and has already converged as just mentioned. So for those points with $F_{X|Z=z'}(x_0)>F_{X|Z=z}(x_0)$, we iterate $T$, i.e.~from $F_{X|Z=z'}$ to $F_{X|Z=z}$. The reason is the same as in the one-dimensional case as depicted in the following figure.
\begin{figure}[h!t]
\centering
\begin{tikzpicture}
\draw[->, thick] (0,0) to (0,4);
\draw[->,thick] (0,0) to (4,0);
\draw[-,thick] (0,0) to [out=15,in=180] (3.7,3.7);
\draw[thick] (0,0) to [out=90,in=220] (3.7,3.7);
\draw[-,thick] (0,3.7) to (-0.1,3.7);
\draw node[left] at (0,3.7) {$1$};
\draw node[right] at (4,0) {$x$};
\draw[-,thick] (0.5,0) to (0.5,-0.1);
\draw node[below] at (0.5,-0.1) {$x_0$};
\draw[-,dashed] (0.5,0.2) to (0.5,1.47);
\draw[-,thick] (0,0.2) to (-0.1,0.2);
\draw node[left] at (-0.1,0.2) {$F_{X|Z=z}(x_0)$};
\draw[-,dashed] (0.52,1.5) to (1.5,1.5);
\draw[-,thick] (1.5,0) to (1.5,-0.1);
\draw node[left] at (-0.1,1.5) {$F_{X|Z=z}(Tx_0)=F_{X|Z=z'}(x_0)$};
\draw node[below] at (1.5,-0.1) {$Tx_0$};
\draw[-,thick] (0,1.5) to (-0.1,1.5);
\draw[-,dashed] (1.5,1.5) to (1.5,2.3);
\draw[-,dashed] (1.5,2.3) to (1.9,2.3);
\draw[-,dashed] (1.9,2.3) to (1.9,2.55);
\draw[-,dashed] (1.9,2.55) to (2.05,2.55);
\draw[-,thick] (0,2.3) to (-0.1,2.3);
\draw node[left] at (-0.1,2.3) {$F_{X|Z=z}(T^2x_0)=F_{X|Z=z'}(Tx_0)$};
\draw[->,thick] (3,1.9) to (2.2,2.6);
\draw node[right] at (3,1.8) {$\mathcal{M}(z,z')$};
\draw node[right] at (2.5,0.5) {$F_{X|Z=z}$};
\draw[->,thick] (2.5,0.6) to (1.43,1.17);
\draw node[right] at (4.2,2.6) {$F_{X|Z=z'}$};
\draw[->,thick] (4.2,2.7) to (3.3,3.3);
\end{tikzpicture}

\caption{Sequence for a point $x_0$ in the one-dimensional case}
\label{sequenceexample}
\end{figure}
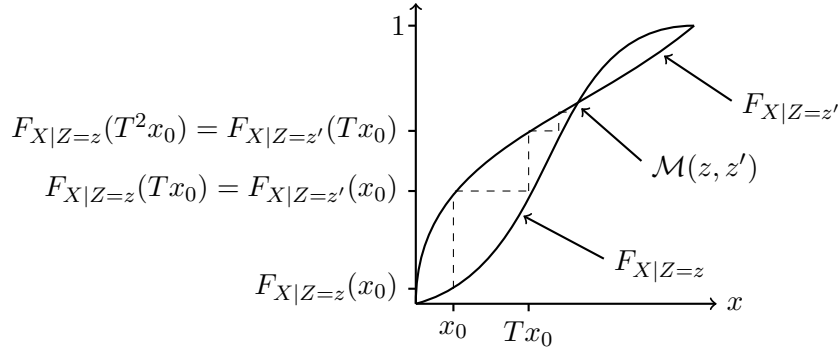

For the reverse inequality we need to iterate $T^{-1}$ from $F_{X|Z=z}$ to $F_{X|Z=z'}$.
Therefore, without loss of generality assume $F_{X|Z=z'}(x)>F_{X|Z=z}(x)$. Then iterating $T$ converges to some $x_m\in\mathcal{M}(z,z')$, because we have
\[F_{X|Z=z'}(T^{n+1}x)\geq F_{X|Z=z}(T^{n}x)\] by Lemma \ref{secondlemmatheorem2}. Hence, taking the limit $n\to\infty$ by absolute continuity of the two distribution functions yields
\[F_{X|Z=z'}(\lim_{n\to\infty}T^{n}x)=\lim_{n\to\infty} F_{X|Z=z}(T^{n}x),\] which is either a monotonically increasing sequence bounded above by some $x_m\in\mathcal{M}(z,z')$ with $0<F_{X|Z=z}(x_m)=F_{X|Z=z'}(x_m)<1$ or a monotonic decreasing sequence bounded below by some $x_m\in\mathcal{M}(z,z')$, depending on which side of $\mathcal{M}(z,z')$ $x$ lies; since those sequences are monotone and bounded in both cases they must converge to $x_m\in\mathcal{M}(z,z')$. This reasoning works in particular, because the range of Brenier maps $T$ and $T^{-1}$ are the respective supports of $F_{X|Z=z}$ and $F_{X|Z=z'}$, so that $T$ never maps outside $\mathcal{X}_z=\mathcal{X}_{z'}$.

Finally, if there are more manifolds than just $\mathcal{M}(z,z')$, then for some $x$ the respective iterations might converge to different manifolds, but still all elements will converge to \emph{some} manifold, because all $x$ for which the iterations do not converge to some other manifold, they must converge to $\mathcal{M}(z,z')$ by the above argument. Since all manifolds are of measure zero by assumption, their (finite) union must also be of measure zero.

We are now in the position to conclude the proof by applying the sequencing argument to show that $q(x,e)$ does not depend on almost every $x$. So pick some $x_0\in\mathcal{X}_z=\mathcal{X}_{z'}$; we assume that $0<F_{X|Z=z}(x_0)<F_{X|Z=z'}(x_0)$ for $x_0$ lying on the same side of the manifold as the elements which have zero probability, but the reasoning in the other cases is analogous. By the independence $(\varepsilon,U)\independent Z$ as well as Lemma \ref{stringofequalitieslemma} and the fact that $h$ is a measure preserving isomorphism, we have for every Borel set in $\mathscr{B}_{\mathcal{E}_x}$
\begin{align*}
P_{\varepsilon^m|X=x_0,Z=z}(E) &=P_{\varepsilon^m|U=h^{-1}(x_0,z),Z=z}(E) \\
&= P_{\varepsilon^m|U=h^{-1}(Tx_0,z'),Z=z}(E)  \\
&= P_{\varepsilon^m|X=Tx_0,Z=z'}(E)=P_{\varepsilon^m|X=Tx_0,Z=z}(E),
\end{align*}
for every Borel set $E\in\mathscr{B}_{\varepsilon^m|X}$; analogously, 
\begin{align*}
P_{\varepsilon|X=x_0,Z=z}(E') &=P_{\varepsilon|U=h^{-1}(x_0,z),Z=z}(E') \\
&= P_{\varepsilon|U=h^{-1}(Tx_0,z'),Z=z}(E')  \\
&= P_{\varepsilon|X=Tx_0,Z=z'}(E')=P_{\varepsilon|X=Tx_0,Z=z}(E').
\end{align*}
for every Borel set $E'\in\mathscr{B}_{\mathcal{E}_{x_0}}$. 

Now to conclude the proof of Theorem \ref{maintheorem2} recall that $q$ is a measure preserving isomorphism mapping each $E'\in\mathscr{B}_{\mathcal{E}_{x_0}}$ onto a unique $q(x_0,E')$. Thus as 
$P_{\varepsilon|X=x_0,Z=z}(E') = P_{\varepsilon|X=Tx_0,Z=z}(E'),$ it must hold that $q(x_0,E')  =  q(Tx_0,E')$. This holds if we iterate $T$, so that
\[q(T^nx_0,E') = \cdots = q(x_0,E')\quad\text{for all}\thickspace\medspace E'\in\mathscr{B}_{\varepsilon|X}\thickspace\thickspace\text{and}\thickspace n\in\mathbb{N},\] since $\mathcal{E}_{x,z}=\mathcal{E}$ by Assumption \ref{supportepsilon}. Now from above we know that the iteration $T^nx_0$ converges to some element $x_m\in\mathcal{M}(z,z')$ and since $q(\cdot,E')$ is continuous in $x$, we have
\begin{equation}\label{continuityequation}
q(x_m,E') = q(\lim_{n\to\infty}T^nx_0,E') = \lim_{n\to\infty}q(T^nx_0,E')=\lim_{n\to\infty}q(x_0,E')=q(x_0,E').
\end{equation} 
Therefore $q(\cdot,E')$ is constant on $\mathcal{X}_z=\mathcal{X}_{z'}$ modulus $\mathcal{M}(z,z')$, which by assumption is of measure zero, which concludes the proof.

Lastly, if Assumption \ref{mpiso2prime} holds in place of Assumption \ref{mpiso2} then \eqref{continuityequation} does not hold. However, by definition it holds for all $\delta>0$ that 
\[\lim_{n\to\infty} P_{\varepsilon|X=x_n,Z=z}(e: |q(T^nx_0,e)-q(x_0,e)|>\delta) = 0\quad\forall e\in\mathcal{E}_{xz}.\]
This implies that $q(x_m,\cdot)=q(x_0,\cdot)$ up to a set of measure zero in $\mathcal{E}_{xz}$, because it is a well-known result that if a sequence $\{f_n\}_{n\in\mathbb{N}}$ converges to $f$ and to $g$ in measure, then $f$ coincides with $g$ almost everywhere. This means that $q$ is constant on $\mathcal{X}_z=\mathcal{X}_{z'}$ modulus $\mathcal{M}(z,z')$ for almost every $\varepsilon$, so that $m$ can be identified for almost every $x$ and almost every $\varepsilon$.
\end{proof}

\subsection{Identification result for absolutely continuous $Z$}
In the not-so-important case where $Z$ is absolutely continuous we can eliminate Assumption \ref{supportass}, but have to assume that $m$ is differentiable in $x$. The statement and proof of this theorem are analogous to the result in \citet{torgovitsky2015supplement}, the only difference being that we allow for $m$ to be a (possibly multivariate) measure preserving $C^{1}$-diffeomorphism. We still state the whole proof for the convenience of the reader. We denote by $d_x$ the dimension of $X$ and by $d_z$ the dimension of $Z$.
\begin{theorem}[Identification of nonseparable triangular models with absolutely continuous $Z$]\label{maintheorem2abscont}
Suppose that $\mathcal{X}$ is convex, $Z$ is a continuously distributed vector-valued random variable satisfying Assumption \ref{instrumentass}, and $m(x,z,\varepsilon):\mathcal{E}_z\to\mathcal{Y}_{xz}$ is a determinable measure preserving isomorphism differentiable in all $x$ and $z$ satisfying Assumption \ref{mpiso2}. Moreover, let Assumptions \ref{mpiso}, \ref{abscont}, and \ref{supportepsilon} hold. Denote by $G(x,z)$ the $d_z\times d_x$ matrix with $(j,k)$ element $\nabla_{z_j}g^{-1}(x_k,z)$. Then $m$ is point-identified on $\mathcal{X}\times\mathcal{E}$ if for every $x$ in a dense subset $\mathcal{X}_d$ of $\mathcal{X}$ and every $y\in\mathcal{Y}^\circ_x$, there exists a $\bar{z}$ with $x\in\mathcal{X}^\circ_{\bar{z}}$ and $y\in\mathcal{Y}^\circ_{x,\bar{z}}$ for which $\nabla_xg^{-1}(x,\bar{z})$ exists and $G(x,\bar{z})$ exists and has rank $d_x$.
\end{theorem}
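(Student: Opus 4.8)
The plan is to follow the structure of the discrete proof of Theorem~\ref{maintheorem2} and of \citet{torgovitsky2015supplement}, replacing the sequencing argument by a differentiation argument that exploits the continuous variation of $Z$. I would suppose there are two observationally equivalent models $(m,\varepsilon)$ and $(m^*,\varepsilon^*)$, both satisfying the hypotheses and generating the same $F_{Y|X,Z}$, and define the measure preserving isomorphism $q(x,z,\cdot)=m^{-1}(x,m^*(x,\cdot))$. By Assumption~\ref{instrumentass} together with Lemma~\ref{stringofequalitieslemma}, conditioning on $[X=x,Z=z]$ is the same as conditioning on $[U=g^{-1}(x,z)]$, so that $q$ does not depend on $z$; write $q(x,\cdot)$. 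It then suffices to show that $q(x,\cdot)$ does not depend on $x$, since the reasoning on page~\pageref{explanation} (determinability of $m$ and the fact that $P_\varepsilon$ is known) forces a constant $q$ to be the identity, whence $m=m^*$.

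The first substantive step is to record that $q$ is in fact a function of the first-stage unobservable $u=g^{-1}(x,z)$ alone. By measure preservation and invertibility of $h$ (Assumptions~\ref{mpiso} and \ref{normalizationass}) and by independence, both $P_{\varepsilon|X=x,Z=z}$ and $P_{\varepsilon^*|X=x,Z=z}$ equal the corresponding $U$-conditional laws $\mu_u\coloneqq P_{\varepsilon|U=u}$ and $\nu_u\coloneqq P_{\varepsilon^*|U=u}$ with $u=g^{-1}(x,z)$; since $m$ and $m^*$ are determinable, the map $q(x,\cdot)$ between $\nu_u$ and $\mu_u$ is the unique one of its class, so it may be written $q(x,z,\cdot)=Q(g^{-1}(x,z),\cdot)$ exactly as in Figure~\ref{isostructure}. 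Here is where the strengthening to a $C^{1}$-diffeomorphism is used: it, together with a differentiable (continuous) disintegration of the conditional laws in $u$, is what makes $Q$ differentiable in its first argument.

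The heart of the argument is then a chain-rule computation. Because $q(x,\cdot)=Q(g^{-1}(x,z),\cdot)$ carries no genuine dependence on $z$, differentiating this identity in each coordinate $z_j$ at a point $\bar z$ furnished by the hypothesis yields, for every output component and every fixed $e$,
\[
0=\sum_{k=1}^{d_x}\nabla_{u_k}Q(u,e)\,\nabla_{z_j}g^{-1}(x_k,\bar z)\qquad(j=1,\dots,d_z),
\]
that is, $G(x,\bar z)\,[\nabla_u Q(u,e)]^{\top}=0$. Since by hypothesis $G(x,\bar z)$ has rank $d_x$, its kernel is trivial and hence $\nabla_u Q(u,e)=0$ at that point. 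Applying the chain rule once more to $q(x,e)=Q(g^{-1}(x,\bar z),e)$ gives $\nabla_x q(x,e)=\nabla_u Q\cdot\nabla_x g^{-1}(x,\bar z)=0$ wherever the rank condition holds, i.e.\ for every $x$ in the dense subset $\mathcal{X}_d$, with $\bar z$ chosen per pair $(x,y)$ as the statement allows, the interior conditions $x\in\mathcal{X}^\circ_{\bar z}$ and $y\in\mathcal{Y}^\circ_{x,\bar z}$ guaranteeing that all the derivatives exist.

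To finish, I would promote this pointwise statement to a global one: $\nabla_x q(\cdot,e)$ is continuous in $x$ (again where $C^{1}$ enters) and vanishes on the dense set $\mathcal{X}_d$, so it vanishes on all of $\mathcal{X}$; convexity, hence connectedness, of $\mathcal{X}$ then forces $q(\cdot,e)$ to be constant in $x$. A constant $q$ is the identity by determinability and the normalization of $P_\varepsilon$, so $m=m^*$ for almost every $x$ and all $\varepsilon\in\mathcal{E}$, which is the claimed identification on $\mathcal{X}\times\mathcal{E}$. I expect the main obstacle to be not the rank/chain-rule step, which is routine once set up, but the representation step: rigorously justifying that $q$ is a genuine $C^{1}$ function $Q$ of $u$ alone and that the pushforward identity may be differentiated in $z$. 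This requires the $C^{1}$-diffeomorphism hypothesis on $m$ together with $C^{1}$ dependence of the disintegrated conditional laws $\mu_u,\nu_u$ on $u$, and careful handling of the interior and support conditions so that every differentiation takes place at a regular point.
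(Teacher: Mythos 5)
Your proposal is correct and follows essentially the same route as the paper's proof: both define $q$ from two observationally equivalent structures, use Lemma \ref{stringofequalitieslemma} and Assumption \ref{instrumentass} to write $q$ as a function of $u=g^{-1}(x,z)$ alone, differentiate in $z_j$ and invoke the rank condition on $G(x,\bar z)$ to kill $\nabla_u q$, differentiate again in $x_k$, and then extend from the dense set $\mathcal{X}_d$ to all of $\mathcal{X}$ via continuity and convexity before concluding $m=m^*$ by determinability. Your explicit introduction of $Q(u,\cdot)$ is a slightly cleaner bookkeeping of what the paper writes as $q(g^{-1}(x,\bar z),e)$, but the argument is the same; note only that the continuity/convexity step yields constancy of $q$ for \emph{all} $x$, so the final identification is on all of $\mathcal{X}\times\mathcal{E}$, not merely almost every $x$.
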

\begin{proof}
The proof is analogous to the proof in \citet{torgovitsky2015supplement}. The ultimate goal is to show that $q(x,z,e)$ from Figure \ref{isostructure} does not depend on $x$ by differentiation. First, $q(x,z,e)$ is $C^{1}$ since $m^*$ and $m$ by assumption are $C^{1}$-diffeomorphisms. Then analogous to the proof of Theorem \ref{maintheorem2}, we have that $q(x,z,e)=q(x,e)$ since $Z$ is a valid instrument. Now conditioning on the event $[U=h^{-1}(x,z),Z=z]$ is the same as conditioning on the event $[X=x,Z=z]$ by Lemma \ref{stringofequalitieslemma}. Therefore, by construction we have 
\[q(x,e)= q(x,z,e) = q(u,e) = q(h^{-1}(x,z),e)\] as depicted in Figure \ref{isostructure}. Now fix some $x\in\mathcal{X}$ and $e\in\mathcal{E}_{xz}$ so that $y=m(x,e)\in\mathcal{Y}^\circ_{xz}$ and take $\bar{z}$ as in the statement of the theorem. Then $e\in\mathcal{E}^\circ_{x,\bar{z}}$ as $y\in\mathcal{Y}^\circ_{x,\bar{z}}$ so that we can differentiate $q(x,z,e)$ in a neighborhood of $(x,\bar{z},e)$. Now let us differentiate $q$ first with respect to $z_j$. This gives by the chain rule
\[0 = \nabla_{z_j} q(x,e) = G_j(x,\bar{z})\nabla_1 q(g^{-1}(x,\bar{z}),e)' \]
where $G_j(x,\bar{z})$ is the $j$-th row of $G(x,\bar{z})$ and $\nabla_1$ denotes differentiation with respect to the first argument of $q(g(v,\bar{z}),e)'$, so that $\nabla_1 q(g(v,\bar{z}),e)'$ is a $d_x$ column vector. Stacking the vectors for each $j$ gives
\[G(x,\bar{z})\nabla_1q(g^{-1}(x,\bar{z}),e)' = 0_{d_z}.\]
By the assumption that $G(x,\bar{z})$ has full rank, this implies $\nabla_1q(g^{-1}(x,\bar{z}),e)'=0_{d_x}$. Using this information, we can now finalize the proof by differentiating $q(x,e)$ with respect to $x_k$ at $(x,\bar{z},e)$. In fact, since $\nabla_1q(g^{-1}(x,\bar{z}),e)'=0_{d_x}$ this implies by the chain rule that
\[\nabla_{x_k} q(x,e) = \nabla_{x_k} g^{-1}(x_k,\bar{z})\nabla_{1,k}q(g^{-1}(x,\bar{z}),e) = 0\] for every element $k$. Here, $\nabla_{1,k}$ denotes differentiation of the first argument with respect to the $k$-th element of $x$. But this means that $\nabla_{x} q(x,e)=0_{d_x}$ for all $x\in\mathcal{X}$ as $\nabla_x q(x,e)$ is continuous and $\mathcal{X}_d$ is dense in $\mathcal{X}$ as argued in \citet{torgovitsky2015supplement}. By convexity of $\mathcal{X}$ this implies that $q(x,e)=r(e)$ for some function $r$ for every $x\in\mathcal{X}$ and all $e\in\mathcal{E}_x$ by continuity. This shows that $q(x,e)$ does not depend on $x$ and by the reasoning from the main text this means that $m$ is identified.
\end{proof}
Note that identification here is for all elements in the support, not merely almost all elements. 

\section{Proof of Theorem \ref{utilityidentprop}}
The following definition of weak convergence of probability measures is needed, which we have taken from Definition 8.1.1 in \citet{bogachev2007measure2}.
\begin{definition}
A sequence of probability measures $\{P_n\}_{n\in\mathbb{N}}$ on a measurable space $(\mathcal{X},\mathscr{A}_X)$ is weakly convergent to a probability measure $P$ if for every bounded and continuous real function $f$ on $\mathcal{X}$ one has
\[\lim_{n\to\infty}\int_{\mathcal{X}}f(x)P_n(dx) = \int_{\mathcal{X}}f(x)P(dx).\]
We denote this convergence by $P_n \Rightarrow P$.
\end{definition}

Moreover, we need the following strengthening of Corollary 5.23 in \citet{villani2008optimal}.
\begin{lemma}\label{villanistrengthening}
Let $\mathcal{X}$ and $\mathcal{Y}$ be open subsets of $\mathbb{R}^n$ and let $c:\mathcal{X}\times \mathcal{Y} \to\mathbb{R}$ be a continuous cost function with $\inf c>-\infty$. Let $\{P_n^X\}$ and $\{P_n^Y\}$ be sequences of probability measures on $\mathcal{X}$ and $\mathcal{Y}$, respectively, such that $P_n^X\leq C P_X$ for all $n\in\mathbb{N}$ and some $0<C<\infty$. Furthermore, $P_n^X$ converges weakly to $P_X$ and $P_n^Y$ converges weakly to $P_n^Y$. For each $n$ let $\pi_n$ be the optimal transference plan (i.e.~the respective solution of the Kantorovich problem) between $P_n^X$ and $P_n^Y$. Furthermore, assume that 
\[\int c \pi_n<\infty\thickspace\forall n\quad\text{and}\quad\lim_{n\to\infty}\int c\pi_n<\infty.\]
Suppose also that the corresponding Monge problems have a unique solution and that the optimal transport plans $\pi_n$ and $\pi$ are concentrated on the graph of the optimal transport maps $T_n$ and $T$ solving the respective Monge problems. Then 
\[\lim_{n\to\infty} P_n^X(x: |T_n(x)-T(x)|\geq \varepsilon)=0 \quad\text{for all}\thickspace\varepsilon>0.\]
\end{lemma}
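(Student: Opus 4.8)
The plan is to reduce the statement to the weak stability of optimal plans and then upgrade that weak convergence to convergence in measure of the maps. First I would invoke the stability of optimal transference plans (Corollary 5.23, resp.\ Theorem 5.20, in \citet{villani2008optimal}): the hypotheses assumed here—$c$ continuous with $\inf c>-\infty$, the marginal convergences $P_n^X\Rightarrow P_X$ and $P_n^Y\Rightarrow P_Y$, and the uniform cost bounds $\int c\,\pi_n<\infty$ together with $\lim_n\int c\,\pi_n<\infty$—are precisely what is needed to conclude that, along subsequences, $\pi_n\Rightarrow\pi$ for some optimal plan $\pi$ of the limit problem. Because the limiting Monge problem has a unique solution (and its optimal plan is deterministic, hence the Kantorovich optimum is unique as well), the whole sequence converges, $\pi_n\Rightarrow\pi$. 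By assumption $\pi_n$ and $\pi$ are carried by the graphs of $T_n$ and $T$, so $\pi_n$ is the law of $x\mapsto(x,T_n(x))$ under $P_n^X$ and $\pi$ is the law of $x\mapsto(x,T(x))$ under $P_X$.

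The second and more delicate step is to translate $\pi_n\Rightarrow\pi$ into convergence in measure. The obstacle is that the limit map $T$ need not be continuous, so the indicator of the ``bad set'' $\{(x,y):|y-T(x)|\ge\varepsilon\}$ is not an admissible test function against weakly convergent measures. To get around this I would regularize $T$ by Lusin's theorem: given $\delta>0$, choose a compact $K\subseteq\mathcal{X}$ with $P_X(\mathcal{X}\setminus K)<\delta$ on which $T$ is continuous, and extend $T|_K$ by the Tietze extension theorem to a continuous map $\tilde T:\mathcal{X}\to\mathbb{R}^n$ with $\tilde T=T$ on $K$. Then the function $F(x,y)\coloneqq\min(|y-\tilde T(x)|,1)$ is bounded and continuous, so weak convergence gives $\int F\,d\pi_n\to\int F\,d\pi$.

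Now I would estimate both sides of this limit. Since $\pi$ is concentrated on the graph of $T$ and $\tilde T=T$ on $K$, the integrand $\min(|T(x)-\tilde T(x)|,1)$ vanishes on $K$, whence $\int F\,d\pi=\int_{\mathcal X\setminus K}\min(|T-\tilde T|,1)\,dP_X\le P_X(\mathcal X\setminus K)<\delta$. In the other direction, on $K$ we have $\tilde T=T$, so whenever $x\in K$ and $|T_n(x)-T(x)|\ge\varepsilon$ the integrand is at least $\min(\varepsilon,1)$; since $F\ge0$ this yields $\int F\,d\pi_n\ge\min(\varepsilon,1)\,P_n^X\!\big(K\cap\{|T_n-T|\ge\varepsilon\}\big)$. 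Splitting the target set over $K$ and its complement gives
\[
P_n^X(|T_n-T|\ge\varepsilon)\ \le\ \tfrac{1}{\min(\varepsilon,1)}\int F\,d\pi_n\ +\ P_n^X(\mathcal X\setminus K).
\]
Here the uniform domination $P_n^X\le C\,P_X$ is decisive: it forces $P_n^X(\mathcal X\setminus K)\le C\,P_X(\mathcal X\setminus K)<C\delta$ for every $n$ simultaneously, so the Lusin exceptional set is negligible under all the $P_n^X$ at once. Taking $\limsup_n$ and using $\limsup_n\int F\,d\pi_n=\int F\,d\pi\le\delta$ gives $\limsup_n P_n^X(|T_n-T|\ge\varepsilon)\le \delta/\min(\varepsilon,1)+C\delta$; letting $\delta\to0$ yields the claim. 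The main obstacle throughout is exactly the discontinuity of $T$, and the interplay of Lusin/Tietze regularization with the hypothesis $P_n^X\le C P_X$ is what overcomes it.
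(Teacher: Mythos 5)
Your proof is correct, and it is essentially the argument the paper relies on: the paper's own ``proof'' is a pure citation to Proposition 50 of \citet{lindsey2016optimal}, which generalizes Villani's Corollary 5.23 precisely along the lines you follow---stability of optimal plans (Theorem 5.20 of \citet{villani2008optimal}) to get $\pi_n\Rightarrow\pi$, uniqueness of the deterministic limit plan to pass from subsequences to the full sequence, Lusin regularization of the possibly discontinuous limit map $T$, and the domination $P_n^X\leq C\,P_X$ to control the Lusin exceptional set uniformly in $n$ (which is exactly the point where the fixed-first-marginal hypothesis of Corollary 5.23 gets relaxed). The only cosmetic difference is the final device: the standard proof applies the portmanteau inequality $\limsup_n\pi_n[A]\leq\pi[A]$ to the closed set $A=\{(x,y):x\in K,\ |y-T(x)|\geq\varepsilon\}$, whereas you manufacture an explicit bounded continuous test function $\min(|y-\tilde T(x)|,1)$ via a Tietze extension of $T|_K$; both serve the same purpose, and your version has the merit of being fully self-contained where the paper defers entirely to the external reference.
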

The proof is verbatim the proof of Proposition 50 in \citet{lindsey2016optimal}. The idea is to require stronger assumptions on the cost function $c$ to generalize Corollary 5.23 in \citet{villani2008optimal}, which only requires $c$ to be lower semi-continuous, but can only guarantee the result for a fixed $P_X$ and a convergent sequence $P_n^Y$. Since in our case both measures $P_{Y|X=x,Z=z_i}$ and $P_{\varepsilon|X=x,Z=z_i}$ change with $x$, we need this relaxation which allows both sets of measures to drift. 

We are now in the position to prove the important lemma which shows that $m^{-1}(x,y)$ is continuous in measure which we need for our application of Theorem \ref{maintheorem2}.

\begin{lemma}\label{continuitylemma}
Under Assumption \ref{regularityass}, the determinable measure preserving map transporting $P_{Y|X=x,Z=z_i}$ onto $P_{\varepsilon|X=x,Z=z_i}$ is continuous in measure.
\end{lemma}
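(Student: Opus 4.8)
The plan is to verify Definition~\ref{continprop} directly by invoking the stability result of Lemma~\ref{villanistrengthening}. Fix a market $z_i$ and write $T_x$ for the determinable optimal transport map sending $P_{Y|X=x,Z=z_i}$ onto $P_{\varepsilon|X=x,Z=z_i}$; this map exists and is unique by parts (i)--(iv) of Assumption~\ref{regularityass}, as already noted in the text. Given an arbitrary sequence $x_n\to x$ in $\mathcal{X}$, the goal is to show $T_{x_n}\to T_x$ in measure. I would set $P^X_n\coloneqq P_{Y|X=x_n,Z=z_i}$, $P^Y_n\coloneqq P_{\varepsilon|X=x_n,Z=z_i}$, $P_X\coloneqq P_{Y|X=x,Z=z_i}$, and $P_Y\coloneqq P_{\varepsilon|X=x,Z=z_i}$, so that $T_{x_n}=T_n$ and $T_x=T$ in the notation of Lemma~\ref{villanistrengthening}, and then check its hypotheses one at a time.

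First I would establish weak convergence of the marginals, $P^X_n\Rightarrow P_X$ and $P^Y_n\Rightarrow P_Y$. Because we work in Euclidean space with absolutely continuous distributions, the conditional laws admit regular disintegrations coinciding with the abstract conditional expectations \citep[Theorem~1]{chang1997conditioning}; the content here is that these disintegrations can be chosen \emph{continuous} in the conditioning variable $x$ (a continuous disintegration, cf.\ \citealp[Chapter~10]{bogachev2007measure2}), which is precisely what the statement of Theorem~\ref{utilityidentprop} presupposes. Continuity of the disintegration densities in $x$ yields the weak convergence along $x_n\to x$, and a local boundedness argument for these densities on a neighbourhood of $x$ supplies the domination hypothesis $P^X_n\leq C\,P_X$ for a finite $C$ required by Lemma~\ref{villanistrengthening}. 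Next I would treat the cost: for fixed $x$ the surplus driving the matching between $P_{Y|X=x,Z=z_i}$ and $P_{\varepsilon|X=x,Z=z_i}$ is $\xi(x,\varepsilon,y)$, which by Assumption~\ref{regularityass}(ii) is continuous, while (iii), $\det(\nabla^2_{y\varepsilon}\xi)\neq0$, together with the twist condition (iv) guarantees that each Kantorovich optimum is concentrated on the graph of a unique Monge map, so the uniqueness-and-graph hypothesis of Lemma~\ref{villanistrengthening} holds for every $x_n$ and for $x$. Boundedness of $U$ and $C$ in Assumption~\ref{equilibriumconcept}, hence of $\xi$ on the supports in play, gives $\inf c>-\infty$ together with the integrability requirements $\int c\,d\pi_n<\infty$ for all $n$ and $\lim_n\int c\,d\pi_n<\infty$.

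With all hypotheses in place, Lemma~\ref{villanistrengthening} delivers $\lim_{n\to\infty}P^X_n\big(y:\lvert T_n(y)-T(y)\rvert\geq\delta\big)=0$ for every $\delta>0$, which is exactly convergence in measure of $T_{x_n}$ to $T_x$ with respect to the (drifting) source laws; combined with the domination $P^X_n\leq C\,P_X$ this also gives convergence in measure relative to the limiting law $P_X=P_{Y|X=x,Z=z_i}$. Since $x_n\to x$ was arbitrary, the map $m^{-1}(x,\cdot)$ is continuous in measure in the sense of Definition~\ref{continprop}, which is the claim.

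The step I expect to be the main obstacle is that the cost function $\xi(x,\cdot,\cdot)$ itself \emph{drifts} with $x$, whereas Lemma~\ref{villanistrengthening} (like Corollary~5.23 in \citealp{villani2008optimal}) is stated for a single \emph{fixed} cost. To bridge this I would exploit joint continuity of $\xi$, so that $\xi(x_n,\cdot,\cdot)\to\xi(x,\cdot,\cdot)$ uniformly on the compact subsets carrying the relevant mass, and re-run the lower-semicontinuity and $\Gamma$-convergence argument underlying Proposition~50 of \citet{lindsey2016optimal} with this converging sequence of costs rather than a fixed one: the only change is that the liminf inequality for the total transport cost is taken along the joint convergence of the plans $\pi_n$ and the costs $\xi(x_n,\cdot,\cdot)$, which still forces optimality of the limiting plan and hence, by uniqueness of the Monge solution, the asserted convergence in measure.
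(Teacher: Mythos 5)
Your proposal is correct and follows essentially the same route as the paper's own proof: continuity of the disintegration $x\mapsto P_{Y|X=x,Z=z_i}$ (the paper invokes Tjur's property via \citet[Lemma 2.7]{ackerman2016computability}, you invoke \citet[Theorem 1]{chang1997conditioning}) gives weak convergence of both marginal sequences along $x_n\to x$, and Lemma \ref{villanistrengthening} then yields convergence in measure of the optimal transport maps, i.e.~continuity in measure in the sense of Definition \ref{continprop}. If anything, your write-up is more careful than the paper's: the domination hypothesis $P^X_n\leq C\,P_X$ and the fact that the cost $\xi(x_n,\cdot,\cdot)$ drifts with $x_n$ --- an issue Lemma \ref{villanistrengthening}, stated for a single fixed cost, does not literally cover --- are both passed over silently in the paper's proof, whereas you identify them explicitly and sketch reasonable repairs (local boundedness of the disintegration densities, and re-running the stability argument with a uniformly convergent sequence of costs).
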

\begin{proof}
Recall that $F_{X|Z=z}$ and $F_{X|Z=z'}$ are absolutely continuous. Absolutely continuous distributions satisfy Tjur's property at every point of continuity, which is equivalent to the existence of a unique continuous disintegration $P_{Y,X|Z=z}\mapsto P_{Y|X=x,Z=z}$ and analogous for $z'$ in the sense that for each sequence $x_n\to x$, the corresponding probability measures satisfy $P_{Y|X=x_n,Z=z}\Rightarrow P_{Y|X=x,Z=z}$, see \citet[Lemma 2.7]{ackerman2016computability}. Moreover, Assumption \ref{regularityass} guarantees that $m^{-1}(x,y)$ is a unique and bounded measure preserving map for the Monge-Kantorovich problem under the general cost function $\xi(x,\varepsilon,y)$, which is by definition differentiable and hence continuous. We can hence apply Lemma \ref{villanistrengthening}, which guarantees that for every sequence $\{x_n\}_{n\in\mathbb{N}}\in\mathcal{X}_{z}$ converging to some $x\in\mathcal{X}_{z}$, the corresponding optimal transport map $m^{-1}(x_n,y)$ converges to $m^{-1}(x,z)$ in probability. 
\end{proof}

We can now prove the theorem.

\begin{proof}[Proof of Theorem \ref{utilityidentprop}]
Assumption \ref{regularityass} in conjunction with Lemma \ref{continuitylemma} guarantees that $m^{-1}(x,y)$ is continuous in measure and that all assumptions are satisfied in order to apply Theorem \ref{maintheorem2}. Therefore, $m^{-1}(x,y)$ is identified for almost every $x$ and $y$. To show identification of $\bar{U}(x,y)$ we follow the reasoning in \citet{chernozhukov2014single}. Notice that Assumptions \ref{gammaconvex} and \ref{regularityass} guarantee that $V(x,y)$ is differentiable. The proof for this is verbatim Step 2 in the proof of Theorem 3(1) in \citet{chernozhukov2014single}. Since $V(x,y)$ is differentiable and since the inverse demand function $m^{-1}(x,y)$ is uniquely determined for almost every $x$ and $y$, the first order condition 
\[\nabla_y\xi(x,m^{-1}(x,y),y) = \nabla p(y) - \nabla_y\bar{U}(x,y)\] identifies $\nabla_y\bar{U}(x,y)$ for almost every $x$ and almost every $y$ so that $\bar{U}(x,y)$ is almost everywhere identified up to an additive constant, analogous to the reasoning in \citet{chernozhukov2014single}.
\end{proof}

\end{document}